\newtheorem{theorem}{Theorem}
\newtheorem{corollary}[theorem]{Corollary}
\newtheorem{lemma}[theorem]{Lemma}
\newtheorem{proposition}[theorem]{Proposition}
\newtheorem{example}[theorem]{Example}
\newtheorem{remark}[theorem]{Remark}
\newtheorem{conjecture}[theorem]{Conjecture}
\def\@email#1#2{%
 \endgroup
 \patchcmd{\titleblock@produce}
 {\frontmatter@RRAPformat}
 {\frontmatter@RRAPformat{\produce@RRAP{*#1\href{mailto:#2}{#2}}}\frontmatter@RRAPformat}
 {}{}
}%
\begin{document}


\title{Which Bath-Hamiltonians Matter for Thermal Operations?}
\author{Frederik vom Ende}
 \email{frederik.vom-ende@tum.de}
\affiliation{ 
Department of Chemistry, Technische Universit{\"a}t M{\"u}nchen, Lichtenbergstra{\ss}e 4, 85737 Garching, Germany
}%
\affiliation{Munich Centre for Quantum Science and Technology (MCQST) \& Munich Quantum Valley (MQV), Schellingstra{\ss}e 4, 80799 M{\"u}nchen, Germany}


\date{\today}

\begin{abstract}
In this article we explore the set of thermal operations from a mathematical and topological point of view.
First we introduce the concept of Hamiltonians with resonant spectrum with respect to some reference Hamiltonian, followed by proving that when defining thermal operations it suffices to only consider bath Hamiltonians which satisfy this resonance property.
Next we investigate continuity of the set of thermal operations in certain parameters, such as energies of the system and temperature of the bath. 
We will see that the set of thermal operations changes discontinuously with respect to the Hausdorff metric at any Hamiltonian which has so-called degenerate Bohr spectrum, regardless of the temperature.
Finally we find a semigroup representation of the (enhanced) thermal operations in two dimensions by characterizing any such operation via three real parameters, thus allowing for a visualization of this set.
Using this, in the qubit case we show commutativity of the (enhanced) thermal operations as well as convexity of the thermal operations without the closure. The latter is done by specifying the elements of this set exactly.
\end{abstract}

\pacs{
02.40.Pc, 
03.65.Aa, 
03.67.-a 
05.70.-a 
}

\keywords{
thermal operations; quantum thermodynamics; semigroup representation; Bohr spectrum
}

\maketitle

\section{Introduction}
Over the last decade, sparked by Brand\~ao et al.~\cite{Brandao15}, Horodecki \& Oppenheim 
\cite{Horodecki13}, as well as Renes \cite{Renes14}---and further pursued by others
\cite{Faist17,Gour15,Lostaglio18,Sagawa19,Mazurek19,Alhambra19}---thermo-majorization and in particular its
 resource theory approach 
 has been a widely discussed and researched topic in quantum physics. Here the 
 central question is:
 Given a fixed background temperature as well as initial and target states 
of a quantum system,
 can the former be mapped to the 
 latter by means of a thermal operation? These channels are the fundamental building block of the resource theory approach to quantum thermodynamics as they, roughly speaking, are the operations which are assumed to be performable in arbitrary number without any cost; for a precise definition, cf.~Section \ref{sec_therm_op_def}.
 Thus, arguably, studying and understanding the thermal operations, their structure, and their properties is of crucial importance.

The concept of thermal operations is an attempt to formalize which operations can 
be carried out at no cost (with respect to some resource, e.g., work). Recall that in 
macroscopic systems a state transformation is thermodynamically possible if and only if the 
free energy decreases. In the quantum realm---using the currently accepted definition 
of thermal operations---this is at least necessary:
the non-equilibrium system free energy\cite{Esposito11}
$F=\operatorname{tr}(H (\cdot))-k_BTS$ 
cannot increase under any thermal operation. 
Here $H$ is the system's Hamiltonian, $S$ is the von Neumann
entropy, $T$ the temperature of the environment, and $k_B$ the Boltzmann
constant.
This property of not increasing actually holds for both the free energy of the classical (diagonal) part as well as the so-called asymmetry (relative entropy of the coherences); together these add up to the free energy \cite{Lostaglio15_2}.
However, the decrease of the free energy is not sufficient to guarantee state conversion via thermal operations (Example 6 in \cite{Lostaglio19}).
This changes once one relaxes the set of operations to those which leave not the energy, but the \textit{average} energy of system plus bath invariant as these are precisely the channels which decrease the free energy\cite{Skrzypczyk14}.
For the interconversion of classical states, considering not only the free energy but a collection of generalized free energies leads to a characterization of this problem when allowing for \textit{catalytic} thermal operations (Thm.~18 in \cite{Brandao15}).
For a comprehensive introduction to this topic we refer to the review article by Lostaglio \cite{Lostaglio19}.

These conditions imposed by the generalized free energies have been called ``the second laws of quantum thermodynamics'' in the past. 
On a related note there is also a third law of quantum thermodynamics, at least for qubits. Scharlau et al.~gave a lower bound on the population of the lowest energy level when applying any thermal operation (Thm.~9 in \cite{Scharlau18}). In particular this implies that no non-ground state can be mapped exactly to the ground state by means of thermal operations with finite heat baths.
This is a refinement of the related result that no state with trivial kernel (i.e.~$0$ is not an eigenvalue of the state) can be mapped to the ground state -- or any pure state for that matter -- by means of a Gibbs-preserving channel (Coro.~4.7 in \cite{vomEnde20Dmaj}).

The problem of characterizing state 
conversions as mentioned in the beginning is fully solved in the classical regime. This has to do with the observation
made early on that thermal operations and general Gibbs-preserving quantum maps 
are (approximately) indistinguishable on quasi-classical states.
Indeed, given a system described by $\operatorname{diag}(E_1,\ldots,E_n)$ with 
background temperature $T\in(0,\infty]$, transforming $\operatorname{diag}(y)$ into 
$\operatorname{diag}(x)$ via thermal operations is possible if and only if
$\|x-\frac{y_i}{d_i}d\|_1\leq\|y-\frac{y_i}{d_i}d\|_1$ holds for all $i=1,\ldots,n $ where 
$d:=(e^{-E_j/T})_{j=1}^n$ is the vector of Gibbs weights\cite{vomEnde22}. 
Equivalently, the so-called ``thermo-majorization curve'' (a piecewise linear 
bijection on the interval $[0,1]$) corresponding to $y$ must not lie below the 
curve corresponding to $x$ anywhere\cite{Horodecki13}.
This reduces the classical state conversion problem to a finite list of conditions, that 
is, $n$ simple $1$-norm inequalities or, using thermo-majorization curves, to $n-1$ 
inequalities each involving a minimum over a set of $n$ elements (Thm.~4 in \cite
{Alhambra16}).
For more detail as well as further characterizations we refer to Prop.~1 in \cite{vomEnde22}.

Be aware that it was also noticed early on that the thermal operations form a strict subset of the Gibbs-preserving maps as soon as coherences come into play \cite{Faist17}. 
This is one of the reasons why the state conversion problem becomes much more complicated in the quantum case: While there exists a characterization via infinitely many inequalities involving the conditional min-entropy\cite{Gour18} a simple characterization -- like in the classical case -- beyond qubits is still amiss, refer also to Section 4.2 in \cite{vomEnde20Dmaj}.
There have been different ways to deal with this problem in the past: While some authors constrained the set of thermal operations to simpler subsets, e.g., such which are experimentally implementable using current technology \cite{Perry18,Lostaglio18}, 
others\cite{Scharlau18,Ding19,Ding21} focused on learning more about the role of the bath Hamiltonian in the action of thermal operations.
In this article we follow the second line of thought.

This work is organized as follows. In Section \ref{sec_therm_op_def} we introduce the concept of bath Hamiltonians having ``resonant spectrum'' with respect to a given system, and we show that these are everything one needs to generate (approximate) all thermal operations (Proposition \ref{prop_1}).
As a special case we recover and refine a result about the structure of thermal operations if the system in question is a spin system, that is, if the Hamiltonian has equidistant eigenvalues (Corollary \ref{prop_to_spin} \& \ref{coro_spin_rational}).
These corollaries suggest that the set of thermal operations may in some sense change discontinuously at certain Hamiltonians; this we investigate in Section \ref{sec_continuity}. There we look at two particular systems where this discontinuity manifests (Example \ref{ex_discont}). These examples can be generalized to arbitrary dimensions and Hamiltonians with certain properties, thus revealing a structural problem rather than being singled-out counter-examples.
Finally in Section \ref{sec_ento} we visualize the set of qubit thermal operations as a three-dimensional shape (Figure \ref{fig1}). Using this as well as our results regarding baths with resonant spectrum we give a full answer to what elements the qubit thermal operations consist of, and what role degenerate bath Hamiltonians play (Theorem \ref{thm_equiv_qubit}).

\section{Thermal Operations: The Basics}\label{sec_therm_op_def}
We start by reviewing how thermal operations are defined and what basic properties they have.
Consider an $n$-level system described by some $H_S\in\mathbb C^{n\times n}$ Hermitian (``system's Hamiltonian'') as well as some $T>0$ (``fixed background temperature''). Given any $m\in\mathbb N$ we define
\begin{align*}
\Phi_{T,m}:{i}\mathfrak{u}(m)\times \mathsf{U}(mn)&\to \textsc{cptp}(n)\\
(H_B,U)&\mapsto \operatorname{tr}_B\Big(U\Big((\cdot)\otimes \frac{e^{-H_B/T}}{\operatorname{tr}(e^{-H_B/T})}\Big)U^*\Big)
\end{align*}
where $\mathsf U(m)$ is the unitary group in $m$ dimensions, $\mathfrak{u}(m)$ is its Lie algebra (so ${i}\mathfrak u(m)$ is the collection of all Hermitian $m\times m$ matrices), and $\textsc{cptp}(n)$ is the set of all completely positive, trace preserving, linear maps on $\mathbb C^{n\times n}$. 
Thus $\Phi_{T,m}(H_B,U)$ represents first coupling the system described by $H_S$ to an $m$-dimensional bath described by $H_B$ at temperature $T$, then applying the unitary channel $\operatorname{Ad}_U=U(\cdot)U^*$ to the full system, and finally discarding the bath.
Using this notation,
following Lostaglio\cite{Lostaglio19} we
define
the thermal operations with respect to $H_S,T$ as
\begin{align}
\mathsf{TO}(H_S,T):=&\bigcup_{m\in\mathbb N}\Big\{ \Phi_{T,m}(H_B,U) :\ \substack{H_B\in{i}\mathfrak{u}(m),U\in\mathsf{U}(mn)\\U(H_S\otimes\mathbbm{1}_B+\mathbbm{1}\otimes H_B)U^*=H_S\otimes\mathbbm{1}_B+\mathbbm{1}\otimes H_B }\Big\}\label{eq_def_TO_1}\\
=& \bigcup_{m\in\mathbb N}\Big\{ \Phi_{T,m}(H_B,e^{iH_\mathsf{tot}}) :\ \substack{
H_B\in{i}\mathfrak{u}(m), H_\mathsf{tot}\in{i}\mathfrak u(mn)\\
[H_\mathsf{tot},H_S\otimes\mathbbm{1}_B+\mathbbm{1}\otimes H_B]=0
}\Big\} \label{eq_def_TO_2}\\
=& \bigcup_{m\in\mathbb N}\Big\{ \Phi_{T,m}(H_B,e^{iH_\mathsf{tot}}):\ \substack{
H_B=\operatorname{diag}(E_1,\ldots,E_m)\text{ with }E_1\leq\ldots\leq E_m\\
H_\mathsf{tot}\in{i}\mathfrak{u}(mn),
[H_\mathsf{tot},H_S\otimes\mathbbm{1}_B+\mathbbm{1}\otimes H_B]=0
}\Big\} \,.\label{eq_def_TO_3}
\end{align}
Physically, $H_\mathsf{tot}$ includes the system-bath-interaction $H_{SB}$, that is, $H_\text{tot} = H_S \otimes  1_B + 1_S \otimes  H_B + H_{SB}$ where the commutator condition then reduces to $[H_{SB}, H_S \otimes  1_B + 1_S \otimes  H_B] = 0$, cf.~also \cite{Kosloff21}.

Now to see that the sets \eqref{eq_def_TO_1} and \eqref{eq_def_TO_2} are equal note that the subgroup of $\mathsf U(n)$ which stabilizes $H_S\otimes\mathbbm{1}_B+\mathbbm{1}\otimes H_B$ is compact and connected (because $\mathsf U(n)$ is compact and the stabilized element is Hermitian). Therefore $\mathsf{exp}$ maps onto this subgroup and we can replace the stabilizing condition on $U$ by the equivalent condition on the level of generators on $H_\mathsf{tot}$.

For equality of \eqref{eq_def_TO_2} and \eqref{eq_def_TO_3} in the definition of $\mathsf{TO}(H_S,T)$ (henceforth $\mathsf{TO}$ for short), i.e.~the fact there is some unitary degree of freedom on the ancilla despite energy-conservation, note
$
\Phi_{T,m}(H_B,U)=\Phi_{T,m}(\operatorname{Ad}_V(H_B),\operatorname{Ad}_{\mathbbm1\otimes V}(U))
$
for all $H_B\in{i}\mathfrak u(m)$, $U\in\mathsf U(mn)$, $V\in\mathsf U(m)$; this follows from the partial trace identity $\operatorname{tr}_2((A\otimes B)C(D\otimes B^{-1}))=A\operatorname{tr}_2(C)D$.
Moreover $\operatorname{Ad}_{\mathbbm1\otimes V}(U)$ is energy-conserving with respect to $(H_S,\operatorname{Ad}_V(H_B))$ so in particular we can choose $V$ such that it diagonalizes $H_B$. What this implies is that the only relevant information coming from the bath is the spectrum of the associated Hamiltonian together with its degeneracies.

\begin{remark}[Thermal Operations in the High Temperature Limit]
Observing that the Gibbs state of any finite-dimensional system becomes the maximally mixed state in the limit $T\to\infty$ one can extend the definition of thermal operations to $T=\infty$ via
\begin{align*}
\mathsf{TO}(H_S,\infty):=&\bigcup_{m\in\mathbb N}\Big\{ \Phi_{1,m}(\mathbbm1_m,U):\ \substack{H_B\in{i}\mathfrak{u}(m),U\in\mathsf{U}(mn)\\U(H_S\otimes\mathbbm{1}_B+\mathbbm{1}\otimes H_B)U^*=H_S\otimes\mathbbm{1}_B+\mathbbm{1}\otimes H_B }\Big\} \\
=& \bigcup_{m\in\mathbb N}\Big\{ \Phi_{1,m}(\mathbbm1_m,e^{iH_\mathsf{tot}}):\ \substack{
H_B=\operatorname{diag}(E_1,\ldots,E_m)\text{ with }E_1\leq\ldots\leq E_m\\
H_\mathsf{tot}\in{i}\mathfrak{u}(mn),
[H_\mathsf{tot},H_S\otimes\mathbbm{1}_B+\mathbbm{1}\otimes H_B]=0
}\Big\} \,.
\end{align*}
While it might seem that the bath Hamiltonian is redundant as it does not appear in 
the argument of $\Phi_{1,m}$, waiving it from the energy conservation condition (i.e.~setting $H_B=0$)
would reduce
the set to only dephasing thermalizations (sometimes called 
``Hadamard channels'' because they are of Hadamard product form $\rho\mapsto P*\rho:=(p_{ij}\rho_{ij})_{i,j=1}^n$ for 
some positive semi-definite $P\in\mathbb C^{n\times n}$ which has only ones on the 
diagonal, cf.~Ch.~1.2 in \cite{Bhatia07}).
In particular this would disallow any non-trivial action on diagonal matrices which is, however, certainly possible within $\mathsf{TO}$.
\end{remark}
%
%
Having explained how $\mathsf{TO}$ is defined for infinite temperatures let us illustrate how for all $0<T\leq\infty$ the set of thermal operations changes under some elementary transformations of the system's Hamiltonian:
\begin{lemma}\label{lemma_3}
Given $H_S\in{i}\mathfrak{u}(n)$, $T\in(0,\infty]$, and $U\in\mathsf U(n)$ the following statements hold:
\begin{itemize}
\item[(i)] $\mathsf{TO}(\lambda H_S+\mu\mathbbm1,\lambda T)=\mathsf{TO}(H_S,T)$ for all $\lambda>0,\mu\in\mathbb R$.
\item[(ii)] $ \mathsf{TO}(\operatorname{Ad}_U(H_S),T)=\operatorname{Ad}_U\circ\; \mathsf{TO}(H_S,T)\circ \operatorname{Ad}_{U^*}$
\end{itemize}
These statements continue to hold when replacing $\mathsf{TO}$ by its closure.
\end{lemma}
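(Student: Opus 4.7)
The plan is to prove both parts by direct reparametrization at the level of the pairs $(H_B, U)$ defining elements of $\Phi_{T,m}$, and then to transfer the equality to closures via continuity.

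For (i), given an admissible pair $(H_B, U)$ for $\mathsf{TO}(H_S, T)$ I would propose the pair $(\lambda H_B, U)$ for $\mathsf{TO}(\lambda H_S + \mu\mathbbm{1}, \lambda T)$. The identity
\[
(\lambda H_S + \mu\mathbbm{1}) \otimes \mathbbm{1}_B + \mathbbm{1}\otimes (\lambda H_B) = \lambda\bigl(H_S\otimes \mathbbm{1}_B + \mathbbm{1}\otimes H_B\bigr) + \mu\mathbbm{1}_{mn}
\]
then shows that $U$ is energy-conserving for the new pair if and only if it is for the old one, because $\lambda>0$ and $\mu\mathbbm{1}_{mn}$ is central. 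Simultaneously, the Gibbs weight $e^{-\lambda H_B/(\lambda T)} = e^{-H_B/T}$ is left untouched by the joint rescaling of $H_B$ and $T$, so $\Phi_{\lambda T, m}(\lambda H_B, U) = \Phi_{T,m}(H_B, U)$ as channels. The inverse assignment $H_B' \mapsto H_B'/\lambda$ supplies the converse inclusion. In the case $T=\infty$ the Gibbs state argument collapses since it is the maximally mixed state regardless, while the energy-conservation argument is unaffected.

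For (ii), given $(H_B, V)$ admissible for $(H_S, T)$, I would define $V' := (U\otimes\mathbbm{1}_B)\, V\, (U^*\otimes\mathbbm{1}_B)$ and leave $H_B$ alone. A direct substitution using the energy-conservation of $V$ shows $V'$ is energy-conserving for $(\operatorname{Ad}_U H_S, H_B)$; the partial trace identity $\operatorname{tr}_B((A\otimes\mathbbm{1})X(B\otimes\mathbbm{1})) = A\operatorname{tr}_B(X)B$ already invoked in the excerpt, now applied with $A=U$, $B=U^*$, yields
\[
\Phi_{T,m}(H_B, V') = \operatorname{Ad}_U \circ \Phi_{T,m}(H_B, V)\circ \operatorname{Ad}_{U^*}.
\]
This gives one inclusion; the other follows by running the same argument with $U$ replaced by $U^*$, using $\operatorname{Ad}_{U^*}(\operatorname{Ad}_U H_S) = H_S$.

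For the closure statement, (i) requires no extra work since the sets themselves coincide. For (ii) I would observe that $\Psi\mapsto \operatorname{Ad}_U\circ \Psi\circ\operatorname{Ad}_{U^*}$ is a linear homeomorphism on $\textsc{cptp}(n)$ (its inverse being $\operatorname{Ad}_{U^*}\circ(\cdot)\circ\operatorname{Ad}_U$), and therefore commutes with closure. I do not anticipate any genuine obstacle: the whole argument is bookkeeping, and the only points requiring mild attention are that $\mu\mathbbm{1}_{mn}$ is invisible to commutation and that $\lambda>0$ is needed both so that $\lambda T$ is a valid temperature and so that the commutator condition transfers without a sign issue.
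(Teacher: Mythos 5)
Your proposal is correct, and it supplies exactly the ``straightforward'' bookkeeping that the paper omits (the paper states the lemma without proof). For (i), the reparametrization $H_B\mapsto\lambda H_B$ together with the observation that $\mu\mathbbm{1}_{mn}$ commutes with everything and $\lambda>0$ preserves the commutant condition is the natural argument, and the Gibbs state is visibly invariant under the joint rescaling; for (ii), conjugating the global unitary by $U\otimes\mathbbm{1}_B$ and invoking the partial trace identity already quoted in the paper gives the intertwining relation directly, and conjugation by a fixed unitary channel is a homeomorphism of $\textsc{cptp}(n)$ so it commutes with taking closures.
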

\noindent This is straightforward to show so we omit the proof.

Now let us have a closer look at the condition $[H_\mathsf{tot},H_S\otimes\mathbbm{1}_B+\mathbbm{1}\otimes H_B]=0$ from
Equation \eqref{eq_def_TO_2}
which encodes energy-conservation:
this imposes block-diagonal structure on $H_\mathsf{tot}$ in some eigenbasis of $H_S\otimes H_B$, and the sizes of the blocks correspond to how degenerate the eigenvalues of $H_S\otimes\mathbbm{1}_B+\mathbbm{1}\otimes H_B$ are.
Letting $\sigma(\cdot)$ henceforth denote the spectrum of any matrix: because $\sigma(H_S\otimes\mathbbm{1}_B+\mathbbm{1}\otimes H_B)=\sigma(H_S)+\sigma(H_B)$ this means that $H_\mathsf{tot}$ acts non-trivially on an energy level $E$ of the full system only if $E$ can be decomposed into a sum of elements from $\sigma(H_S)$ and $\sigma(H_B)$ in more than one way, i.e.~$E=E_i+E_l'=E_j+E_k'$ for some pairwise different $E_i,E_j\in\sigma(H_S)$, $E_k',E_l'\in\sigma(H_B)$.
But this is equivalent to $E_i-E_j=E_k'-E_l'$ which is the necessary condition for the diagonal entries $\rho_{ii},\rho_{jj}$ of the state of the system $\rho$ to mix by means of a thermal operation (cf.~Remark \ref{rem_bath_condition_prob} for details).
Because the spectrum of $\operatorname{ad}_{H_S}:=[H_S,\,\cdot\,]$ is given by $\{E_i-E_j:i,j\}$ this motivates the following definition: Given $m,n\in\mathbb N$, $H_S\in{i}\mathfrak u(n)$, $H_B\in{i}\mathfrak u(m)$
define an undirected graph with vertices being the eigen-energies of $H_B$, and two vertices are connected if the difference of the corresponding energies appears in the spectrum of $\operatorname{ad}_{H_S}$. We say that $H_B$ has \textit{resonant} or \textit{absorbing spectrum with respect to} $H_S$ if this graph is connected\footnote{
Equivalently $H_B$ having resonant spectrum w.r.t~$H_S$ is characterized by the following condition: For all proper (non-empty) subsets $I$ of $\{1,\ldots,m\}$ there exist $i\in I$ and $j\in\{1,\ldots,m\}\setminus I$ such that $E_i'-E_j'\in\sigma(\operatorname{ad}_{H_S})$. This means that the above graph cannot be written as the union of two (or more) disconnected components.
}.
To illustrate this definition we refer to the examples shown in Figure \ref{figa}.
\begin{figure}[!htb]
\centering
\includegraphics[width=0.75\textwidth]{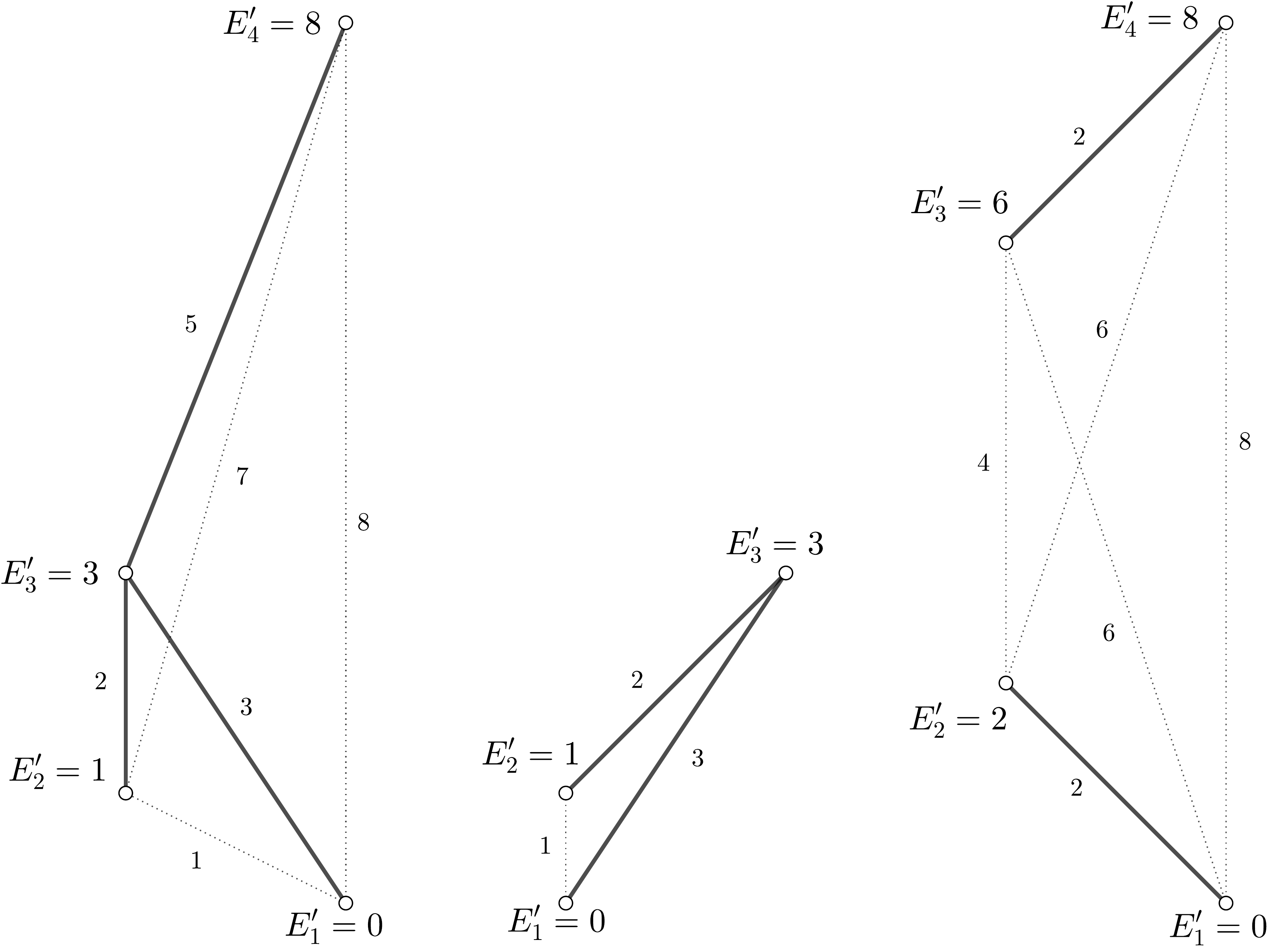}
\caption {Let us investigate whether the following bath Hamiltonians have resonant spectrum with respect to $H_S=\operatorname{diag}(0,2,5)$, that is, $|\sigma(\operatorname{ad}_{H_S})|=\{0,2,3,5\}$.\\
Left: $H_{B,1}=\operatorname{diag}(0,1,3,8)$ has resonant spectrum with respect to $H_S$ because its graph is connected.
Middle: $H_{B,2}=\operatorname{diag}(0,1,3)$ also has resonant spectrum w.r.t.~$H_S$ for the same reason.
Right: $H_{B,3}=\operatorname{diag}(0,2,6,8)$ does not have resonant spectrum 
w.r.t.~$H_S$ as it decomposes into the connected components $\{0,2\}$, $\{6,8\}$. 
These do not ``interact'' with each other because none of the energy differences between them is in $\sigma(\operatorname{ad}_{H_S})$.
}\label{figa}
\end{figure}

\begin{remark}\label{rem_bath_condition_prob}
A word of warning: The definition of a Hamiltonian having resonant spectrum is similar to---but should not be confused with---one of the assumptions on heat baths from the early works on thermal operations\cite{Horodecki13}.
There it was assumed that for any two energies $E_i,E_j$ of the system and any energy $E_k'$ of the bath there exists some energy $E_l'$ of the bath such that $E_i-E_j=E_k'-E_l'$. However no finite size heat bath can satisfy this; violations of this condition appear always at the edge, and in some cases even in the bulk of the energy band. These problems are discussed comprehensively yet in detail in Appendix A in \cite{Shiraishi20}.

This condition is related to a necessary criterion for ``interaction'' between different entries of a quantum state: Given any Hamiltonian $H_S=\sum_{j=1}^nE_j|g_j\rangle\langle g_j|$ which describes a system currently in the state $\rho$ -- represented for now in the eigenbasis of $H_S$, i.e.~$(\rho_{jk})_{j,k=1}^n$ with $\rho_{jk}:=\langle g_j,\rho g_k\rangle$ -- a thermal operation $\Phi_{T,m}(H_B,U)$ can mix $\rho_{ij}$ and $\rho_{kl}$ only if\footnote{
W.l.o.g.~let both $H_S,H_B$ be diagonal in the standard basis.
When expressed mathematically the ``mixing property'' in question reads 
\unexpanded{$\langle e_k,\Phi_{T,m}(H_B,U)(|e_i\rangle\langle e_j|)e_l\rangle\neq 0$}. This readily implies the existence of indices $\alpha,\beta$ such that neither \unexpanded{$\langle e_k\otimes e_\alpha,U(e_i\otimes e_\beta)\rangle$} nor \unexpanded{$\langle e_l\otimes e_\alpha,U(e_j\otimes e_\beta)\rangle$} vanish.
But \unexpanded{$\langle e_k\otimes e_\alpha,[U,H_S\otimes\mathbbm1+\mathbbm1\otimes H_B](e_i\otimes e_\beta)\rangle=(E_i+E_\beta'-E_k-E_\alpha')\langle e_k\otimes e_\alpha,U(e_i\otimes e_\beta)\rangle$} does always vanish due to $U$ being energy-conserving; therefore $E_i-E_k=E_\alpha'-E_\beta'$ (and similarly $E_j-E_l=E_\alpha'-E_\beta'$ for the second term).
}
$E_i-E_k=E_j-E_l\in\sigma(\operatorname{ad}_{H_B})$. Equivalently, it is necessary that the transitions corresponding to $\rho_{ij}$ and $\rho_{kl}$ coincide (that is, $E_i-E_j=E_k-E_l$) and that the difference between these transitions appears as a difference in $H_B$ (i.e.~$E_i-E_k\in\sigma(\operatorname{ad}_{H_B})$).
Be aware that simply scaling entries $\rho_{ij}$ using $\mathsf{TO}$ is independent of either of these notions.
\end{remark}

Now the concept of resonance allows us to restrict the set of bath Hamiltonians necessary for describing the set of thermal operations. This as well as some fundamental topological properties of $\mathsf{TO}$ are summarized in the following:
\begin{proposition}\label{prop_1}
Let $H_S\in{i}\mathfrak u(n)$ and $T\in(0,\infty]$ be given, and let $\overline{(\cdot)}$ henceforth denote the closure. The following statements hold:
\begin{itemize}
\item[(i)] $\mathsf{TO}(H_S,T)$ is a bounded, path-connected semigroup with identity.
\item[(ii)] $\overline{\mathsf{TO}(H_S,T)}$ is a convex, compact semigroup with identity.
\item[(iii)] $\overline{\mathsf{TO}(H_S,T)}$ is a subset of all \textsc{cptp} maps with common fixed point $e^{-H_S/T}$.
\item[(iv)] For describing the closure of all thermal operations it suffices to only consider bath-Hamiltonians with resonant spectrum w.r.t.~$H_S$, that is,
\begin{equation}\label{eq:TO_connected_spectrum}
\overline{\mathsf{TO}(H_S,T)}=\overline{\bigcup_{m\in\mathbb N}\Big\{ \Phi_{T,m}(H_B,e^{iH_\mathsf{tot}}):\ \substack{
H_B=\operatorname{diag}(E_1,\ldots,E_m)\text{ with } E_1\leq\ldots\leq E_m,\\
H_B\text{ has resonant spectrum w.r.t.~}H_S\\
H_\mathsf{tot}\in{i}\mathfrak{u}(mn),
[H_\mathsf{tot},H_S\otimes\mathbbm{1}_B+\mathbbm{1}\otimes H_B]=0
}\Big\}} \,.
\end{equation}
\end{itemize}
\end{proposition}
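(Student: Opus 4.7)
My plan is to dispatch parts (i)--(iii) by standard constructions and then concentrate on (iv), which contains the essential new content. For (i), boundedness is immediate from trace-preservation, the identity channel is obtained with $m=1$, $H_B=0$, $U=\mathbbm1$, path-connectedness is witnessed by $s\mapsto\Phi_{T,m}(H_B,e^{isH_\mathsf{tot}})$ for $s\in[0,1]$ (which preserves the commutator condition and reduces to the identity channel at $s=0$), and the semigroup property comes from composing two thermal operations on the tensor-product bath $H_{B,1}\otimes\mathbbm1+\mathbbm1\otimes H_{B,2}$, whose Gibbs state factorizes. For (ii), compactness is the closure of a bounded set in finite dimensions, the semigroup structure with identity passes through continuity of composition, and convexity follows from the standard \emph{flagging} argument: $\lambda\Phi_1+(1-\lambda)\Phi_2$ is realized by attaching an ancillary two-level bath with Gibbs populations $(\lambda,1-\lambda)$ together with a controlled unitary selecting $U_1$ or $U_2$ conditioned on the ancilla state (boundary cases $\lambda\in\{0,1\}$ recovered in the closure). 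For (iii), $\pi_S\otimes\pi_B=e^{-(H_S\otimes\mathbbm1+\mathbbm1\otimes H_B)/T}/(Z_SZ_B)$ is a function of $H_S\otimes\mathbbm1+\mathbbm1\otimes H_B$ and therefore commutes with $U$ by energy-conservation, so partial-tracing returns $\pi_S$; closedness of the fixed-point condition extends this to $\overline{\mathsf{TO}(H_S,T)}$.

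The inclusion $\supseteq$ in (iv) is immediate. For $\subseteq$, I fix $\Phi=\Phi_{T,m}(H_B,e^{iH_\mathsf{tot}})\in\mathsf{TO}(H_S,T)$ with $H_B$ diagonal and partition the spectrum of $H_B$ into its connected components $I_1,\ldots,I_k$ of the resonance graph, letting $P_j$ denote the spectral projection of $H_B$ onto the eigenspace corresponding to $I_j$. Working in a joint eigenbasis of $H_S$ and $H_B$, the condition $[H_\mathsf{tot},H_S\otimes\mathbbm1+\mathbbm1\otimes H_B]=0$ forces the matrix element $\langle\alpha,i|H_\mathsf{tot}|\beta,j\rangle$ to vanish whenever $\epsilon_\alpha+E_i'\neq\epsilon_\beta+E_j'$. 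For $i\in I_a$ and $j\in I_b$ with $a\neq b$, the difference $E_j'-E_i'$ is not in $\sigma(\operatorname{ad}_{H_S})$ by the very definition of the resonance graph, so no matching $(\epsilon_\alpha,\epsilon_\beta)$ exists and that matrix element vanishes. Hence $H_\mathsf{tot}$ is block-diagonal with respect to $\bigoplus_j\mathbb C^n\otimes\operatorname{im}(P_j)$, its exponential factorizes as $\bigoplus_j e^{iH_{\mathsf{tot},j}}$, and since the Gibbs state $\pi_B=\sum_j(Z_{B,j}/Z_B)\pi_{B,j}$ factorizes along the same decomposition, tracing out the bath yields
\begin{equation*}
\Phi=\sum_{j=1}^k\frac{Z_{B,j}}{Z_B}\,\Phi_{T,|I_j|}(H_{B,j},e^{iH_{\mathsf{tot},j}}),
\end{equation*}
with each sub-bath $H_{B,j}:=H_B|_{\operatorname{im}(P_j)}$ resonant with respect to $H_S$ by construction.

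The remaining task is to realize this convex combination as a limit of thermal operations whose single bath is resonant. My approach is to argue that the right-hand side of \eqref{eq:TO_connected_spectrum} is itself convex by adapting the flagging argument from (ii): given two resonant-bath thermal operations, one attaches an ancillary two-level bath whose energy gap lies in $\sigma(\operatorname{ad}_{H_S})$ so that, when combined with the existing bath through a suitable bridging construction, the composite bath remains resonant; sweeping over admissible gaps and taking closure then delivers arbitrary convex weights. I expect this last step to be the principal technical obstacle: in contrast to $\overline{\mathsf{TO}(H_S,T)}$, convexity of the resonant-bath closure is not automatic, and simultaneously tuning the ancilla Gibbs populations while keeping the composite bath resonant requires care, particularly when $\sigma(\operatorname{ad}_{H_S})$ is small (e.g.\ in low dimensions).
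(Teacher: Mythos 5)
Your treatment of (i) and (iii) matches the paper; your decomposition step in (iv) (blocking $H_\mathsf{tot}$ along the connected components $I_1,\ldots,I_k$ of the resonance graph and writing $\Phi=\sum_j (Z_{B,j}/Z_B)\,\Phi_{T,|I_j|}(H_{B,j},e^{iH_{\mathsf{tot},j}})$) is also correct and is essentially the paper's Step 2, phrased slightly more symmetrically. For (ii), your non-degenerate two-level flag with Gibbs populations $(\lambda,1-\lambda)$ and a controlled, energy-conserving unitary does realize arbitrary convex weights in one shot, which is a legitimate and arguably cleaner route than the paper's, which instead uses a $d$-level ancilla with \emph{zero} Hamiltonian, a rank-$k$ projector $\Pi$ to select the branch (hence rational weight $k/d$), and then a density-of-rationals approximation for irrational $\lambda$.

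The genuine gap is exactly where you flag it: convexity of the right-hand side of \eqref{eq:TO_connected_spectrum}. Your non-degenerate flag cannot be carried over, because the resonance condition ties the ancilla gap $\Delta E_A$ to the finite set $\sigma(\operatorname{ad}_{H_S})$, which in turn pins the Gibbs weight $\lambda=1/(1+e^{-\Delta E_A/T})$ to finitely many values -- there is no continuum to sweep, and it is not clear that closing up a finite set of weights recovers all of $[0,1]$. The paper's fix is precisely the construction you replaced in (ii): use a $d$-level ancilla with $H_A=0$ (hence Gibbs state $\mathbbm1_d/d$) and a rank-$k$ projector $\Pi$, giving the rational weight $k/d$. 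Because the ancilla Hamiltonian is zero, it contributes only extra degeneracy, and degeneracy never disconnects the resonance graph (duplicate energies differ by $0\in\sigma(\operatorname{ad}_{H_S})$). What remains to check is that the \emph{product} of two resonant spectra is resonant, i.e.\ that if $H_{B,1},H_{B,2}$ are each resonant w.r.t.\ $H_S$ then so is $H_{B,1}\otimes\mathbbm1+\mathbbm1\otimes H_{B,2}$; the paper proves this by arranging the sums $(E'_{i_1})_1+(E'_{i_2})_2$ in a rectangular array and, for any proper cut, finding a row or column that the cut separates, then invoking resonance of the corresponding single-bath spectrum. Without this lemma and the switch back to a degenerate flag, your argument does not close, and rational weights plus the outer closure are needed exactly as in (ii).
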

\noindent The only non-trivial statements in this lemma are convexity in (ii) as first 
shown in Appendix C in \cite{Lostaglio15}, and (iv) (respectively Eq.~\eqref{eq:TO_connected_spectrum}). The intuition for the latter is as 
follows: Given some bath-Hamiltonian with non-resonant spectrum 
(w.r.t.~$H_S$) one can partition said spectrum into different components which can 
not interact with each other because of energy-conservation. This implies that the 
full unitary is of similar block structure and that the associated thermal operation 
can be written as a convex combination of thermal operations generated by
bath-Hamiltonians with resonant spectra (i.e.~the connected components). The full 
proof is given in Appendix \ref{app_proof_lemma_1}.

Working with $\overline{\mathsf{TO}}$ instead of $\mathsf{TO}$ is advantageous for two reasons: On the one hand it
is unknown whether $\mathsf{TO}$ itself is convex (we will answer this in the affirmative for qubits later on). Indeed a necessary step in showing that statement (iv) from Proposition \ref{prop_1} holds without the closure, i.e.~the somehow ``intuitive'' result that bath-Hamiltonians with non-resonant spectrum are not needed for describing $\mathsf{TO}$, would be a proof of convexity of $\mathsf{TO}$ which continues to hold when considering the right-hand side of \eqref{eq:TO_connected_spectrum} (without the closure).

On the other hand, more gravely, $\mathsf{TO}$ is not closed. The simplest counter-example corresponds to transforming an energy eigenstate; this is not thermally allowed \cite{Lostaglio15_2} meaning the map
$$
\begin{pmatrix} a_{11}&a_{12}\\a_{21}&a_{22} \end{pmatrix}\mapsto\begin{pmatrix}(1-e^{-1/T})a_{11}+a_{22} &0\\0&e^{-1/T}a_{11} \end{pmatrix}
$$
is not in ${\mathsf{TO}(-\sigma_z,T)}$. Yet, this map can be approximated arbitrarily well by thermal operations so it is an element of $\overline{\mathsf{TO}(-\sigma_z,T)}$, cf.~Section \ref{sec_ento}. One way to fix this is to use baths of infinite size, e.g., single-mode bosonic baths (cf.~Lemma 1 in \cite{Lostaglio18}, as well as \cite{Ding19}).
However, while such baths are able to implement the above operation,
using them to implement full dephasing (even approximately) becomes impossible once the temperature is too low, cf.~Theorem \ref{thm_equiv_qubit} (iv).

Either way the closure guarantees a ``reasonable mathematical structure''. This can also be motivated from an application or engineering point of view: At least for some questions (e.g.,~reachability in control theory) it does not matter whether one can implement an operation exactly or ``only'' with arbitrary precision. However, figuring out which results continue to hold after waiving the closures could reveal more of the structure of the thermal operations (cf.~also Section \ref{sec_concl}). 

An important consequence of Proposition \ref{prop_1} (iv) is that if $H_S$ is a spin-Hamiltonian, i.e.~$H_S$ has equidistant eigenvalues, then one can reduce the set of bath-Hamiltonians used in the definition of $\mathsf{TO}$ to spin-Hamiltonians ``of the same structure'' without changing the set (after taking the closure).
This continues to hold even if $H_S$ only is of spin-form ``up to potential gaps''. The precise statement---a weaker version of which first appeared in Lemma 1 of \cite{Ding21}---
reads as follows:

\begin{corollary}\label{prop_to_spin}
Given $H_S\in{i}\mathfrak{u}(n)$ assume there exist $E_1\in\mathbb R$ and an energy gap $\Delta E>0$ such that $\sigma(H_S)\subseteq\{E_1+j\Delta E:j\in\mathbb N_0\}$. Define $\mathsf{TO}^{\Delta E}_\mathsf{Spin}(H_S,T)$ as the collection of all thermal operations where $H_B$ is any spin-Hamiltonian with the same gap $\Delta E$ as $H_S$, that is,
\begin{align*}
\mathsf{TO}^{\Delta E}_\mathsf{Spin}(H_S,T):=\bigcup_{m\in\mathbb N}\Big\{ \Phi_{T,\sum_{j=1}^m\beta_j}(H_B,e^{iH_\mathsf{tot}}):\ \substack{H_B=\bigoplus_{j=1}^m j\Delta E\,\mathbbm1_{\beta_j}
\text{ with }\beta_1,\ldots,\beta_m\in\mathbb N\\
H_\mathsf{tot}\in{i}\mathfrak{u}(n\sum_j\beta_j),[H_\mathsf{tot},H_S\otimes\mathbbm{1}_B+\mathbbm{1}\otimes H_B]=0 }\Big\}
\end{align*}
for all $T>0$, as well as
\begin{align*}
\mathsf{TO}^{\Delta E}_\mathsf{Spin}(H_S,\infty):=\bigcup_{m\in\mathbb N}\Big\{ \Phi_{1,\sum_{j=1}^m\beta_j}(\mathbbm1,e^{iH_\mathsf{tot}}):\ \substack{H_B=\bigoplus_{j=1}^m j\Delta E\,\mathbbm1_{\beta_j}
\text{ with }\beta_1,\ldots,\beta_m\in\mathbb N\\
H_\mathsf{tot}\in{i}\mathfrak{u}(n\sum_j\beta_j),[H_\mathsf{tot},H_S\otimes\mathbbm{1}_B+\mathbbm{1}\otimes H_B]=0 }\Big\}\,.
\end{align*}
One finds
\begin{equation}\label{eq:equiv_TO_spin}
\begin{split}
\forall_{T>0}\quad \overline{\mathsf{TO}^{\Delta E}_\mathsf{Spin}(H_S,T)}&=\overline{\bigcup_{m\in\mathbb N}\Big\{ \Phi_{T,\sum_{j=1}^m\beta_j}(H_B,e^{iH_\mathsf{tot}}):\ \substack{H_B=\bigoplus_{j=1}^m j\Delta E\,\mathbbm1_{\beta_j}
\text{ with }\beta_1,\ldots,\beta_m\in\mathbb N_0,
\sum_{i=1}^m\beta_i>0\\
H_\mathsf{tot}\in{i}\mathfrak{u}(n\sum_j\beta_j),
[H_\mathsf{tot},H_S\otimes\mathbbm{1}_B+\mathbbm{1}\otimes H_B]=0 }\Big\}}\,,\\
\overline{\mathsf{TO}^{\Delta E}_\mathsf{Spin}(H_S,\infty)}&=\overline{\bigcup_{m\in\mathbb N}\Big\{ \Phi_{1,\sum_{j=1}^m\beta_j}(\mathbbm1,e^{iH_\mathsf{tot}}):\ \substack{H_B=\bigoplus_{j=1}^m j\Delta E\,\mathbbm1_{\beta_j}
\text{ with }\beta_1,\ldots,\beta_m\in\mathbb N_0,
\sum_{i=1}^m\beta_i>0\\
H_\mathsf{tot}\in{i}\mathfrak{u}(n\sum_j\beta_j),
[H_\mathsf{tot},H_S\otimes\mathbbm{1}_B+\mathbbm{1}\otimes H_B]=0 }\Big\}}\,,
\end{split}
\end{equation}
and
$
\overline{\mathsf{TO}(H_S,T)}=\overline{ \mathsf{TO}^{\Delta E}_\mathsf{Spin}(H_S,T) }$ for all $T\in(0,\infty]$.
\end{corollary}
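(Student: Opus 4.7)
The plan is to prove the two claims in sequence: first equation \eqref{eq:equiv_TO_spin}, which expands the admissible bath multiplicities from $\mathbb N$ to $\mathbb N_0$; then the identity $\overline{\mathsf{TO}(H_S,T)}=\overline{\mathsf{TO}^{\Delta E}_\mathsf{Spin}(H_S,T)}$, which follows from the first part together with Proposition~\ref{prop_1}(iv).

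For \eqref{eq:equiv_TO_spin} the inclusion ``$\subseteq$'' is immediate, so the work is to approximate any thermal operation coming from a ``gappy'' spin bath $H_B=\bigoplus_{j=1}^m j\Delta E\,\mathbbm{1}_{\beta_j}$ (with some $\beta_j=0$) by operations from genuine spin baths. Writing $J:=\{j\in\{1,\dots,m\}:\beta_j=0\}$ and fixing an energy-conserving $U$ on $\mathcal H_S\otimes\mathcal H_B$, I augment the bath in two steps. First, tensor with $\mathbbm{1}_k$, producing $H_B\otimes\mathbbm{1}_k=\bigoplus_j j\Delta E\,\mathbbm{1}_{k\beta_j}$; pairing this with $U\otimes\mathbbm{1}_k$ leaves the channel unchanged since $\tau_{H_B\otimes\mathbbm{1}_k}=\tau_B\otimes\frac{1}{k}\mathbbm{1}_k$. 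Then direct-sum one filler level at each $j\in J$, on whose copy of $\mathcal H_S$ the unitary acts as $\mathbbm{1}_{\mathcal H_S}$ (which is trivially energy-conserving since it commutes with $H_S+j\Delta E\mathbbm{1}$). The resulting bath
\[
\tilde H_B^{(k)}\;=\;(H_B\otimes\mathbbm{1}_k)\oplus\bigoplus_{j\in J}j\Delta E\;=\;\bigoplus_{j=1}^m j\Delta E\,\mathbbm{1}_{k\beta_j+\mathbbm{1}[j\in J]}
\]
is a genuine $\Delta E$-spin Hamiltonian, and a short partial-trace-of-direct-sum calculation yields
\[
\Phi_{T,km+|J|}\bigl(\tilde H_B^{(k)},\tilde U^{(k)}\bigr)\;=\;p^{(k)}\,\Phi_{T,m}(H_B,U)\,+\,(1-p^{(k)})\,\mathrm{id}
\]
with $p^{(k)}=kZ_B/\bigl(kZ_B+\sum_{j\in J}e^{-j\Delta E/T}\bigr)\to 1$ as $k\to\infty$. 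The $T=\infty$ case is analogous with $p^{(k)}=km/(km+|J|)$.

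For the second claim only ``$\overline{\mathsf{TO}(H_S,T)}\subseteq\overline{\mathsf{TO}^{\Delta E}_\mathsf{Spin}(H_S,T)}$'' needs work. I invoke Proposition~\ref{prop_1}(iv) to restrict attention to bath Hamiltonians with resonant spectrum w.r.t.\ $H_S$. The assumption $\sigma(H_S)\subseteq\{E_1+j\Delta E:j\in\mathbb N_0\}$ forces $\sigma(\operatorname{ad}_{H_S})\subseteq\Delta E\cdot\mathbb Z$, and connectedness of the resonance graph then implies that any two eigenvalues of $H_B$ are joined by a path whose edge-differences lie in $\Delta E\cdot\mathbb Z$, hence differ by an integer multiple of $\Delta E$. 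A scalar shift of $H_B$ (which affects neither the Gibbs state nor the commutator with $H_S\otimes\mathbbm{1}+\mathbbm{1}\otimes H_B$) brings it into the form $\bigoplus_{j=1}^{m'}j\Delta E\,\mathbbm{1}_{\beta_j}$ with $\beta_j\in\mathbb N_0$ and $\sum_j\beta_j>0$, that is, exactly the right-hand side of \eqref{eq:equiv_TO_spin}; by the first part this lies in $\overline{\mathsf{TO}^{\Delta E}_\mathsf{Spin}(H_S,T)}$. The argument does not distinguish finite from infinite $T$, the only difference being that at $T=\infty$ the Gibbs weights are replaced by uniform weights on each added level.

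The main obstacle is the gap-filling step: direct-summing a single filler at $j\Delta E$ in isolation produces $p\Phi+(1-p)\,\mathrm{id}$ with $1-p=e^{-j\Delta E/T}/Z_{B'}$, a non-negligible weight that cannot be tuned to zero for a fixed (and possibly small) $j$. The remedy is to pre-inflate the bath by $k$ copies, which multiplies $Z_B$ by $k$ without altering the channel and forces the filler weights to $O(1/k)$. Everything else reduces to bookkeeping with partition functions and the block-diagonal form of the energy-conserving unitaries.
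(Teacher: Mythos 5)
Your proposal is correct and follows essentially the same route as the paper's own proof: the "tensor by $\mathbbm{1}_k$, then direct-sum one filler level per $j\in J$ on which the unitary acts trivially" construction is precisely the paper's $H_{B,\alpha}'$ and $H'_{\mathsf{tot},\alpha}$, and the second claim is reduced to Proposition~\ref{prop_1}(iv) in the same way. The only stylistic differences are that you identify the augmented channel cleanly as $p^{(k)}\Phi+(1-p^{(k)})\mathrm{id}$ with $p^{(k)}\to 1$ (the paper instead writes the map in coordinates and takes $\alpha\to\infty$), and you argue the "resonant w.r.t.\ a spin Hamiltonian implies spin form" step directly via path-connectedness of the resonance graph, where the paper uses the contrapositive.
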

While this result---for the most part---is a corollary of Proposition \ref{prop_1} we nevertheless present a proof in Appendix \ref{app_proof_prop_to_spin}.
This immediately yields the following:
\begin{corollary}\label{coro_spin_rational}
Given $T\in(0,\infty]$, if $H_S\in{i}\mathfrak{u}(n)$ has rational Bohr spectrum up to a global constant---i.e.~there exists real $r>0$ such that $\sigma(\operatorname{ad}_{H_S})\in r\mathbb Z$---then
$
\overline{\mathsf{TO}(H_S,T)}=\overline{ \mathsf{TO}^{r}_\mathsf{Spin}(H_S,T) }
$ from \eqref{eq:equiv_TO_spin}.
\end{corollary}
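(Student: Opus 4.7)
The plan is to derive Corollary \ref{coro_spin_rational} as a direct consequence of Corollary \ref{prop_to_spin} once one translates the hypothesis on $\sigma(\operatorname{ad}_{H_S})$ into a hypothesis on $\sigma(H_S)$ itself. Concretely, assume $\sigma(\operatorname{ad}_{H_S})\subseteq r\mathbb{Z}$ for some $r>0$. Recall that $\sigma(\operatorname{ad}_{H_S})=\{E_i-E_j:E_i,E_j\in\sigma(H_S)\}$, so every pairwise difference of eigenvalues of $H_S$ is an integer multiple of $r$. Let $E_1:=\min\sigma(H_S)$. Then for any $E\in\sigma(H_S)$ one has $E-E_1\in r\mathbb{Z}$, and because $E\geq E_1$ this forces $E-E_1\in r\mathbb{N}_0$. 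Hence
\[
\sigma(H_S)\subseteq\{E_1+jr:j\in\mathbb{N}_0\},
\]
which is precisely the hypothesis of Corollary \ref{prop_to_spin} with energy gap $\Delta E=r$.

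Next I would simply invoke Corollary \ref{prop_to_spin}: it gives $\overline{\mathsf{TO}(H_S,T)}=\overline{\mathsf{TO}^{\Delta E}_\mathsf{Spin}(H_S,T)}=\overline{\mathsf{TO}^{r}_\mathsf{Spin}(H_S,T)}$ for all $T\in(0,\infty]$. This is exactly the claim. Since the statement is literally labelled ``immediately yields'' in the text, there is no further work to do.

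The only potential subtlety, and hence the one step to present carefully, is the choice of $E_1$: one must take it to be the smallest eigenvalue in order to ensure the $\mathbb{N}_0$ condition (as opposed to merely $\mathbb{Z}$) required by Corollary \ref{prop_to_spin}. After that, no further obstacle remains — the two cases $T<\infty$ and $T=\infty$ are handled uniformly because Corollary \ref{prop_to_spin} already covers both.
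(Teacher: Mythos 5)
Your proposal is correct and follows essentially the same route as the paper: both translate the hypothesis $\sigma(\operatorname{ad}_{H_S})\subseteq r\mathbb Z$ into $\sigma(H_S)\subseteq\{E_1+jr:j\in\mathbb N_0\}$ by taking $E_1$ to be the smallest eigenvalue, and then invoke Corollary \ref{prop_to_spin} with $\Delta E=r$. The subtlety you flag (choosing $E_1$ minimal so that $(E-E_1)/r\in\mathbb N_0$ rather than just $\mathbb Z$) is exactly the point the paper's one-line computation silently uses.
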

This is a direct application of Corollary \ref{prop_to_spin} because if $H_S$ (with eigenvalues $E_1\leq\ldots\leq E_n$) has rational Bohr spectrum, then
\begin{align*}
\sigma(H_S)=\Big\{ E_1+r\cdot\frac{E_j-E_1}r:j=1,\ldots,n\Big\}
\subseteq \{E_1+rj:j\in\mathbb N_0\}\,.
\end{align*}
There does not seem to be an obvious generalization of the previous two results to arbitrary Hamiltonians. For this consider $H_S=\operatorname{diag}(0,1,\sqrt{2})$ as system and $H_B=\operatorname{diag}(0,\sqrt{2}-1,1)$ as bath Hamiltonian; then $H_B$
does not have rational Bohr spectrum up to any constant, yet $H_B$ has resonant spectrum w.r.t.~$H_S$ so there is no ``obvious'' decomposition as in Proposition \ref{prop_1} / Corollary \ref{prop_to_spin} into baths $H_{B,1},H_{B,2}$ of spin form.

However one may ask whether the (somewhat unphysical) condition of the Bohr spectrum being rational up to a constant can be waived if one only demands approximation instead of equality in Corollary \ref{coro_spin_rational}. This essentially boils down to whether $\mathsf{TO}$ is continuous in the system's Hamiltonian.
%
%
%
\section{Thermal Operations and Continuity -- Or Lack Thereof}\label{sec_continuity}
A natural question from a physics perspective
is how robust the set of thermal operations is
to small changes in temperature or in the energy levels of the system. This 
question already has a partial answer for inhomogeneous reservoirs and diagonal 
states from the perspective of work generation and $\alpha$-free energies 
\cite{Shu19}. 
Others have also studied characterizing approximate thermodynamic state 
transitions via smoothed generalized free energies \cite{vanderMeer17}, as well as 
the general effect of imperfections (such as finite-time and finite-size) on work 
extraction and the second law \cite{Baeumer19,Richens18}.
However it seems that a rigorous study of how the \textit{set} of all thermal 
operations depends on parameters such as the temperature or (the spectrum of) the 
system's Hamiltonian is still amiss.

For this we introduce a notion of distance between sets of quantum maps. 
One way to do this is to use the Hausdorff metric $\delta$ (here w.r.t.~$\|\cdot\|_{1\to 1}$, that is, the usual operator norm if domain and range are equipped with the trace norm $\|\cdot\|_1:=\operatorname{tr}(\sqrt{(\cdot)^*(\cdot)})$) which -- given non-empty, compact sets $A,B\subset\mathcal L(\mathbb C^{n\times n})$ -- is defined to be
\begin{equation}\label{eq:def_delta}
\delta(A,B):=\max\big\{\max_{S_1\in A}\min_{S_2\in B} \|S_1-S_2\|_{1\to 1},\max_{S_2\in B}\min_{S_1\in A} \|S_1-S_2\|_{1\to 1}\big\}\,.
\end{equation}
Here and henceforth, given any vector space $V$ we write $\mathcal L(V)$ for the collection of all linear maps $:V\to V$.
The expression \eqref{eq:def_delta} indeed is a metric on $\mathcal P_c\big(\mathcal L(\mathbb C^{n\times n}))$, the latter denoting the space of all non-empty compact 
subsets of $\mathcal L(\mathbb C^{n\times n})$, cf.~§21.VII in \cite{Kuratowski66}. In 
particular this allows one to define a distance between any non-empty sets 
$A,B\subset\textsc{cptp}(n)$ via $\delta(\overline{A},\overline{B})$. 

Based on this definition we will show that whenever $H_S\in{i}\mathfrak u(n)$ has 
degenerate Bohr spectrum (Supplementary Note 2 in \cite{Cwiklinski15}) -- 
i.e.~$\operatorname{ad}_{H_S}$ has less than $n^2-n+1$ different eigenvalues --
then the map
\begin{equation}\label{eq:TO_n}
\begin{split}
\mathsf{TO}_n:{i}\mathfrak{u}(n)\times(0,\infty]&\to \big(\mathcal P_c\big(\mathcal L(\mathbb C^{n\times n}),\|\cdot\|_{1\to 1}\big),\delta\big)\\
(H,T)&\mapsto\overline{\mathsf{TO}(H,T)}
\end{split}
\end{equation}
is discontinuous in $(H_S,T)$ for all temperatures $T\in(0,\infty]$.
%
Note that $H_S$ has degenerate Bohr spectrum iff either $H_S$ itself is degenerate 
($|\sigma(H_S)|<n$), or -- assuming $\sigma(H_S)=\{E_1,\ldots,E_n\}$ for some 
$E_1<\ldots<E_n$ -- if some of the energy transitions which $H_S$ admits coincide, i.e.~if the map $(j,k)\mapsto E_j-E_k$ with domain
$\{(j,k):1\leq j<k\leq n\}$ is not injective. With this in mind we will present two examples which 
illustrate how the map \eqref{eq:TO_n} can fail to be continuous:

\begin{example}\label{ex_discont}
\item[(i)] First we consider the simplest case of a degenerate system's Hamiltonian, 
that is, $n=2$ and $H_S=0$.
Given arbitrary $T\in(0,\infty]$ we will show that
$\delta(\overline{\mathsf{TO}(0,T)},\overline{ \mathsf{TO}(\operatorname{diag}(0,
\varepsilon),T)})\geq 1$ for all $\varepsilon>0$ which clearly violates continuity 
of $\mathsf{TO}_2$ in $(0,T)$. The reason for this, 
roughly speaking, is that no thermal operation corresponding to a
non-degenerate Hamiltonian can 
mix diagonal and off-diagonal elements. This is prohibited by the known fact 
that it has to commute with $\operatorname{ad}_{\operatorname{diag}(0,\varepsilon)}
$.

While it is easy to see that $\overline{\mathsf{TO}(0,T)}$ equals the set of all unital qubit maps (that is, all \textsc{cptp} maps on $\mathbb C^{2\times 2}$ for which $\mathbbm1$ is a fixed point, cf.~Appendix \ref{app_proof_thm_equiv_qubit} and related footnotes) for our purposes it suffices to define a map $S\in\mathcal L(\mathbb C^{2\times 2})$ via $S:=\Phi_{T,2}(H_S,U)$ where 
$$
U:=\frac{1}{\sqrt2}\begin{pmatrix}
1&0&1&0\\0&1&0&-1\\1&0&-1&0\\0&1&0&1
\end{pmatrix}\in\mathsf U(4)\,.
$$
Indeed $S\in\mathsf{TO}(0,T)$ for all $T\in(0,\infty]$ and one readily verifies that the action of $S$ is given by
$$
\begin{pmatrix}
a_{11}&a_{12}\\a_{21}&a_{22}
\end{pmatrix}\mapsto\frac12\begin{pmatrix}
a_{11}+a_{22}&a_{11}-a_{22}\\a_{11}-a_{22}&a_{11}+a_{22}
\end{pmatrix}\,.
$$
Thus for all $\varepsilon>0$, $T\in(0,\infty]$ we compute
\begin{align*}
\delta(\overline{\mathsf{TO}(0,T)},\overline{ \mathsf{TO}(\operatorname{diag}(0,\varepsilon),T)})&\geq \min_{\tilde S\in \overline{ \mathsf{TO}(\operatorname{diag}(0,\varepsilon),T)}}\sup_{A\in\mathbb C^{2\times 2},\|A\|_1=1} \|S(A)-\tilde S(A)\|_{1}\\
&\geq \min_{\tilde S\in \overline{ \mathsf{TO}(\operatorname{diag}(0,\varepsilon),T)}}\|S(|e_1\rangle\langle e_1|)-\tilde S(|e_1\rangle\langle e_1|)\|_1\\
&=\min_{\lambda\in[0,1]} \Big\| \frac12\begin{pmatrix}1&1\\1&1\end{pmatrix} -\begin{pmatrix}
1-\lambda e^{-\varepsilon/T}&0\\0&\lambda e^{-\varepsilon/T}
\end{pmatrix} \Big\|_1\\
&\geq \min_{\lambda'\in\mathbb R}\Big\|\begin{pmatrix}\lambda'&\frac12\\\frac12&-\lambda'\end{pmatrix}\Big\|_1=\min_{\lambda'\in\mathbb R} \sqrt{4(\lambda')^2+1}=1\,.
\end{align*}
In the third step we used Eq.~\eqref{eq:EnTO_canonical} as well as Thm.~\ref{thm_equiv_qubit} (i).
\item[(ii)] Having dealt with degenerate Hamiltonians let us now look at the other possible case: Hamiltonians which are non-degenerate but have degenerate Bohr spectrum. This can only occur in $3$ or more dimensions so consider $H_S=\operatorname{diag}(0,1,2)$. 
Given arbitrary $T\in(0,\infty]$ we will show that
$\delta(\overline{\mathsf{TO}(H_S,T)},\overline{ \mathsf{TO}(H_S+\varepsilon|e_3\rangle\langle e_3|, T)})\geq \frac23$ for all $\varepsilon>0$ which again violates continuity -- the reason for this being similar to the reason from (i).
Choose $H_B:=H_S$ and define
$$
U:=\begin{pmatrix}
1&0&0&0&0&0&0&0&0\\
0&0&0&1&0&0&0&0&0\\
0&0&0&0&1&0&0&0&0\\
0&1&0&0&0&0&0&0&0\\
0&0&0&0&0&0&1&0&0\\
0&0&0&0&0&0&0&1&0\\
0&0&1&0&0&0&0&0&0\\
0&0&0&0&0&1&0&0&0\\
0&0&0&0&0&0&0&0&1
\end{pmatrix}
$$
which corresponds to the permutation (in cycle notation) $(1)(2\,4)(3\,5\,7)(6\,8)(9)$. Therefore $U$ is unitary and satisfies the stabilizer condition $U(H_S\otimes\mathbbm{1}_B+\mathbbm{1}\otimes H_B)U^*=H_S\otimes\mathbbm{1}_B+\mathbbm{1}\otimes H_B$ because matching diagonal entries of $H_S\otimes\mathbbm{1}_B+\mathbbm{1}\otimes H_B$ precisely correspond to the cycles of $U$. 
With this 
$S:=\Phi_{T,3}(H_S,U)$
acts on any $A\in\mathbb C^{3\times 3}$ as follows:
\begin{align*}
\frac{1}{1+e^{-1/T}+e^{-2/T}}
\begin{pmatrix}
a_{11}+a_{22}(1+e^{-1/T})&a_{23}(1+e^{-1/T})&0\\
a_{32}(1+e^{-1/T})&a_{11}e^{-1/T}+a_{33}(1+e^{-1/T})&0\\
0&0&(a_{11}+a_{22}+a_{33})e^{-2/T}
\end{pmatrix}\,.
\end{align*}
One for all $\varepsilon>0$, $T\in(0,\infty]$ finds (similar to (i))
\begin{align*}
\delta(\overline{\mathsf{TO}(H_S,T)},\overline{ \mathsf{TO}(H_S+\varepsilon|e_3\rangle\langle e_3|, T)})&\geq \min_{\tilde S\in \overline{ \mathsf{TO}(H_S+\varepsilon|e_3\rangle\langle e_3|, T)}}\|S(|e_2\rangle\langle e_3|)-\tilde S(|e_2\rangle\langle e_3|)\|_1\,.
\end{align*}
Now $H_S+\varepsilon|e_3\rangle\langle e_3|$ has non-degenerate Bohr spectrum for all $\varepsilon>0$, meaning the only thing $\overline{ \mathsf{TO}(H_S+\varepsilon|e_3\rangle\langle e_3|, T)}$ can do to off-diagonal entries is scale them by a factor $\gamma\in\mathbb C$, $|\gamma|\leq 1$ -- this follows from the fact that every thermal operation (w.r.t.~$H_S+\varepsilon|e_3\rangle\langle e_3|$ and any $T$) has to commute with $\operatorname{ad}_{H_S+\varepsilon|e_3\rangle\langle e_3|}\ $\cite{Lostaglio15_2}.
With this one obtains the lower bound
\begin{align*}
\delta(\overline{\mathsf{TO}(H_S,T)},\overline{ \mathsf{TO}(H_S+\varepsilon|e_3\rangle\langle e_3|, T)})&\geq \min_{|\gamma|\leq 1}\Big\|\frac{1+e^{-1/T}}{1+e^{-1/T}+e^{-2/T}}|e_1\rangle\langle e_2|-\gamma|e_2\rangle\langle e_3|\Big\|_1\\
&= \min_{|\gamma|\leq 1}\Big(\frac{1+e^{-1/T}}{1+e^{-1/T}+e^{-2/T}}+|\gamma|\Big)\\
&= 1-\frac{e^{-2/T}}{1+e^{-1/T}+e^{-2/T}} \geq1-\sup_{T>0}\frac{1}{e^{2/T}+e^{1/T}+1}=\frac23\,.
\end{align*}
\end{example}
\noindent It is not difficult to generalize these examples to any Hamiltonians with degenerate Bohr spectrum in arbitrary dimensions.


The reason for discontinuity in either example was the condition $[S,\operatorname{ad}_{H_S}]=0$ for all $S\in\overline{\mathsf{TO}}$ which comes solely from $U(H_S\otimes\mathbbm{1}_B+\mathbbm{1}\otimes H_B)U^*=H_S\otimes\mathbbm{1}_B+\mathbbm{1}\otimes H_B $. This suggests two things: first, to restore the (physically reasonable) requirement of continuity one has to somehow relax or alter this condition -- more on this in Section \ref{sec_concl}. Second, as the temperature does not appear here it seems reasonable to conjecture the following:
\begin{conjecture}\label{prop_cont_in_T}
For all $H_S\in{i}\mathfrak u(n)$
the map $T\mapsto\overline{\mathsf{TO}(H_S,T)}$ is continuous
if the domain $(0,\infty]$ is equipped with the
metric $d_{-1}(T,T'):=|\frac1T-\frac1{T'}|$, and
the co-domain is equipped with the Hausdorff metric $\delta$ w.r.t.~$(\mathcal L(\mathbb C^{n\times n}),\|\cdot\|_{1\to 1})$.
\end{conjecture}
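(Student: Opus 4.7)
The plan is to exploit that as $T$ varies (with the $d_{-1}$ metric making $T=\infty$ equivalent to $\beta := 1/T = 0$, where $\rho_B = \mathbbm{1}_B/m$), the only $T$-dependent ingredient of $\Phi_{T,m}(H_B, U)$ is the ancilla Gibbs state $\rho_B(T) = e^{-H_B/T}/\operatorname{tr}(e^{-H_B/T})$, while the parameter space of admissible triples $(m, H_B, U)$ (subject to the energy-conservation constraint) is itself independent of $T$. This suggests a ``fiber-preserving'' continuity argument in which the only thing that moves with $T$ is the Gibbs state in the Stinespring dilation.

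First I would establish the elementary pointwise bound
\begin{equation*}
\|\Phi_{T,m}(H_B,U) - \Phi_{T_0,m}(H_B,U)\|_{1\to 1} \leq \|\rho_B(T) - \rho_B(T_0)\|_1,
\end{equation*}
which follows because $\operatorname{Ad}_U$ is a trace-norm isometry and the partial trace is a trace-norm contraction. Since $\beta \mapsto e^{-\beta H_B}/\operatorname{tr}(e^{-\beta H_B})$ is real-analytic on $[0,\infty)$, this gives pointwise continuity in $T$ for every fixed triple. Next I would apply Proposition \ref{prop_1} (iv) to restrict to bath Hamiltonians with resonant spectrum and Lemma \ref{lemma_3} (i) to normalize the ground-state energy to $0$. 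The resonance condition then forces $\max\sigma(H_B) \leq (m-1)\cdot\max|\sigma(\operatorname{ad}_{H_S})|$, so for each fixed $m$ the parameter space of admissible $(H_B, U)$ is compact and joint continuity of $\Phi_{T,m}$ upgrades to uniform continuity in $T$ on that slice.

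The main obstacle is passing from uniform continuity on each fixed-$m$ slice to a bound that is uniform in $m$, since both $\dim H_B = m$ and the spread of $\sigma(H_B)$ can grow without bound. To address this I would use a truncation argument: for a compact $d_{-1}$-neighborhood $\mathcal N$ of $T_0 \in (0,\infty)$ (treating $T_0 = \infty$ separately, since there $\rho_B = \mathbbm 1_B/m$ is $T$-independent) the inverse temperature $\beta$ stays bounded below, so one can choose an energy cutoff $E_\ast$ such that the Gibbs weight on eigenspaces of $H_B$ with energy above $E_\ast$ is uniformly below any prescribed $\varepsilon$ for all $T \in \mathcal N$ and all (ground-state-normalized) $H_B$. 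The linearity of $\rho \mapsto \operatorname{tr}_B[U(\rho \otimes (\cdot))U^*]$ in the ancilla state then allows one to split $\rho_B(T) = \rho_B^{\le E_\ast}(T) + \rho_B^{>E_\ast}(T)$ and estimate each piece, with the high-energy contribution bounded in $\|\cdot\|_{1\to 1}$ by its weight. The hard part will be verifying that the low-energy part of the dilation can be re-interpreted as (or well-approximated by) a genuine thermal operation with bath of dimension and spread controlled by $E_\ast$ alone---because the energy-conserving unitary $U$ in principle mixes the low- and high-energy bath subspaces through the shared total-energy eigenspaces of $H_S \otimes \mathbbm 1 + \mathbbm 1 \otimes H_B$---so that the fixed-$m$ uniform bound from the previous step applies. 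Once this quantitative refinement of Proposition \ref{prop_1} (iv) is in place, the two directions of semi-continuity combine via a standard $\varepsilon/3$-argument to yield Hausdorff continuity of $T \mapsto \overline{\mathsf{TO}(H_S,T)}$.
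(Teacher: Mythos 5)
The statement you are trying to prove is labelled a \emph{Conjecture} in the paper, and the paper does not prove it: Appendix~\ref{proof_att_conj_cont_T} presents an explicitly \emph{attempted} proof that is acknowledged to fail. Your opening estimate $\|\Phi_{T,m}(H_B,U)-\Phi_{T_0,m}(H_B,U)\|_{1\to 1}\le\|\rho_B(T)-\rho_B(T_0)\|_1$ is exactly Eq.~\eqref{eq:S_Sprime_final_estimate} of that attempt, and Lemma~\ref{lemma_c_max} of the paper shows that the resulting candidate modulus of continuity $c(T,T')=\sup_m\sup_{H_B}\|\rho_B(T)-\rho_B(T')\|_1$ equals $2$ for all $T\ne T'$, so this bound alone carries no information. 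You anticipate this obstruction and respond with the same two ideas the paper floats at the end of Appendix~\ref{proof_att_conj_cont_T} -- restrict via Proposition~\ref{prop_1}~(iv) to resonant bath Hamiltonians (which, with ground-state normalization, does give $\max\sigma(H_B)\le(m-1)\max|\sigma(\operatorname{ad}_{H_S})|$ for each fixed $m$, hence compactness of each fixed-$m$ slice) -- and then add a high-energy truncation to control the dependence on $m$.

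The genuine gap is that you stop precisely at the step that would make the argument work, and you say so yourself: one would need to show that the low-energy truncation of the dilation $\operatorname{tr}_B\big(U((\cdot)\otimes\rho_B^{\le E_\ast}(T))U^*\big)$ is, uniformly over all admissible $(m,H_B,U)$, close to a genuine thermal operation with bath dimension and spread controlled by $E_\ast$ alone. As you note, the energy-conserving $U$ mixes the low- and high-energy bath sectors, so the truncated map is generally not of thermal-operation form, and it is not obvious what to replace it with; the estimate must also be uniform in $m$, which is exactly where the paper's attempt broke down. Without this quantitative refinement of Proposition~\ref{prop_1}~(iv), what you have is a plan that sharpens the paper's partial analysis rather than a proof; completing it would in fact resolve an open problem that the paper explicitly leaves open (``\ldots proving Conjecture~\ref{prop_cont_in_T}---if true at all---requires a more careful analysis\ldots'').
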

For a simple yet (so far) unsuccessful attempt to prove this, see Appendix \ref{proof_att_conj_cont_T}.
Either way this property would be necessary for the role of the temperature in the definition of thermal operations to be ``correct'' in the sense that it accurately models the behavior of real physical systems.

As a final remark the case $T=0$ is excluded from the above continuity considerations for two reasons: first, the concept of zero temperature and achieving it with finite resources (e.g., time, heat baths) is problematic in the classical \cite{Nernst12,Loebl60} as well as the quantum case (at least for qubits, cf.~Lemma 9 in \cite{Scharlau18}). 
Second, letting the temperature tend to zero reveals a lack of continuity already in the classical case, cf.~Appendix D, Example 3 in \cite{vomEnde22}.
More precisely, there exist classical states (probability vectors) $x$ such that the map $T\mapsto\{Ax:A\text{ Gibbs-stochastic}$\footnote{
Recall that for $n\in\mathbb N$ and some positive $d\in\mathbb R_{++}^n$ the set of Gibbs-stochastic (or ``$d$-stochastic'' in the mathematics literature, cf.~Ch.~14.B in \cite{MarshallOlkin}) $n\times n$ matrices is defined as the set of all $A\in\mathbb R^{n\times n}$ with non-negative entries such that $Ad=d$ and $\mathbbm{e}^\top A=\mathbbm{e}^\top$ where $\mathbbm{e}:=(1,\ldots,1)^\top$.
}${}^\text{,}$\cite{MarshallOlkin} $\text{ w.r.t.~}H_S,T\}$ is discontinuous in $T=0$.
\section{The Qubit Case: Overview, Semigroup Representation, and Visualization}\label{sec_ento}

Two core features of thermal operations are preservation of the Gibbs state and the covariance law (in generator form) $[S,\operatorname{ad}_{H_S}]=0$ for all $S\in \overline{\mathsf{TO}(H_S,T)}$ \,\!\cite{Lostaglio15_2}. This motivates the following definition \cite{Cwiklinski15}:
Given $H_S\in{i}\mathfrak u(n)$ and some $T>0$ the set of all covariant Gibbs-preserving maps is defined to be
\begin{align*}
\mathsf{EnTO}(H_S,T):=\{S\in \textsc{cptp}(n)\,:\,S(e^{-H_S/T})=e^{-H_S/T}\,\wedge\,[S,\operatorname{ad}_{H_S}]=0\}
\end{align*}
where $\mathsf{EnTO}$ is short for ``enhanced thermal operations''. This definition naturally extends to $T=\infty$ by replacing the fixed point $e^{-H_S/T}$ by $\mathbbm1_n$.
It is straightforward to see that for all $H_S\in{i}\mathfrak u(n)$, $T\in(0,\infty]$, $\mathsf{EnTO}$ is a convex, compact semigroup with identity, and $\mathsf{EnTO}$ satisfies the same transformation rules as $\mathsf{TO}$ and $\overline{\mathsf{TO}}$ (Lemma \ref{lemma_3}).
Moreover $\overline{\mathsf{TO}}\subseteq\mathsf{EnTO}$, and the action of $\overline{\mathsf{TO}}$ and $\mathsf{EnTO}$ on any classical state $\rho$ (i.e.~on any state with $[H_S,\rho]=0$) even coincides, cf.~Sec.~3 in \cite{Shiraishi20}. 

A set of necessary and sufficient (implicit) conditions for state conversion under enhanced thermal operations was given by Gour et al.~\cite{Gour18}.
However for general systems, $\mathsf{TO}$ and $\mathsf{EnTO}$ do not agree, even in closure and when restricted to their respective action: Choosing $H_S=\operatorname{diag}(0,1,2)$ there exist temperature $T>0$, quantum states $\rho,\rho'$, and $S\in\mathsf{EnTO}(H_S,T)$ such that $S(\rho)=\rho'$ but
$\rho'\not\in\overline{\mathsf{TO}(H_S,T)}(\rho)\;$\cite{Ding21}.
This, however, is only true beyond two dimensions because for qubits it is known that the two sets coincide.
Before we review the many results on qubit thermal operations let us investigate the basic structure of $\mathsf{TO}$ in two dimensions; this will simplify things later on.

The qubit case is particularly nice because there $\mathsf{EnTO}$ is characterized by three real parameters (i.e.~one real and one complex number) so in particular we can visualize it. Indeed given $H_S\in{i}\mathfrak{u}(2)$ non-degenerate (i.e.~$H_S=\sum_{i=1}^2E_i|g_i\rangle\langle g_i|$ with $E_1<E_2$ for some orthonormal basis $\{g_1,g_2\}$ of $\mathbb C^2$) as well as $T\in(0,\infty]$, one finds that a linear map $S:\mathbb C^{2\times 2}\to\mathbb C^{2\times 2}$ is in $\mathsf{EnTO}(H_S,T)$ if and only if there exist $\lambda\in[0,1]$, $r\in\big[0,\sqrt{ (1-\lambda)(1-\lambda e^{- \Delta E /T}) }\big]$, and $\phi\in[-\pi,\pi)$ such that the Choi matrix\cite{Choi75} of $S$ (w.r.t.~$\{g_1,g_2\}$) reads
\begin{equation}\label{eq:EnTO_canonical}
\begin{pmatrix} 1-\lambda e^{- \Delta E /T} &0&0&re^{i\phi} \\
0&\lambda e^{- \Delta E /T}&0&0\\
0&0&\lambda&0\\
re^{-i\phi}&0&0&1-\lambda \end{pmatrix}\,.
\end{equation}
Here $\Delta E:=E_2-E_1>0$ and, if $T=\infty$, then $e^{-\Delta E/T}$ gets replaced by $1$.
The basic structure of \eqref{eq:EnTO_canonical} -- meaning the position of the zeros -- is solely due to $[S,\operatorname{ad}_{H_S}]=0$, while preservation of the Gibbs state is encoded in the diagonal action being a Gibbs-stochastic $2\times 2$ matrix
where $d=(e^{-E_1/T},e^{-E_2/T})$. In two dimensions the latter set is well known to equal
$$
\operatorname{conv}{ \Big\{
\begin{pmatrix}
 1-\frac{d_2}{d_1} & 1 \\
\frac{d_2}{d_1} & 0
\end{pmatrix},\begin{pmatrix}
1&0\\0&1
\end{pmatrix}\Big\}
 }=
\operatorname{conv}{ \Big\{
\begin{pmatrix}
 1-e^{-\Delta E/T} & 1 \\
e^{-\Delta E/T} & 0
\end{pmatrix},\begin{pmatrix}
1&0\\0&1
\end{pmatrix}\Big\}
 }
$$
so it is characterized by one parameter $\lambda\in[0,1]$. Finally the scaling of $S$ on the off-diagonal is only restricted by complete positivity (so positive semi-definiteness of \eqref{eq:EnTO_canonical}, cf.~again \cite{Choi75}) which is equivalent to $|r|^2\leq(1-\lambda)( 1-\lambda e^{- \Delta E /T} )$.
This allows us to define the linear map
\begin{equation}\label{eq:rep_TO_map}
\begin{split}
\Psi_T:\mathcal L(\mathbb C^{2\times 2})&\to\mathbb R\times\mathbb C\\
S&\mapsto\begin{pmatrix}
\langle g_1,S(|g_2\rangle\langle g_2|)g_1\rangle\\
\langle g_1,S(|g_1\rangle\langle g_2|)g_2\rangle
\end{pmatrix}
\end{split}
\end{equation}
which maps \eqref{eq:EnTO_canonical} to $(\lambda,re^{i\phi})$.
This becomes a faithful semigroup representation if the domain of $\Psi_T$ is restricted to $D(\Psi_T):=$ ``all linear maps on $\mathbb C^{2\times 2}$ whose Choi matrix is given by \eqref{eq:EnTO_canonical} for some $\lambda\in\mathbb R$, $r\geq 0$, $\phi\in[-\pi,\pi)$'' and if the codomain of $\Psi_T$ is equipped with the associative operation
\begin{align*}
\circ_T:(\mathbb R\times\mathbb C)\times(\mathbb R\times\mathbb C)&\to (\mathbb R\times\mathbb C)\\
\Big(
\begin{pmatrix}
\lambda_1\\c_1
\end{pmatrix},
\begin{pmatrix}
\lambda_2\\c_2
\end{pmatrix}\Big)&\mapsto\begin{pmatrix}
\lambda_1+\lambda_2-\lambda_1\lambda_2(1+e^{-1/T})\\c_1c_2
\end{pmatrix}
\end{align*}
which has neutral element $(0,1)^\top$. The image of $\Psi_T:D(\Psi_T)\to(\mathbb R\times\mathbb C,\circ_T)$
is depicted in Fig.~\ref{fig1}.
\begin{figure}[!htb]
\centering
\includegraphics[width=0.5\textwidth]{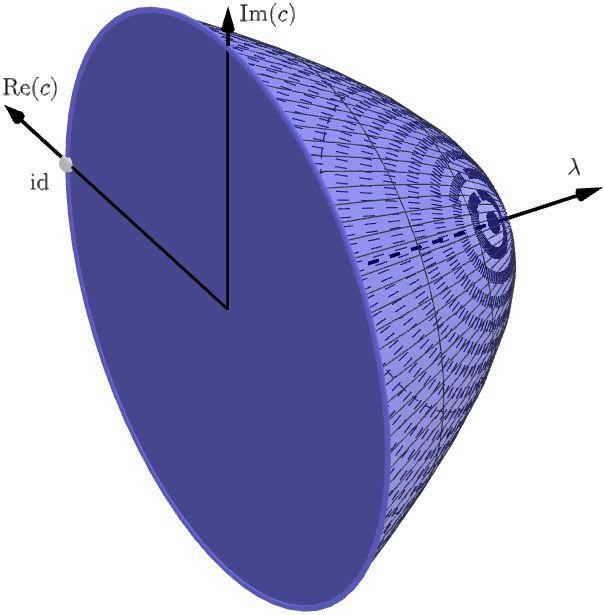}
\caption{Graph of $\Psi_T$ from \eqref{eq:rep_TO_map} for non-degenerate $H_S$ when restricting the domain to $\mathsf{EnTO}(H_S,T)$. The identity is mapped to $(0,1)^\top$. The ``classical'' channels, i.e.~the channels which set all coherences to zero are located on the $\lambda$-axis. The outer curve in $\lambda$-direction is described by $\sqrt{(1-\lambda)(1-\lambda e^{- \Delta E /T})}$, here depicted for $\Delta E=1$ and $e^{-1/T}=0.2$.
%
%
%
%
%
%
}
\label{fig1}
\end{figure}

Interestingly $\circ_T$ operates commutatively which -- because $\Psi_T$ is a faithful semigroup representation of $\mathsf{EnTO}$ -- yields the following:
\begin{corollary}\label{coro_commutative}
Let $T\in(0,\infty]$ and let $H_S\in{i}\mathfrak u(2)$ be non-degenerate. For all 
$S_1,S_2\in\mathsf{EnTO}(H_S,T)$ one has $S_1\circ S_2=S_2\circ S_1$. In 
particular every subset of $\mathsf{EnTO}(H_S,T)$ is commutative, as well.
\end{corollary}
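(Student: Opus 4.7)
The plan is to read the claim off directly from the faithful semigroup representation $\Psi_T$ constructed immediately before the statement. First I would verify that $\Psi_T$ actually intertwines composition of maps with the operation $\circ_T$, i.e.\ that
$\Psi_T(S_1\circ S_2)=\Psi_T(S_1)\circ_T\Psi_T(S_2)$ for all $S_1,S_2\in\mathsf{EnTO}(H_S,T)$. Writing each $S_i$ in the canonical form \eqref{eq:EnTO_canonical} with parameters $(\lambda_i,c_i)$, the action on $|g_2\rangle\langle g_2|$ reduces to composing two Gibbs-stochastic $2\times 2$ matrices in their affine parametrisation as convex combinations of the Gibbs reset and the identity, which produces exactly the first component $\lambda_1+\lambda_2-\lambda_1\lambda_2(1+e^{-1/T})$ of the definition of $\circ_T$. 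The action on $|g_1\rangle\langle g_2|$ is by construction just scalar multiplication, since $S_i(|g_1\rangle\langle g_2|)=c_i|g_1\rangle\langle g_2|$, giving the second component $c_1c_2$. Thus $\Psi_T$ is a semigroup homomorphism.

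The second step is the actual payoff: the operation $\circ_T$ is manifestly commutative. Its first entry is symmetric in $(\lambda_1,\lambda_2)$, and $c_1c_2=c_2c_1$ because $\mathbb C$ is commutative. Combined with the homomorphism property this yields
\begin{equation*}
\Psi_T(S_1\circ S_2)=\Psi_T(S_1)\circ_T\Psi_T(S_2)=\Psi_T(S_2)\circ_T\Psi_T(S_1)=\Psi_T(S_2\circ S_1).
\end{equation*}
Faithfulness (injectivity) of $\Psi_T$ on $\mathsf{EnTO}(H_S,T)$ is automatic, because the three numbers $\lambda,r,\phi$ uniquely determine the Choi matrix \eqref{eq:EnTO_canonical} and hence the map. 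Therefore $S_1\circ S_2=S_2\circ S_1$, and the ``every subset is commutative'' addendum is a triviality since commutativity is inherited by subsets.

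The only mildly technical step is the first one, i.e.\ checking by direct Choi-matrix bookkeeping that map composition indeed pushes forward to $\circ_T$. This is routine but is the one place where the Gibbs-preservation condition enters, through the affine $\lambda$-parametrisation of $2\times 2$ Gibbs-stochastic matrices recorded just above \eqref{eq:rep_TO_map}; after that, the corollary is essentially a tautology about multiplication in $\mathbb R\times\mathbb C$.
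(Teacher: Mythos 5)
Your proposal is correct and follows exactly the paper's line of reasoning: establish that $\Psi_T$ is a semigroup homomorphism onto $(\mathbb R\times\mathbb C,\circ_T)$, observe that $\circ_T$ is manifestly commutative, and conclude via injectivity of $\Psi_T$ on $D(\Psi_T)$. You have simply made explicit the verification of the homomorphism property (composition of $2\times 2$ Gibbs-stochastic matrices in the affine $\lambda$-parametrisation for the diagonal part, scalar multiplication on $|g_1\rangle\langle g_2|$ for the off-diagonal part), which the paper leaves implicit in its statement that $\Psi_T$ is a ``faithful semigroup representation.''
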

Be aware that this result does not hold if $H_S$ is degenerate: for example the group $\operatorname{Ad}_{\mathsf{SU}(2)}$ is a non-commutative subset of $\mathsf{TO}(\mathbbm1_2,T)\subseteq\mathsf{EnTO}(\mathbbm1_2,T)$.
Moreover, unsurprisingly, this result does not generalize to higher dimensions because already the Gibbs-stochastic matrices form a non-commutative semigroup in three and more dimensions (Appendix A in \cite{vomEnde22}).

Also this semigroup representation turns into a group representation if points of the form $(\frac{1}{1+e^{-1/T}},*)$, $(*,0)$ are excluded from the domain of $\Psi_T$. Then the inverse of any $(\lambda,c)$ from the restricted domain of $\Psi_T$ under $\circ_T$ is given by $( \frac{\lambda}{\lambda(1+e^{-1/T})-1} , \frac{1}{c} )$. Indeed if one defines the map $x\circ_ay:=x+y-xya$ on $\mathbb R\times\mathbb R$ for any non-zero $a$, then
\begin{align*}
\mathfrak I_a:(\mathbb R\setminus \{a^{-1}\},\circ_a)&\to(\mathbb R\setminus\{0\},\cdot)\\
x&\mapsto 1-ax
\end{align*}
is a group isomorphism because $(1-ax)(1-ay)=1-a(x\circ_ay)$.
In particular the map $\mathfrak I_a$ transfers commutativity of $(\mathbb R\setminus\{0\},\cdot)$ over to $(\mathbb R\setminus \{a^{-1}\},\circ_a)$, and thus ultimately to $\circ_T$ because $\circ_T\equiv\circ_a|_{a=1+e^{-1/T}}\times\,\cdot\,|_\mathbb C$. Thus in a way, Corollary \ref{coro_commutative} is of the same fundamental structure as the statement that multiplying real numbers is commutative.

With this we are ready to show how $\mathsf{TO}$ sits inside $\mathsf{EnTO}$ in two dimensions.
It has first been shown by \'Cwikli\'nski et al.~that for all $H_S\in{i}\mathfrak{u}(2)$ and all $T\in(0,\infty]$ the sets $\overline{\mathsf{TO}(H_S,T)}$ and $\mathsf{EnTO}(H_S,T)$ coincide \cite{Cwiklinski15}. Using the above semigroup representation we outlined their proof in Appendix \ref{app_proof_thm_equiv_qubit}. 
Be aware that their proof relied on bath Hamiltonians with exponentially-growing degeneracies of the energy levels. This requirement was eventually shown to be unnecessary: the observation that it suffices to consider truncated single-mode bosonic baths (i.e.~$H_B=\Delta E\operatorname{diag}(0,1,\ldots,m)$) was first done by Scharlau et al.~for classical states (Sec.~IV in \cite{Scharlau18}), followed by Hu et al.~\cite{Ding19} for the general case.
We summarize and extend on their results in the following theorem. 
Most notably the only advantage degenerate baths give over non-degenerate ones is generating full dephasing at low temperatures:
\begin{theorem}\label{thm_equiv_qubit}
Let $H_S\in{i}\mathfrak u(2)$ and $T\in(0,\infty]$ be given. Defining $\Delta E:=\max\sigma(H_S)-\min\sigma(H_S)$ the following statements hold:
\begin{itemize}
\item[(i)] $\overline{\mathsf{TO}(H_S,T)}=\mathsf{EnTO}(H_S,T)$
\item[(ii)] If $H_S$ has non-degenerate spectrum, then $\mathsf{TO}(H_S,T)$ is convex as it equals
\begin{equation}\label{eq:TO_qubit_set}
\mathsf{EnTO}(H_S,T)\setminus \Psi_T^{-1}\big(\Big\{ \begin{pmatrix}\lambda\\\sqrt{ (1-\lambda)(1-\lambda e^{- \Delta E /T}) } e^{i\phi}\end{pmatrix} : \lambda\in(0,1],\phi\in[-\pi,\pi)\Big\}\big)\,.
\end{equation}
In other words an enhanced thermal operation can be implemented via a thermal operation if and only if it is a dephasing map ($\lambda=0$) or if it lies in the relative interior\footnote{
By this we mean the interior relative to the subset $\Psi_T^{-1}(\mathbb R^3)$ of $\mathcal L(\mathbb C^{2\times 2})$, that is, the usual interior when considering Fig.~\ref{fig1}.
}
of $\mathsf{EnTO}(H_S,T)$. In particular the difference between $\mathsf{TO}(H_S,T)$ and $\mathsf{EnTO}(H_S,T)$ occurs only on the relative boundary of the enhanced thermal operations.
\item[(iii)] If $H_S$ has non-degenerate spectrum, then the semigroup 
generated by thermal operations with bath Hamiltonians $\operatorname{diag}(0,\Delta E,\ldots,m\Delta E)$ for some $m\in\mathbb N_0$ equals
$\mathsf{TO}(H_S,T)$
if and only if $T\in(\frac{\Delta E}{\ln 2},\infty]$. Should the two sets not be the same (i.e.~if $T\leq\frac{\Delta E}{\ln 2}$) 
their difference is a subset of $\Psi_T^{-1}(\{(\lambda,0)^\top:\lambda\in[0,1]\} )$ (i.e.~the $\lambda$ axis in Fig.~\ref{fig1}); in particular the two sets coincide after taking the closure.
\item[(iv)] Every enhanced thermal operation can be realized by a thermal operation with a single-mode bosonic bath, that is, $H_B=\operatorname{diag}(j\Delta E)_{j=0}^\infty\,$, if and only if $T\in[\frac{\Delta E}{\ln 2},\infty]$. Indeed if $T<\frac{\Delta E}{\ln 2}$ the difference between the two sets has measure strictly larger than zero. 
This is due to the fact that for small enough $\lambda$ the action of thermal operations on the off-diagonal elements cannot become arbitrarily small anymore.
\end{itemize}
\end{theorem}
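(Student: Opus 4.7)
The overall strategy is to exploit the faithful semigroup representation $\Psi_T$ from Eq.~\eqref{eq:rep_TO_map}, which reduces every question about qubit thermal operations to a concrete three-dimensional region in $\mathbb R\times\mathbb C$. I would work in an eigenbasis of $H_S$ when non-degenerate; the degenerate case $H_S\propto\mathbbm1$ only concerns part~(i) and is handled separately using Lemma~\ref{lemma_3} together with the standard fact that $\overline{\mathsf{TO}(\mathbbm1_2,T)}$ equals the unital qubit channels. For part~(i), the inclusion $\overline{\mathsf{TO}}\subseteq\mathsf{EnTO}$ is immediate from closedness of the two defining conditions. For the converse, Corollary~\ref{prop_to_spin} lets me restrict to baths of spin form $H_B=\bigoplus_{j=0}^m j\Delta E\,\mathbbm1_{\beta_j}$, so that energy conservation decomposes the interaction unitary into independent $2\beta_j\times 2\beta_j$ blocks on each degenerate subspace. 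The plan is to compute $\Psi_T(\Phi_{T,\sum_j\beta_j}(H_B,U))$ in closed form and then, using exponentially-growing degeneracies $\beta_j\approx Ne^{-j\Delta E/T}$ (the construction of \'Cwikli\'nski et al.), show that $\Psi_T(\mathsf{TO}(H_S,T))$ is dense in $\Psi_T(\mathsf{EnTO}(H_S,T))$; faithfulness then yields~(i).

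For part~(ii), I would combine (i) with a careful accounting of which points in $\Psi_T(\mathsf{EnTO})$ are hit exactly, versus only in the limit. The easy direction is constructive: given any $(\lambda,c)$ with $|c|^2<(1-\lambda)(1-\lambda e^{-\Delta E/T})$, or any $(0,c)$ with $|c|\le 1$, I would exhibit a finite-degeneracy bath and an energy-conserving unitary whose image under $\Psi_T$ equals $(\lambda,c)$, by composing a partial-swap on degenerate subspaces with a suitable coherence-rotating unitary; convexity of the resulting set then drops out because $\Psi_T^{-1}$ of the manifestly convex subset of $\mathbb R\times\mathbb C$ is convex. The hard direction is to show that the outer boundary with $\lambda>0$ is \emph{not} attained: whenever $\lambda>0$ the thermal operation genuinely exchanges populations with a full-rank Gibbs bath, and a Cauchy--Schwarz bound on $\operatorname{tr}_B(U(\,\cdot\,\otimes G_B)U^*)$ becomes strict, forcing $|r|^2<(1-\lambda)(1-\lambda e^{-\Delta E/T})$.

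Parts~(iii) and~(iv) both amount to tracking which $(\lambda,c)$ are reachable when $H_B$ is of truncated single-mode bosonic form $H_B=\operatorname{diag}(0,\Delta E,\ldots,m\Delta E)$, for finite or infinite $m$. Because each degenerate energy subspace of the total Hamiltonian is only two-dimensional there, the coherence parameter $c$ is generated by exactly the same rotations as in part~(ii); hence any difference between general $\mathsf{TO}$ and the truncated-bosonic semigroup must live on the $\lambda$-axis, giving~(iii). The threshold $T=\Delta E/\ln 2$ corresponds to the Gibbs ratio $q:=e^{-\Delta E/T}=1/2$; setting $Z_m=\sum_{k=0}^m q^k$ and working out the achievable $\lambda$ on the $\lambda$-axis via explicit partial-swap schemes, the plan is to verify that $\sup_m\lambda_{\max}(m,T)=1$ iff $q\ge 1/2$, with the supremum being attained at some finite $m$ precisely when $q>1/2$. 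Sending $m\to\infty$ then yields the ``if'' direction of~(iv), and for the ``only if'' I would quantify a strictly positive lower bound on $|c|$ valid for small $\lambda$ whenever $q<1/2$, showing that an entire sliver of $\mathsf{EnTO}$ becomes unreachable even with an infinite bosonic bath.

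The main obstacle I anticipate is the strict inequality $|r|^2<(1-\lambda)(1-\lambda e^{-\Delta E/T})$ when $\lambda>0$ in part~(ii). Everything else reduces to either density arguments or explicit constructions in the picture of Figure~\ref{fig1}, but this single step requires sharpening a complete-positivity bound (which defines the $\mathsf{EnTO}$ region) into a genuinely thermodynamic one (which governs exact thermal operations). The cleanest route I see is to view $\Phi_{T,m}(H_B,U)$ in its Stinespring form against a purification of $G_B$ and to invoke the equality case of a Cauchy--Schwarz or operator-convexity inequality, concluding that equality forces $\lambda=0$.
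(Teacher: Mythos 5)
Your outline for parts (i) and (ii) tracks the paper's own proof closely: restriction to spin-form baths (Corollary~\ref{prop_to_spin}), reduction via the representation $\Psi_T$, the \'Cwikli\'nski-style exponentially-degenerate construction to exhaust the extreme points (for (i)), and then, for the hard direction of (ii), a Gibbs-weighted Cauchy--Schwarz bound whose equality case collapses to $\lambda=0$. You correctly identify that last step as the crux; the paper makes it precise by introducing the inner product $\langle X,Y\rangle_T=\sum_j\operatorname{tr}(X_j^*Y_j)e^{-j\Delta E/T}$ on the blocks $(D_1,\dots,D_{m-1},U_m)$ vs.~$(U_0,A_1,\dots,A_{m-1})$, and then using unitarity of the total $U$ to show that equality in Cauchy--Schwarz forces $|\xi|=1$ and, inductively, $B_j=0$ for all $j$; you would need to fill in exactly this bootstrap rather than merely ``invoke the equality case.''

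For (iii) and (iv), however, your sketch contains a genuine conceptual error in how the threshold $T=\Delta E/\ln 2$ arises. You propose to ``work out the achievable $\lambda$ on the $\lambda$-axis'' and to ``verify that $\sup_m\lambda_{\max}(m,T)=1$ iff $q\ge 1/2$,'' but the reachable $\lambda$-range is \emph{not} what is temperature-sensitive here: the $\mathcal S_m$ constructions already sweep $\lambda$ through all of $[0,1]$ in the limit for every $T$. The obstruction is entirely in the coherence parameter. With a truncated bosonic bath $H_{B,m}=\operatorname{diag}(j\Delta E)_{j=0}^{m-1}$ and a diagonal energy-preserving unitary $\mathbbm1\oplus U_\phi$, the off-diagonal factor becomes a Gibbs-weighted average of phases $\sum_j w_j e^{i\phi_j}$ with $w_j\propto q^j$, $q:=e^{-\Delta E/T}$; the set of attainable values is an annulus whose inner radius is $r_m = 2\frac{1-q}{1-q^m}-1\to 1-2q$ as $m\to\infty$. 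Full dephasing ($c=0$) is therefore implementable iff $1-2q\le 0$, i.e.~$T\ge\Delta E/\ln2$, with the inequality becoming strict in (iii) because the finite-$m$ values $r_m$ only converge to $1-2q$ from above. This also explains the asymmetry between (iii) and (iv) (open vs.~closed threshold interval), which your proposal does not address. Your claim that the difference lives on the $\lambda$-axis is correct, but the justification should come from the composition decomposition $S = D_{c_S}\circ S_m$ (and the fact that $D_{c_S}$ with $c_S\neq 0$ is always implementable by powers of a fixed partial dephasing), not from a supposed constraint on the attainable $\lambda$.
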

\noindent For convenience, proofs of these results are given in Appendix \ref{app_proof_thm_equiv_qubit}. Be aware that most of these results do not generalize to more than two dimensions.

%
%
One conclusion ``hidden'' in the proof of Theorem \ref{thm_equiv_qubit} concerns the dimension of bath Hamiltonians. Recall that the Stinespring dilation $S=\operatorname{tr}_B(U((\cdot)\otimes|\psi\rangle\langle\psi|)U^*)$ of an arbitrary quantum channel $S\in\textsc{cptp}(n)$ can always be chosen such that $\psi\in\mathbb C^k$ for some $k\leq n^2$ (Thm.~6.18 in \cite{Holevo12}). This result breaks down for thermal operations, that is, if $|\psi\rangle\langle\psi|$ is replaced by a Gibbs state and $U$ is required to be energy-preserving: For every $\lambda\in(0,1)$, $m\in\mathbb N$ there exists $\varepsilon_m>0$ and $r\in (  \sqrt{(1-\lambda)(1-\lambda e^{- \Delta E /T})} -\varepsilon_m, \sqrt{(1-\lambda)(1-\lambda e^{- \Delta E /T})}  )$ such that the thermal operation $\Psi_T^{-1}(\lambda,r)$ can be implemented by a bath Hamiltonian $H_B$ only if it is of size $m\times m$ or larger. However $\varepsilon_m$ goes to zero as $m\to\infty$ because $\mathsf{TO}(H_S,T)$ gets arbitrarily close to the (relative) boundary of $\mathsf{EnTO}$ for all $T\in(0,\infty]$.
The final observation we want to make is that the ``geometry'' of the qubit thermal operations
pertains to the set $\{\Phi(\rho):\Phi\in\overline{\mathsf{TO}(H_S,T)}\}$ -- which is sometimes referred to as the (future) thermal 
cone \cite{Lostaglio15_2,Korzekwa17,Oliveira22} --
as is depicted in Figure \ref{fig_bloch}.
This recovers what has already been observed in \cite{Lostaglio15_2}, specifically Figures 1 \& 6 in said article.
In particular the boundary of the thermal cone is not linear, meaning that even after factoring out the rotational symmetry inflicted by the thermal operations $\rho\mapsto\operatorname{diag}(1,e^{i\phi})\rho\operatorname{diag}(1,e^{-i\phi})$ the set of extreme points is still infinite.
\begin{figure}[!htb]
\centering
\includegraphics[width=0.32\textwidth]{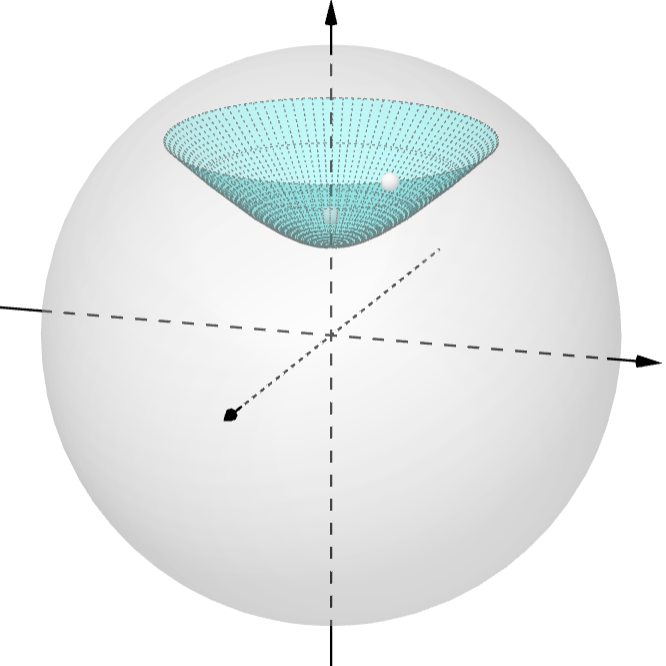}
\includegraphics[width=0.32\textwidth]{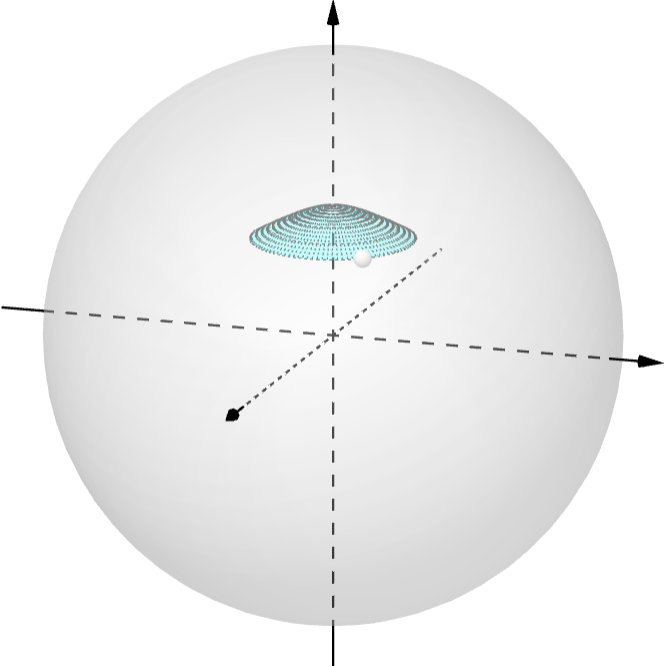}
\includegraphics[width=0.32\textwidth]{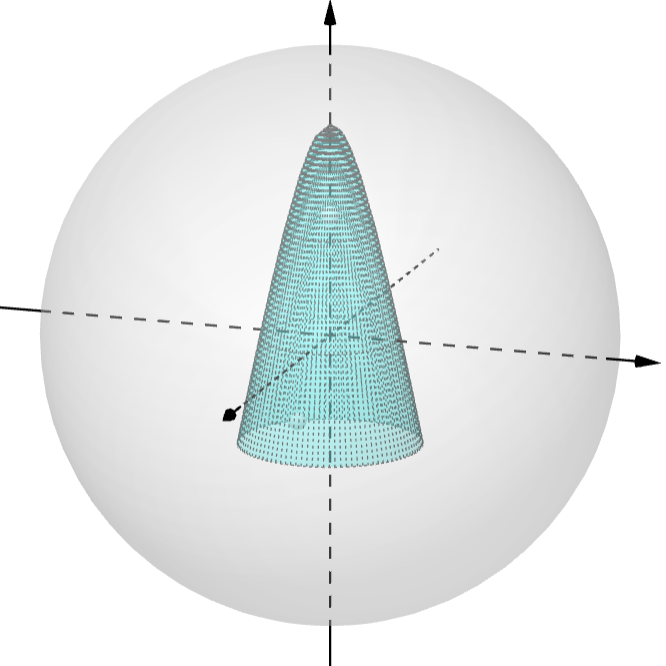}
\caption{
Future thermal cones of initial states with Bloch vector 
$c\cdot(0.5,0.4,\sqrt{0.59})^\top$ for different values of $c$. Left: $c=0.9$.
Middle: $c=0.45$. Right: $c=-0.5$.
The ``system parameter`` we chose is $e^{-H_S/T}=0.45$.
Here the point on the boundary of the cone is the Bloch vector of the initial 
state, and the point in the interior of the cone (on the $z$-axis) corresponds to 
the Gibbs state.
}
\label{fig_bloch}
\end{figure}
\section{Conclusion and Open Questions}\label{sec_concl}
We reduced the set of bath Hamiltonians needed in the definition of thermal operations to those which have so-called ``resonant spectrum'' with respect to the system. This resonance condition is about the spectrum of the bath forming a connected graph
with respect to the possible energy transitions of the system (cf.~Figure \ref{figa}). We saw that the action coming from any bath which does not satisfy this condition decomposes into the convex sum of two or more thermal operations with resonant bath.
Be aware that this notion is logically independent from a bath Hamiltonian containing \textit{all} possible transitions of the system (i.e.~$\sigma(\operatorname{ad}_{H_S})\subseteq\sigma(\operatorname{ad}_{H_B})$). The latter is a necessary condition for the diagonal action of a thermal operation to be represented by a strictly positive Gibbs-stochastic matrix (cf.~also Remark \ref{rem_bath_condition_prob}).

Either way, as a consequence of the new-found resonance we showed that if any multiple of the system's Hamiltonian has rational Bohr spectrum, then there exists an energy gap such that the set of thermal operations is fully characterized by spin Hamiltonians w.r.t.~this energy gap (Corollary \ref{coro_spin_rational}).
As rational numbers are a key concept of this statement this suggests that the thermal operations behaves discontinuously at certain Hamiltonians.
Indeed we were able to show that if either the spectrum or the transitions of the system's Hamiltonian are degenerate, then the set of thermal operations changes discontinuously with respect to the Hausdorff metric (Example \ref{ex_discont}).
Taking the nature of our two counter-examples into account it seems reasonable to conjecture that $\mathsf{TO}$ is at least continuous in the temperature (for fixed Hamiltonian), as well as $\mathsf{TO}$ admitting some form of semi-continuity in the joint argument $(H_S,T$).

An idea to restore continuity -- inspired by the concept of average energy conservation\cite{Skrzypczyk14} -- could be to allow for an error in the energy conservation condition: Given any $\varepsilon\geq 0$ define $\mathsf{TO}^\varepsilon(H_S,T)$ by adjusting $U(H_S\otimes\mathbbm{1}_B+\mathbbm{1}\otimes H_B)U^*=H_S\otimes\mathbbm{1}_B+\mathbbm{1}\otimes H_B$ in the definition of $\mathsf{TO}(H_S,T)$ to
$\|U(H_S\otimes\mathbbm{1}_B+\mathbbm{1}\otimes H_B)U^*-H_S\otimes\mathbbm{1}_B+\mathbbm{1}\otimes H_B\|_\infty \leq 2\varepsilon\|U-\mathbbm1\|_\infty(\|H_S\|_\infty+\|H_B\|_\infty)$.
This is motivated by the simple estimate
\begin{align*}
\| U(H_S\otimes\mathbbm{1}_B+\mathbbm{1}\otimes &H_B)U^*-H_S\otimes\mathbbm{1}_B+\mathbbm{1}\otimes H_B  \|_\infty\\
&\leq \|  U(H_S\otimes\mathbbm{1}_B+\mathbbm{1}\otimes H_B)U^*-(H_S\otimes\mathbbm{1}_B+\mathbbm{1}\otimes H_B)U^* \|_\infty\\
&+\| (H_S\otimes\mathbbm{1}_B+\mathbbm{1}\otimes H_B)U^*-H_S\otimes\mathbbm{1}_B+\mathbbm{1}\otimes H_B  \|_\infty\\
&\leq 2\| U-\mathbbm1  \|_\infty\|H_S\otimes\mathbbm{1}_B+\mathbbm{1}\otimes H_B  \|_\infty=2\| U-\mathbbm1  \|_\infty(\| H_S  \|_\infty+\|  H_B \|_\infty)\,.
\end{align*}
Note that $\mathsf{TO}^0$ recovers $\mathsf{TO}$, while $\mathsf{TO}^\varepsilon$ for $\varepsilon\geq1$ renders energy conservation obsolete because then every unitary satisfies the condition in question.

In some way introducing such $\varepsilon$ (small enough) ``smoothens out'' the binary nature of energy-conservation by allowing for unitaries which are $\varepsilon$-close to conserving the energy of the full uncoupled system. As a result new transitions which were previously forbidden do not appear instantly once the Bohr spectrum becomes degenerate, but the norm of the corresponding diagonal block in the unitary correlates to the error $\varepsilon$.
Now in order to get a collection of maps which is ``physically reasonable'' one may have to intersect $\mathsf{TO}^\varepsilon$ with the Gibbs-preserving maps, or maybe consider the semigroup generated by $\mathsf{TO}^\varepsilon$.

Finally we reviewed what is known about thermal operations in the qubit case and, using our results on baths with resonant spectrum, extended on this knowledge by specifying how the set of thermal qubit operations looks \textit{exactly}.
We did so by means of a faithful semigroup representation which translates the (enhanced) thermal operations into a subset of ordinary 3D space, thus allowing for a visualization from which intuition benefits as well.
Interestingly, our proof of the main qubit results (Theorem \ref{thm_equiv_qubit}) gave two different families of energy-preserving unitaries for approximating the extreme points of the enhanced thermal operations, depending on whether the temperature is finite or infinite. For now, finding a (temperature-dependent) family of unitaries which continues to do the job in the limit $T\to\infty$ is an open problem.

In any case these qubit results readily leads us to a number of open questions for general (finite-dimensional) systems, two of the more obvious ones being the following:
\begin{itemize}
\item Is $\mathsf{TO}(H_S,T)$ is convex for all $H_S\in{i}\mathfrak u(n)$, $T\in(0,\infty]$? We showed that this holds true in two dimensions; however our proof -- as well as the proof of convexity of $\overline{\mathsf{TO}}$ -- are very much unsuited to tackle this question in higher dimensions, because they are either too complicated or they fundamentally rely on rational numbers and approximations. Should convexity hold in general (without the closure) it seems likely that proving so requires some deeper knowledge about thermal operations.
\item Can one specify an upper bound for how degenerate the energy levels of the bath need to be?
For qubits, our proof of Prop.~\ref{prop_1} (i) shows that every qubit thermal operation is the composition of something ``close to an extreme point'' (i.e.~non-degenerate bath) and a partial dephasing which can always be implemented by a trivial two-level bath. 
Then the proof of the semigroup property shows that the energy levels of the bath Hamiltonian of the composite operation has degeneracy at most two.
Thus one may conjecture that the definition of $\mathsf{TO}$ can be restricted to such (resonant) bath Hamiltonians which have degeneracy at most dimension of the system. This claim is further supported by the fact that full dephasing in $n$ dimensions can always be realized by choosing $H_B=\mathbbm1_n$ (together with $U=\bigoplus_{j=1}^n \operatorname{diag}(e^{2\pi ijk/n})_{k=0}^{n-1}$), cf.~p.~88 in \cite{Bhatia07}.
\end{itemize}
Generally speaking, settling which results regarding $\overline{\mathsf{TO}}$ continue to hold once the closure is waived should be a future line of research. We expect that any progress in this direction will reveal more of the intrinsic structure the set of thermal operations has.
\begin{acknowledgments}
I would like to thank Emanuel Malvetti, Gunther Dirr, Thomas Schulte-Herbr\"uggen, and Amit Devra
for valuable and constructive comments during the preparation of this manuscript.
Moreover I am grateful to the anonymous referee for their valuable comments which led to an improved presentation of the material.
This research is part of the Bavarian excellence network \textsc{enb}
via the International PhD Programme of Excellence
\textit{Exploring Quantum Matter} (\textsc{exqm}), as well as the \textit{Munich Quantum Valley} of the Bavarian
State Government with funds from Hightech Agenda \textit{Bayern Plus}.
\end{acknowledgments}

\appendix
\section{Proof of Proposition \ref{prop_1}}\label{app_proof_lemma_1}
\noindent We will only prove the case $T\in(0,\infty)$ as the case $T=\infty$ is done analogously.

(i): While the semigroup property is well-known we still give a proof here for the sake of completeness. Given thermal operations $S_i$, $i=1,2$ with associated bath-Hamiltonian $H_{B,i}\in\mathbb C^{m_i\times m_i}$ and energy-preserving unitary $U_i\in\mathbb C^{m_in\times m_in}$, respectively, we claim that $S_1\circ S_2=\Phi_{T,m_1m_2}(H_B,U)$ with $H_B:=H_{B,1}\otimes\mathbbm1_{m_2}+\mathbbm1_{m_1}\otimes H_{B,2}$ and
\begin{equation}\label{eq:def_U_semigroup}
U:=(U_1\otimes\mathbbm1_{m_2})(\mathbbm1_{n}\otimes\mathbb F^*)(U_2\otimes\mathbbm1_{m_1})(\mathbbm1_{n}\otimes\mathbb F)\in\mathbb C^{ n \times n }\otimes\mathbb C^{ m_1 \times m_1 }\otimes\mathbb C^{ m_2 \times m_2 }\,.
\end{equation}
Here $\mathbb F:\mathbb C^{m_1}\otimes\mathbb C^{m_2}\to\mathbb C^{m_2}\otimes\mathbb C^{m_1}$ is the flip operator, i.e.~the unique linear operator which satisfies $\mathbb F(x\otimes y)=y\otimes x$ for all $x\in\mathbb C^{m_1}$, $y\in\mathbb C^{m_2}$. Note that $\mathbb F$ also ``generates'' the matrix flip, that is, $\mathbb F^*(B\otimes A)\mathbb F=A\otimes B$ for all $A\in\mathbb C^{m_1 \times m_1}, B\in\mathbb C^{ m_2 \times m_2}$. Now the idea as to why $S_1\circ S_2$ can be described in such a way is depicted in Figure \ref{fig0}.
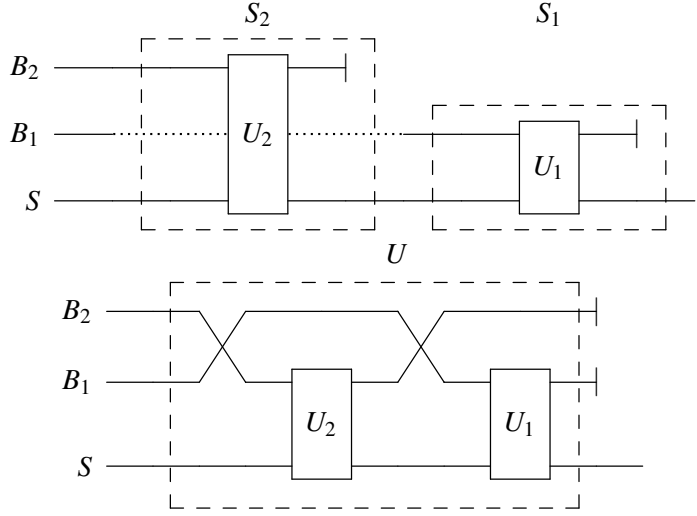
\begin{figure}[!htb]
$$\Qcircuit @C=2em @R=1.4em {
& && S_2& & & & S_1&&\\
\lstick{B_2}& \qw& \qw & \multigate{2}{U_2} &{|}\qw &&& &&\\
\lstick{B_1}& \qw& \tikz[baseline]{\draw[dotted][thick] (0,.3ex)--++(1.52,0) ;} & \nghost{U_2} & \tikz[baseline]{\draw[dotted][thick] (0,.3ex)--++(1.52,0) ;}&& \qw & \multigate{1}{U_1}& {|}\qw & \\
\lstick{S}& \qw& \qw & \ghost{U_2} & \qw &\qw&\qw & \ghost{U_1}&\qw &\qw\gategroup{2}{3}{4}{5}{2em}{--}\gategroup{3}{7}{4}{9}{2em}{--}
}
$$
$$
\Qcircuit @C=1.6em @R=2em {
& & & &&U& & &&\\
\lstick{B_2}& \qw & \qw &\link{1}{-1}& \qw &\qw &\link{1}{-1} & \qw & {|}\qw&\\
\lstick{B_1}& \qw& \qw &\link{-1}{-1} & \multigate{1}{U_2}& \qw &\link{-1}{-1} & \multigate{1}{U_1} & {|}\qw &\\
\lstick{S}& \qw& \qw & \qw & \ghost{U_2} & \qw &\qw& \ghost{U_1} & \qw&\qw
\gategroup{2}{3}{4}{8}{2em}{--}
}
$$
\caption{
Top: Circuit of first applying $S_2$ followed by applying $S_1$ (i.e.~$S_1\circ S_2$). Bottom: Circuit of $\Phi_{T,m_1m_2}(H,U)$ with $U$ from \eqref{eq:def_U_semigroup}. The idea as to why the action of these circuits coincides is that tracing out the bath $B_2$ commutes with applying $U_1$ because the action of the latter on $B_2$ is trivial.
}
\label{fig0}
\end{figure}
%
The key tool one uses in this proof is the following partial trace identity: Given Hilbert spaces $\mathcal H_1,\mathcal H_2$, a trace-class operator $A$ on $\mathcal H_1\otimes\mathcal H_2$, and bounded linear operators $C,D$ on $\mathcal H_1$ and $B$ on $\mathcal H_2$ such that $B$ is invertible one readily verifies
\begin{equation}\label{eq:partial_trace_two_baths_0}
\operatorname{tr}_{\mathcal H_2}\big((C\otimes B)A(D\otimes B^{-1})\big)=C \operatorname{tr}_{\mathcal H_2}(A)D\,.
\end{equation}
Consequently, given Hilbert spaces $\mathcal H_1,\mathcal H_2,\mathcal H_3$ and trace-class operators $X$ on $\mathcal H_1\otimes\mathcal H_2$ and $Y$ on $\mathcal H_3$ one finds
\begin{equation}\label{eq:partial_trace_two_baths}
 \operatorname{tr}_{\mathcal H_2}(X)\otimes Y=\operatorname{tr}_{\mathcal H_2}\big((\mathbbm1\otimes\mathbb F^*)(X\otimes Y)(\mathbbm1\otimes\mathbb F)\big)
\end{equation}
where $\operatorname{tr}_{\mathcal H_2}$ on the right-hand side of Eq.~\eqref{eq:partial_trace_two_baths} is the partial trace on $(\mathcal H_1\otimes\mathcal H_3)\otimes\mathcal H_2$ and, again, $\mathbb F:\mathcal H_3\otimes\mathcal H_2\to\mathcal H_2\otimes\mathcal H_3$ is the flip operator. Note that \eqref{eq:partial_trace_two_baths_0} implies \eqref{eq:partial_trace_two_baths} by choosing $C=D=\mathbbm1$, $B=\mathbb F$, and $A=X\otimes Y$.
With all of this we compute that $S_1\circ S_2$ equals
\begin{align*}
&\operatorname{tr}_{B_1}\circ \operatorname{Ad}_{U_1}\circ \Big( (\cdot) \otimes\frac{e^{-H_{B,1}/T}}{\operatorname{tr}(e^{-H_{B,1}/T})}\Big)\circ \operatorname{tr}_{B_2}\circ \operatorname{Ad}_{U_2}\circ \Big( (\cdot) \otimes\frac{e^{-H_{B,2}/T}}{\operatorname{tr}(e^{-H_{B,2}/T})}\Big)\\
\overset{\eqref{eq:partial_trace_two_baths}}=& \operatorname{tr}_{B_1}\circ \operatorname{Ad}_{U_1}\circ \operatorname{tr}_{B_2}\circ\operatorname{Ad}_{\mathbbm1\otimes\mathbb F^*}\circ \Big( (\cdot) \otimes\frac{e^{-H_{B,1}/T}}{\operatorname{tr}(e^{-H_{B,1}/T})}\Big)\circ \operatorname{Ad}_{U_2}\circ \Big( (\cdot) \otimes\frac{e^{-H_{B,2}/T}}{\operatorname{tr}(e^{-H_{B,2}/T})}\Big) \\
\overset{\eqref{eq:partial_trace_two_baths_0}}=& \operatorname{tr}_{B_1,B_2}\circ \operatorname{Ad}_{U_1\otimes\mathbbm1}\circ\operatorname{Ad}_{\mathbbm1\otimes\mathbb F^*}\circ \Big( (\cdot) \otimes\frac{e^{-H_{B,1}/T}}{\operatorname{tr}(e^{-H_{B,1}/T})}\Big)\circ \operatorname{Ad}_{U_2}\circ \Big( (\cdot) \otimes\frac{e^{-H_{B,2}/T}}{\operatorname{tr}(e^{-H_{B,2}/T})}\Big) \\
=& \operatorname{tr}_{B_1,B_2}\circ \operatorname{Ad}_{U_1\otimes\mathbbm1}\circ\operatorname{Ad}_{\mathbbm1\otimes\mathbb F^*}\circ \operatorname{Ad}_{U_2\otimes\mathbbm1}\circ \Big( (\cdot) \otimes\frac{e^{-H_{B,1}/T}}{\operatorname{tr}(e^{-H_{B,1}/T})}\Big)\circ \Big( (\cdot) \otimes\frac{e^{-H_{B,2}/T}}{\operatorname{tr}(e^{-H_{B,2}/T})}\Big) \\
\overset{\eqref{eq:def_U_semigroup}}=& \operatorname{tr}_{B_1,B_2}\circ \operatorname{Ad}_{U}\circ\operatorname{Ad}_{\mathbbm1\otimes\mathbb F^*}\circ\Big( (\cdot) \otimes\frac{e^{-H_{B,2}/T}}{\operatorname{tr}(e^{-H_{B,2}/T})}\otimes\frac{e^{-H_{B,1}/T}}{\operatorname{tr}(e^{-H_{B,1}/T})}\Big) =\Phi_{T,m_1m_2}(H,U)\,.
\end{align*}
Finally let us sketch why $U$ is energy-preserving by tracking how each of the three components of $H_S\otimes\mathbbm1+\mathbbm1\otimes H_B=H_S\otimes\mathbbm1 \otimes\mathbbm1 +\mathbbm1 \otimes H_{B,1}\otimes\mathbbm1 +\mathbbm1 \otimes\mathbbm1 \otimes H_{B,2}$ change with the factors of $U$:
\begin{align*}
\begin{bmatrix}
H_S\otimes\mathbbm1 \otimes\mathbbm1 \\
\mathbbm1 \otimes H_{B,1}\otimes\mathbbm1 \\
\mathbbm1 \otimes\mathbbm1 \otimes H_{B,2}
\end{bmatrix}
\overset{\operatorname{Ad}_{\mathbbm1\otimes\mathbb F}}\longrightarrow
\begin{bmatrix}
H_S\otimes\mathbbm1 \otimes\mathbbm1 \\
\mathbbm1 \otimes\mathbbm1 \otimes H_{B,1}\\
\mathbbm1 \otimes H_{B,2}\otimes\mathbbm1
\end{bmatrix}
\overset{\operatorname{Ad}_{U_2\otimes\mathbbm1}}\longrightarrow
\begin{bmatrix}
U_2(H_S\otimes\mathbbm1 )U_2^*\otimes\mathbbm1 \\
\mathbbm1 \otimes\mathbbm1 \otimes H_{B,1}\\
U_2(\mathbbm1 \otimes H_{B,2})U_2^*\otimes\mathbbm1
\end{bmatrix}
\end{align*}
But the sum of these three matrices is equal to $H_S\otimes\mathbbm1+\mathbbm1\otimes H_B$ because $U_2$ is energy-preserving w.r.t.~$(H_S,H_{B,2})$, that is, $U_2(H_S\otimes\mathbbm1+\mathbbm1 \otimes H_{B,2} )U_2^*=H_S\otimes\mathbbm1+\mathbbm1 \otimes H_{B,2}$. Similarly one finds
\begin{align*}
\begin{bmatrix}
H_S\otimes\mathbbm1 \otimes\mathbbm1 \\
\mathbbm1 \otimes\mathbbm1 \otimes H_{B,1}\\
\mathbbm1 \otimes H_{B,2}\otimes\mathbbm1
\end{bmatrix}
\overset{\operatorname{Ad}_{\mathbbm1\otimes\mathbb F^*}}\longrightarrow
\begin{bmatrix}
H_S\otimes\mathbbm1 \otimes\mathbbm1 \\
\mathbbm1 \otimes H_{B,1}\otimes\mathbbm1 \\
\mathbbm1 \otimes\mathbbm1 \otimes H_{B,2}
\end{bmatrix}
\overset{\operatorname{Ad}_{U_1\otimes\mathbbm1}}\longrightarrow
\begin{bmatrix}
U_1(H_S\otimes\mathbbm1 )U_1^*\otimes\mathbbm1 \\
U_1(\mathbbm1 \otimes H_{B,1})U_1^*\otimes\mathbbm1\\
\mathbbm1 \otimes\mathbbm1 \otimes H_{B,2}\\
\end{bmatrix}\simeq 
\begin{bmatrix}
H_S\otimes\mathbbm1 \otimes\mathbbm1 \\
\mathbbm1 \otimes H_{B,1}\otimes\mathbbm1 \\
\mathbbm1 \otimes\mathbbm1 \otimes H_{B,2}
\end{bmatrix}\,.
\end{align*}

Boundedness of $\mathsf{TO}$ comes from the known fact\cite{PG06} that the \textsc{cptp} maps form a subset of the unit sphere w.r.t.~$\|\,\cdot\,\|_{1\to1}$.
Path-connectedness follows from the fact that every thermal operation can be connected to the identity in a continuous manner: Given $S= \operatorname{tr}_B(e^{iH_\mathsf{tot}}((\cdot)\otimes \frac{e^{-H_B/T}}{\operatorname{tr}(e^{-H_B/T})})e^{-iH_\mathsf{tot}})$ with $H_\mathsf{tot}$ energy-preserving,
\begin{align*}
t\mapsto\operatorname{tr}_B\Big(e^{itH_\mathsf{tot}}\Big((\cdot)\otimes \frac{e^{-H_B/T}}{\operatorname{tr}(e^{-H_B/T})}\Big)e^{-itH_\mathsf{tot}}\Big) 
\end{align*}
is a continuous curve in $\mathsf{TO}(H_S,T)$ which connects $S$ ($t=1$) with $\operatorname{id}$ ($t=0$). 

(ii): The only non-trivial things here are convexity and the semigroup property. First $\overline{\mathsf{TO}}$ is a semigroup because it is the closure of a semigroup in a space where left- and right-multiplication are continuous. This is a general fact: Given any Hausdorff topological 
space $(X,\tau)$ with a binary operation $\circ:X\times X\to X$ which is left- and right-
continuous, i.e.~$x\mapsto x\circ y$ and $x\mapsto y\circ x$ are continuous for all 
$y\in X$, if $S\subset X$ is a semigroup (w.r.t.~$\circ$) then $\overline{S}$ is a 
semigroup, as well. The idea is to first show that $\overline{S}\circ 
S\subseteq\overline{S}$ by means of nets using continuity of right-multiplication. 
Based on this one sees $\overline{S}\circ \overline{S}\subseteq\overline{\overline{S}}
=\overline{S}$ in a similar fashion.

For convexity one first shows that for any two thermal operations $S_1,S_2$ and 
any $\lambda\in(0,1)\cap\mathbb Q$ the convex combination $\lambda S_1+(1-
\lambda)S_2$ again is a thermal operation, cf.~Appendix C in \cite{Lostaglio15}. Indeed let such $\lambda$ as well as $S_i\in\mathsf{TO}(H_S,T)$ generated by $H_{B,i}\in\mathbb C^{m_i\times m_i}$ Hermitian 
and $U_i\in\mathsf U(m_in)$ energy-preserving, respectively, for $i=1,2$ be given. There exist $k,d\in\mathbb N$, $k<d$ such that $
\lambda=\frac{k}{d}$.
We claim that $\lambda S_1+(1-\lambda)S_2=\Phi_{T,m_1m_2d}(H_B,U)$ where
$
H_B=H_{B,1}\otimes\mathbbm{1}\otimes\mathbbm{1}_d+\mathbbm{1}\otimes H_{B,2}\otimes\mathbbm{1}_d
$
as well as
$
U= U_1\otimes\mathbbm1\otimes\Pi+(\mathbbm1\otimes\mathbb F^*\otimes\mathbbm1)(U_2\otimes\mathbbm1\otimes(\mathbbm{1}_d-\Pi))(\mathbbm1\otimes\mathbb F\otimes\mathbbm1)
$. Here $\Pi$ is any orthogonal projection on $\mathbb C^d$ of rank $k$ and $\mathbb F$ is the flip operator from earlier. Using \eqref{eq:partial_trace_two_baths} as well as the fact that $\Pi(1-\Pi)=0$ one directly computes that $U$ is unitary and energy-preserving,
and that
$$
\Phi_{T,m_1m_2d}(H_B,U)=\operatorname{tr}\Big( \frac{e^{-H_{B,2}/T}}{\operatorname{tr}(e^{-H_{B,2}/T})} \Big)\frac{\operatorname{tr}(\Pi)}{d}S_1+\operatorname{tr}\Big( \frac{e^{-H_{B,1}/T}}{\operatorname{tr}(e^{-H_{B,1}/T})} \Big)\frac{\operatorname{tr}(\mathbbm{1}_d-\Pi)}{d}S_2
$$
as desired. This intermediate result will carry over to $\overline{\mathsf{TO}}$ simply by combining it with two approximation arguments. Indeed let $T_1,T_2\in \overline{\mathsf{TO}(H_S,T)}$, $\lambda\in(0,1)$, as well as $\varepsilon>0$ be given. On the one hand there exist thermal operations $S_1,S_2$ with $\|T_i-S_i\|_{1\to1}<\min\{\frac{\varepsilon}{8\lambda},\frac{\varepsilon}{2}\}$ for $i=1,2$ and on the other hand one finds $\mu\in(0,1)\cap\mathbb Q$ with $|\mu-\lambda|<\frac{\varepsilon}{8}$. By our previous considerations we know that $\mu S_1+(1-\mu)S_2$ is a thermal operation which---as we will compute now---is $\varepsilon$-close to $\lambda T_1+(1-\lambda)T_2$. Because $\varepsilon$ is arbitrary this would show $\lambda T_1+(1-\lambda)T_2\in \overline{\mathsf{TO}(H_S,T)}$. Indeed
\begin{align*}
\|(\lambda T_1&+(1-\lambda)T_2)-( \mu S_1+(1-\mu)S_2 )\|_{1\to1}\\
&\leq \|\lambda T_1-\mu S_1\|_{1\to1}+\|T_2-S_2\|_{1\to1}+\|\lambda T_2-\mu S_2\|_{1\to1}\\
&< \|\lambda T_1-\lambda S_1\|_{1\to1}+\|\lambda S_1-\mu S_1\|_{1\to1}+\frac{\varepsilon}{2}+\|\lambda T_2-\lambda S_2\|_{1\to1}+\|\lambda S_2-\mu S_2\|_{1\to1}\\
&< \lambda\cdot\frac{\varepsilon}{8\lambda}+\frac{\varepsilon}{8}\|S_1\|_{1\to1}+\frac{\varepsilon}{2}+\lambda\cdot\frac{\varepsilon}{8\lambda}+\frac{\varepsilon}{8}\|S_2\|_{1\to1}=\varepsilon\,.
\end{align*}
Here we used that $\|S_1\|_{1\to 1},\|S_2\|_{1\to 1}=1$ as stated previously.

(iii): This is implied by energy conservation because $e^{-(H_S\otimes\mathbbm{1}_B+\mathbbm{1}\otimes H_B)/T}=e^{-H_S/T}\otimes e^{-H_B/T}$. Also the subset of all \textsc{cptp} maps which have $e^{-H_S/T}$ as common fixed point is closed so the inclusion continues to hold when replacing $\mathsf{TO}$ by its closure.

(iv): There are two steps to this proof: first we show that the r.h.s.~of \eqref{eq:TO_connected_spectrum} is convex, followed by proving the more important fact that $\mathsf{TO}(H_S,T)$ is a subset of the convex hull of the right-hand side of \eqref{eq:TO_connected_spectrum}. The statement in question then follows from
\begin{align*}
\overline{\mathsf{TO}(H_S,T)}\subseteq \overline{\operatorname{conv}{(\text{r.h.s.~of }\eqref{eq:TO_connected_spectrum})}}&=\operatorname{conv}{\overline{(\text{r.h.s.~of }\eqref{eq:TO_connected_spectrum})}}\\
&=\operatorname{conv}{(\text{r.h.s.~of }\eqref{eq:TO_connected_spectrum})}=\text{r.h.s.~of }\eqref{eq:TO_connected_spectrum}\subseteq \overline{\mathsf{TO}(H_S,T)}\,.
\end{align*}
In the first equality we used that every bounded subset $A$ of a finite-dimensional vector space satisfies $\operatorname{conv}\overline{A}=\overline{\operatorname{conv}A}$.

Step 1: Convexity is proven just like convexity of $\overline{\mathsf{TO}}$ in (ii): first one shows that any rational convex combination is in the set exactly, and for irrational convex coefficients one at least ends up in the closure. The only thing that changes about the proof: one has to show that if $H_{B,1}\in{i}\mathfrak{u}(m_1)$, $H_{B,2}\in{i}\mathfrak{u}(m_2)$ have resonant spectrum w.r.t.~$H_S$, then so does $H_{B,1}\otimes\mathbbm1+\mathbbm1\otimes H_{B,2}$. 
Indeed if $H_{B,i}=\operatorname{diag}((E_1')_i,\ldots,(E'_{m_i})_i)$, $i=1,2$ let us
write out $\sigma(H_{B,1}\otimes\mathbbm1+\mathbbm1\otimes H_{B,2})=\sigma(H_{B,1})+\sigma(H_{B,2})$ in the following way:
\begin{align*}
\big( (E'_{i_1})_{1}+(E'_{i_2})_{2}\big)_{i_1=1, i_2=1}^{m_1,m_2}=\begin{pmatrix}
(E'_1)_{1}+(E'_1)_{2}& (E'_1)_{1}+(E'_2)_{2}&\cdots&(E'_1)_{1}+(E'_{m_2})_{2}\\
(E'_2)_{1}+(E'_1)_{2}&(E'_2)_{1}+(E'_2)_{2}&\ddots&\vdots\\
\vdots&\ddots&\ddots&\vdots\\
(E'_{m_1})_{1}+(E'_1)_{2}&\cdots&\cdots&(E'_{m_1})_{1}+(E'_{m_2})_{2}
\end{pmatrix}
\end{align*}
Now given an arbitrary proper non-empty subset $I$ of $\{1,\ldots,m_1\}\times\{1,\ldots,m_2\}$ there certainly exists either a row or a column in the above matrix which features indices $(i_1,i_2)$ from $I$ as well as indices from the complement $I^{\mathsf{c}}$ of $I$. If we assume w.l.o.g.~that this property is satisfied by the row $k\in\{1,\ldots,m_2\}$, this means that
$$
I_k:=I\cap\{(j,k):j=1,\ldots,m_1\}\neq\emptyset\quad\text{ and }\quad \{(j,k):j=1,\ldots,m_1\}\not\subset I\,.
$$
In particular $I_k$ is a proper non-empty subset of $\{1,\ldots,m_1\}\times\{k\}$ so because $H_{B_1}$ is resonant w.r.t.~$H_S$ there exist $(i_1,k)\in I_k$, $(j_1,k)\in I_k^{\mathsf{c}}$ such that $(E_{i_1}')_1-(E_{j_1}')_1\in\sigma(\operatorname{ad}_{H_S})$. Therefore
$$
\big((E_{i_1}')_1+(E_k')_2\big)-\big((E_{j_1}')_1+(E_k')_2\big)=(E_{i_1}')_1-(E_{j_1}')_1\in\sigma(\operatorname{ad}_{H_S})
$$
which -- because $(i_1,k)\in I$, $(j_1,k)\in I^{\mathsf{c}}$ -- shows that $H_{B,1}\otimes\mathbbm1+\mathbbm1\otimes H_{B,2}$ is resonant with respect to $H_S$ as claimed.


Step 2: By Lemma \ref{lemma_3} we can assume w.l.o.g.~that $H_S$ is diagonal in the standard basis, i.e.~$H_S=\operatorname{diag}(E_1,\ldots,E_n)$. This will make defining certain objects less tedious. Now let $H_B=\operatorname{diag}(E_1',\ldots,E_m')\in{i}\mathfrak u(m)$, $U\in\mathsf U(mn)$ be given such that $H_B$ does not have resonant spectrum w.r.t.~$H_S$ (else we would be done). Hence there exists $I\subsetneq\{1,\ldots,m\}$, $I\neq\emptyset$ such that
$$
\big\{E_i'-E_j':i\in I,j\in\{1,\ldots,m\}\setminus I\big\}\cap\sigma(\operatorname{ad}_{H_S})=\emptyset\,.
$$
We will show that this partition of the index set $\{1,\ldots,m\}$ implies a decomposition of $H_B,U$ into smaller submatrices such that the resulting thermal operations recreate the original map via a convex combination. More precisely we define
$$
H_{B,1}:=\operatorname{diag}(E_i')_{i\in I}\quad,\quad H_{B,2}=\operatorname{diag}(E_j')_{j\in\{1,\ldots,m\}\setminus I}
$$
as well as
$$
U_1:=(\langle e_i,Ue_j\rangle)_{i,j\in I'}\quad,\quad U_2:=(\langle e_i,Ue_j\rangle)_{i,j\in\{1,\ldots,mn\}\setminus I'}
$$
where $I':= \{j+(k-1)m:j\in I,k=1,\ldots,n\} $. We claim that
\begin{equation}\label{eq:proof_lemma_iv_decomp}
\Phi_{T,m}(H_B,U)=\frac{\operatorname{tr}( e^{-H_{B,1}/T} )}{\operatorname{tr}( e^{-H_{B}/T})}\Phi_{T,|I|}(H_{B,1},U_1)+\frac{\operatorname{tr}( e^{-H_{B,2}/T} )}{\operatorname{tr}( e^{-H_{B}/T})}\Phi_{T,m-|I|}(H_{B,2},U_2)\,.
\end{equation}
The easiest way to see this is by decomposing $H_B$ into blocks. Define $\Pi_I:=\sum_{i\in I}|e_i\rangle\langle e_i|$ and note that $[\Pi_I,H_B]=0$. We compute
\begin{align*}
H_B&=(\Pi_I+(\mathbbm1-\Pi_I))H_B(\Pi_I+(\mathbbm1-\Pi_I))\\
&=\Pi_I H_B\Pi_I + [H_B,\Pi_I]\Pi_I+\Pi_I[\Pi_I ,H_B] +(\mathbbm1-\Pi_I )H_B(\mathbbm1-\Pi_I )\\
&=\Pi_I H_B\Pi_I +(\mathbbm1-\Pi_I )H_B(\mathbbm1-\Pi_I )\,.
\end{align*}
This ``block structure'' of $H_B$ carries over to $U$ via energy conservation: Given $a,b\in\{1,\ldots,n\}$ as well as $i\in I$, $j\in\{1,\ldots,m\}\setminus I$ the energy-conservation condition implies
\begin{align*}
0&=\langle e_a\otimes e_i,[U,H_S\otimes\mathbbm1+\mathbbm1\otimes H_B]e_b\otimes e_j\rangle\\
&=\big((E_b-E_a)-(E_i'-E_j')\big)\langle e_a\otimes e_i,U(e_b\otimes e_j)\rangle\,.
\end{align*}
But $E_i'-E_j'\not\in\sigma(\operatorname{ad}_{H_S})$ by assumption while $E_b-E_a\in\sigma(\operatorname{ad}_{H_S})$ meaning the prefactor is non-zero; hence $\langle e_a\otimes e_i,U(e_b\otimes e_j)\rangle=0$. Therefore
\begin{align*}
(\mathbbm1\otimes \Pi_I)U(\mathbbm1-(\mathbbm1\otimes \Pi_I))=\sum_{a,b=1}^n\sum_{\substack{i\in I\\j\in\{1,\ldots,m\}\setminus I}}\langle e_a\otimes e_i,U(e_b\otimes e_j)\rangle|e_a\otimes e_i\rangle\langle e_b\otimes e_j|=0
\end{align*}
and similarly for $(\mathbbm1-(\mathbbm1\otimes \Pi_I))U(\mathbbm1\otimes \Pi_I)$. This -- just as for $H_B$ -- yields the block-decomposition 
$
U
=\Pi_{I'}U\Pi_{I'}+(\mathbbm1-\Pi_{I'})U(\mathbbm1-\Pi_{I'})
$
where
\begin{align*}
\Pi_{I'}:=\sum_{i\in I'}|e_i\rangle\langle e_i|&=\sum_{j\in I}\sum_{k=1}^n|e_{j+(k-1)m}\rangle\langle e_{j+(k-1)m}|\\
&=\sum_{j\in I}\sum_{k=1}^n|e_k\otimes e_j\rangle\langle e_k\otimes e_j|\\
&=\Big(\sum_{k=1}^n|e_k\rangle\langle e_k|\Big)\otimes\Big(\sum_{j\in I}|e_j\rangle\langle e_j|\Big)=\mathbbm1\otimes\Pi_I\,.
\end{align*}
Inserting this decomposition of $U,H_B$ into the definition of the associated thermal operation yields
\begin{align*}
\Phi_{T,m}(H_B,U)&= \operatorname{tr}_B\circ\operatorname{Ad}_{\Pi_{I'}U\Pi_{I'}+(\mathbbm1-\Pi_{I'})U(\mathbbm1-\Pi_{I'})}\circ \Big((\cdot)\otimes \tfrac{\Pi_Ie^{-H_B/T}\Pi_I+(\mathbbm1-\Pi_I)e^{-H_B/T}(\mathbbm1-\Pi_I)}{\operatorname{tr}(e^{-H_B/T})}\Big)\\
&= \operatorname{tr}_B\circ\operatorname{Ad}_{\Pi_{I'}U\Pi_{I'}}\circ \Big((\cdot)\otimes \tfrac{\Pi_I e^{-H_B/T}\Pi_I }{\operatorname{tr}(e^{-H_B/T})}\Big)\\
&+ \operatorname{tr}_B\circ\operatorname{Ad}_{(\mathbbm1-\Pi_{I'})U(\mathbbm1-\Pi_{I'})}\circ \Big((\cdot)\otimes \tfrac{(\mathbbm1-\Pi_I )e^{-H_B/T}(\mathbbm1-\Pi_I )}{\operatorname{tr}(e^{-H_B/T})}\Big)\\
&=\tfrac{\operatorname{tr}(\Pi_I e^{-H_B/T}\Pi_I )}{\operatorname{tr}(e^{-H_B/T})} \operatorname{tr}_B\circ\operatorname{Ad}_{\Pi_{I'}U\Pi_{I'}}\circ \Big((\cdot)\otimes \tfrac{\Pi_I e^{-H_B/T}\Pi_I }{\operatorname{tr}(\Pi_I e^{-H_B/T}\Pi_I )}\Big)\\
&+(1-\tfrac{\operatorname{tr}(\Pi_I e^{-H_B/T}\Pi_I )}{\operatorname{tr}(e^{-H_B/T})} )\operatorname{tr}_B\circ\operatorname{Ad}_{(\mathbbm1-\Pi_{I'})U(\mathbbm1-\Pi_{I'})}\circ \Big((\cdot)\otimes \tfrac{(\mathbbm1-\Pi_I )e^{-H_B/T}(\mathbbm1-\Pi_I )}{\operatorname{tr}((\mathbbm1-\Pi_I )e^{-H_B/T}(\mathbbm1-\Pi_I ))}\Big)
\end{align*}
where we used again that $\Pi_{I'}(\mathbbm1\otimes(\mathbbm1-\Pi_I))=0=(\mathbbm1-\Pi_{I'})(\mathbbm1\otimes\Pi_I)$. But now the second-to-last channel (without the pre-factor) is indistinguishable from $\Phi_{T,|I|}(H_{B,1},U_1)$, and the same holds for the channel in the last line and $\Phi_{T,m-|I|}(H_{B,2},U_2)$. The reason for this is that $\Pi_Ie^{-H_B/T}\Pi_I$ and $e^{-H_{B,1}/T}$ as well as $\Pi_I'U\Pi_I'$ and $U_1$ have the same non-zero entries in the same ``order'', hence everything else in $\Pi_Ie^{-H_B/T}\Pi_I$, $\Pi_I'U\Pi_I'$ can be disregarded without changing the map. More precisely one may use the decompositions
$$
H_{B,1}=\operatorname{diag}(\langle e_i,H_Be_i\rangle)_{i\in I}=\sum_{j=1}^{|I|}\langle e_{\iota(j)},H_Be_{\iota(j)}\rangle|e_j\rangle\langle e_j|
$$
and
$$
U_1=\sum_{a,b=1}^n\sum_{i,j=1}^{|I|}\langle e_a\otimes e_{\iota(i)},U(e_b\otimes e_{\iota(j)}\rangle|e_a\rangle\langle e_b|\otimes|e_i\rangle\langle e_j|
$$
when enumerating $I=\{\iota(1),\ldots,\iota(|I|)\}$, $\iota(1)<\ldots<\iota(|I|)$ so $\iota:\{1,\ldots,|I|\}\to I$ is bijective and order-preserving. The same is done for $H_{B,2},U_2$. In total this proves \eqref{eq:proof_lemma_iv_decomp} by means of a direct computation.
The proof is concluded by the observation that $U_1,U_2$ are unitary because $\Pi_{I'}U\Pi_{I'}$, $(\mathbbm1-\Pi_{I'})U(\mathbbm1-\Pi_{I'})$ are partial isometries due to the ``block-form'' of $U$, as well as the fact that $\Pi_{I'}[U,H_S\otimes\mathbbm1+\mathbbm1\otimes H_B]\Pi_{I'}=[\Pi_{I'}U\Pi_{I'},H_S\otimes\mathbbm1+\mathbbm1\otimes H_B]$ has the same non-zero entries as $[U_1,H_S\otimes\mathbbm1+\mathbbm1\otimes H_{B,1}]$; but the former is zero due to energy conservation meaning $U_1$ is energy-conserving w.r.t.~$H_{B,1}$ (similarly for $U_2$, $H_{B,2}$).\qed
\section{Proof of Corollary \ref{prop_to_spin}}\label{app_proof_prop_to_spin}
W.l.o.g.~$H_S=\bigoplus_{j=0}^{n-1}\,(E_1+j\Delta E)\,\mathbbm1_{\alpha_j}$ for some $E_1\in\mathbb R$, $\Delta E>0$, and $\alpha_0,n\in\mathbb N$, $\alpha_1,\ldots,\alpha_{n-1}\in\mathbb N_0$; the rest is just a basis change which Lemma \ref{lemma_3} takes care of.

All we have to prove is identity \eqref{eq:equiv_TO_spin} for $T\in(0,\infty)$ (as the case $T=\infty$ is done analogously) because the second statement of the proposition is a special case of Proposition \ref{prop_1} (iv): if something has resonant spectrum w.r.t.~a spin Hamiltonian (with gaps) then it has to be of the same ``spin form''. Indeed if $H_B\in{i}\mathfrak u(m)$ is resonant w.r.t.~the above $H_S$ then the difference of any pair of eigenvalues of $H_B$ is a multiple of $\Delta E$ so there exist $\alpha_1,\ldots,\alpha_m\in\mathbb N_0$, $c\in\mathbb R$ such that $\sigma(H_B)=\{\alpha_j\cdot\Delta E:j=1,\ldots,m\}+c$ (the global shift $c$ can be disregarded as such a shift does not change the corresponding thermal operation). We will show this via contraposition: Assume that $H_B\in{i}\mathfrak u(m)$ has two eigenvalues $E_a',E_b'$ the difference of which is not a multiple of $\Delta E$. Then the set $I:=\{1,\ldots,m:\exists_{\alpha\in\mathbb Z}\ E_i'-E_b'=\alpha\Delta E\}$ is a proper ($a\not\in I$) non-empty ($b\in I$) subset of $\{1,\ldots,m\}$. But by definition of $I$, for all $i\in I$, $j\in\{1,\ldots,m\}\setminus I$ one finds that
$$
E_i'-E_j'=\underbrace{E_i'-E_b'}_{\text{multiple of }\Delta E}+\underbrace{E_b'-E_j'}_{\text{not a multiple of }\Delta E}
$$
is not a multiple of $\Delta E$. Hence $E_i'-E_j'$ cannot be an element of $\sigma(\operatorname{ad}_{H_S})$ which shows that $H_B$ is not resonant w.r.t.~$H_S$.

As for the first equation in \eqref{eq:equiv_TO_spin}: while ``$\subseteq$'' is obvious, for ``$\supseteq$'' we have to show that it is possible to approximate any thermal operation with bath Hamiltonian $H_B=\bigoplus_{j=1}^m j\Delta E\,\mathbbm1_{\beta_j}$ (where some of the $\beta_j$ can be zero) using a Hamiltonian where $\beta_j\geq 1$ for all $j$. Given arbitrary $\alpha\in\mathbb N$ define $J:=\{j\in\{1,\ldots,m\}:\beta_j=0\}$ and 
$H_{B,\alpha}':=\underline{\tau_\alpha}((\bigoplus_{k=1}^\alpha H_B)\oplus\operatorname{diag}(j\Delta E)_{j\in J})\underline{\tau_\alpha^{-1}}$ where $\tau_\alpha$ is any permutation\footnote{
Given some permutation ${\pi}\in S_n$ the corresponding permutation matrix \unexpanded{$\underline{\pi}$} is 
given by \unexpanded{$\sum_{i=1}^n |e_i\rangle\langle e_{{\pi}(i)}|$}. In particular the identities
\unexpanded{$(\underline{{\pi}}x)_j=x_{{\pi}(j)}$} and \unexpanded{$\underline{{\pi}\circ\tau}=\underline{\tau}\cdot\underline{{\pi}}$} hold for all $\pi,\tau\in S_n$, $x\in\mathbb C^n$, $j\in\{1,\ldots,n\}$.
}
such that the diagonal of $H_{B,\alpha}'$ is sorted increasingly; thus $H_{B,\alpha}'$ is of the required form. The total Hamiltonian will be decomposed into $n\times n$ blocks\footnote{
Given $m,n\in\mathbb N$, $A\in\mathbb C^{nm\times nm}$ and any orthonormal basis $(g_j)_{j=1}^n$ of $\mathbb C^n$ one can decompose \unexpanded{$A =\sum_{i,j=1}^n|g_i\rangle\langle g_j|\otimes A_{ij}$} where \unexpanded{$A_{ij}:=\operatorname{tr}_1((|g_j\rangle\langle g_i|\otimes\mathbbm1)A)$} for all \unexpanded{$i,j=1,\ldots,n$}. 
}
of equal size $(\sum_{j=1}^m\beta_j)\times (\sum_{j=1}^m\beta_j)$, that is, $H_\mathsf{tot}=\sum_{i,j=1}^n |e_i\rangle\langle e_j|\otimes (H_\mathsf{tot})_{ij}$. With this we define
$$
H_{\mathsf{tot},\alpha}':=\sum_{i,j=1}^n |e_i\rangle\langle e_j|\otimes \underline{\tau_\alpha}\Big(\Big(\bigoplus_{k=1}^\alpha (H_\mathsf{tot})_{ij}\Big)\oplus 0_{|J|}\Big)\underline{\tau_\alpha^{-1}}
$$
and claim that the thermal operations generated by $H_B,H_\mathsf{tot}$ and by $H_{B,\alpha}',H_{\mathsf{tot},\alpha}'$, respectively, coincide in the limit $\alpha\to\infty$. To see that the latter actually generates a thermal operation one readily verifies
\begin{align*}
\big([H'_{\mathsf{tot},\alpha},H_S\otimes\mathbbm{1}+\mathbbm{1}\otimes H_{B,\alpha}']\big)_{ij}
&=\underline{\tau_\alpha}\Big(\bigoplus_{k=1}^\alpha([H_\mathsf{tot},H_S\otimes\mathbbm{1}+\mathbbm{1}\otimes H_B])_{ij}\oplus 0_{|J|}\Big)\underline{\tau_\alpha^{-1}} 
\end{align*}
for all $i,j=1,\ldots,n$ so $[H'_{\mathsf{tot},\alpha},H_S\otimes\mathbbm{1}+\mathbbm{1}\otimes H_{B,\alpha}']=0$ is equivalent to $[H_\mathsf{tot},H_S\otimes\mathbbm{1}+\mathbbm{1}\otimes H_B]=0$. Moreover
\begin{align*}
S_\alpha:=& \operatorname{tr}_{B'}\Big(e^{iH'_{\mathsf{tot},\alpha}}\Big((\cdot)\otimes \frac{e^{-H_{B,\alpha}'/T}}{\operatorname{tr}(e^{-H_{B,\alpha}'/T})}\Big)e^{-iH'_{\mathsf{tot},\alpha}}\Big) \\
=& \frac{\sum_{i,j,k,l=1}^n \operatorname{tr}_{B'}( (|e_i\rangle\langle e_j|\otimes (e^{iH'_{\mathsf{tot},\alpha}})_{ij})((\cdot)\otimes e^{-H_{B,\alpha}'/T})(|e_k\rangle\langle e_l|\otimes(e^{iH'_{\mathsf{tot},\alpha}})_{kl}) ) }{\alpha(\sum_{j=1}^m e^{-j\Delta E/T}\beta_j)+\sum_{j\in J}e^{-j\Delta E/T}}\\
=& \frac{\sum_{i,j,k,l=1}^n \langle e_j,(\cdot)e_k\rangle\operatorname{tr}( (e^{iH'_{\mathsf{tot},\alpha}})_{ij} e^{-H_{B,\alpha}'/T})(e^{iH'_{\mathsf{tot},\alpha}})_{kl})|e_i\rangle\langle e_l| }{\alpha(\sum_{j=1}^m e^{-j\Delta E/T}\beta_j)+\sum_{j\in J}e^{-j\Delta E/T}}\\
=& \frac{\sum_{i,j,k,l=1}^n \langle e_j,(\cdot)e_k\rangle(\alpha\operatorname{tr}( (e^{iH_\mathsf{tot}})_{ij} e^{-H_B/T} (e^{-iH_\mathsf{tot}})_{kl} +\sum_{j\in J}e^{-j\Delta E/T} )|e_i\rangle\langle e_l| }{\alpha(\sum_{j=1}^m e^{-j\Delta E/T}\beta_j)+\sum_{j\in J}e^{-j\Delta E/T}}
\end{align*}
which in the limit $\alpha\to\infty$ yields
\begin{align*}
 \lim_{\alpha\to\infty}S_\alpha&=\frac{\sum_{i,j,k,l=1}^n \langle e_j,(\cdot)e_k\rangle\operatorname{tr}( (e^{iH_\mathsf{tot}})_{ij} e^{-H_B/T} (e^{-iH_\mathsf{tot}})_{kl} ) |e_i\rangle\langle e_l| }{\sum_{j=1}^m e^{-j\Delta E/T}\beta_j}\\
 =\ldots&= \operatorname{tr}_{B}\Big(e^{iH_{\mathsf{tot}}}\Big((\cdot)\otimes \frac{e^{-H_{B}/T}}{\operatorname{tr}(e^{-H_{B}/T})}\Big)e^{-iH_{\mathsf{tot}}}\Big)
\end{align*}
as claimed. Here we used that $(e^{iH'_{\mathsf{tot},\alpha}})_{ij}=\underline{\tau_\alpha}((\bigoplus_{k=1}^\alpha (e^{iH_\mathsf{tot}})_{ij})\oplus \mathbbm1_{|J|})\underline{\tau_\alpha^{-1}}$ which follows from the identity $ (H'_{\mathsf{tot},\alpha})^l_{ij}=\underline{\tau_\alpha}((\bigoplus_{k=1}^\alpha (H_\mathsf{tot})^l_{ij})\oplus 0^l_{|J|})\underline{\tau_\alpha^{-1}} $ for all $l\in\mathbb N_0$. The latter is readily verified by means of the block structure of $H'_{\mathsf{tot},\alpha}$.\qed

\section{Attempted Proof of Conjecture \ref{prop_cont_in_T}}\label{proof_att_conj_cont_T}
It would suffice to find a map $c:(0,\infty]\times(0,\infty]\to[0,\infty)$ such that the following holds:
\begin{itemize}
\item $c(T,T)=0$ for all $T\in(0,\infty]$
\item $c$ is continuous with respect to $d_{-1}$
\item Given any $\varepsilon>0$, $T,T'\in(0,\infty]$ there for all $S\in\overline{\mathsf{TO}(H_S,T)}$ exists $S'\in\overline{\mathsf{TO}(H_S,T')}$ such that
$
\|S-S'\|_{1\to 1}< \varepsilon+ 
c(T,T')
$
(and vice versa)
\end{itemize}
Then by definition of the Hausdorff metric
$
\delta\big(\overline{\mathsf{TO}(H_S,T)},\overline{\mathsf{TO}(H_S,T')}\big)\leq c(T,T')
$
which for all $T\in(0,\infty]$ would imply
$$
\delta\big(\overline{\mathsf{TO}(H_S,T)},\overline{\mathsf{TO}(H_S,T')}\big)\to 0\qquad\text{ as }\qquad d_{-1}(T,T')\to 0
$$
as desired.
Now given $\varepsilon>0$, $S\in\overline{\mathsf{TO}(H_S,T)}$ there exist $H_B\in{i}\mathfrak{u}(m)$ and $U\in\mathsf{U}(mn)$ such that
$$
\Big\|S- \operatorname{tr}_B\Big(U\Big((\cdot)\otimes \frac{e^{-H_B/T}}{\operatorname{tr}(e^{-H_B/T})}\Big)U^*\Big)\Big\|_{1\to 1}<\varepsilon
$$
where $U$ satisfies $U(H_S\otimes\mathbbm{1}_B+\mathbbm{1}\otimes H_B)U^*=H_S\otimes\mathbbm{1}_B+\mathbbm{1}\otimes H_B $.
The simplest way of picking an element in $ \mathsf{TO}(H_S,T')$ which is ``close to'' this approximation of $S$ is to define the channel $S':= \operatorname{tr}_B(U((\cdot)\otimes \frac{e^{-H_B/T'}}{\operatorname{tr}(e^{-H_B/T'})})U^*)$.
This yields the estimate
\begin{align}
\|S-S'\|_{1\to 1}&\leq \Big\|S- \operatorname{tr}_B\Big(U\Big((\cdot)\otimes \frac{e^{-H_B/T}}{\operatorname{tr}(e^{-H_B/T})}\Big)U^*\Big)\Big\|_{1\to 1}\notag\\
&+\Big\| \operatorname{tr}_B\Big(U\Big((\cdot)\otimes \frac{e^{-H_B/T}}{\operatorname{tr}(e^{-H_B/T})}\Big)U^*\Big)-\operatorname{tr}_B\Big(U\Big((\cdot)\otimes \frac{e^{-H_B/T'}}{\operatorname{tr}(e^{-H_B/T'})}\Big)U^*\Big)\Big\|_{1\to 1}\notag\\
&<\varepsilon+\Big\| \operatorname{tr}_B\Big(U\Big((\cdot)\otimes \Big(\frac{e^{-H_B/T}}{\operatorname{tr}(e^{-H_B/T})}-\frac{e^{-H_B/T'}}{\operatorname{tr}(e^{-H_B/T'})}\Big)U^*\Big)\Big\|_{1\to 1}\notag\\
&<\varepsilon+\|\operatorname{tr}_B\|_{1\to 1}\|\operatorname{Ad}_U\|_{1\to 1}\Big\|\frac{e^{-H_B/T}}{\operatorname{tr}(e^{-H_B/T})}-\frac{e^{-H_B/T'}}{\operatorname{tr}(e^{-H_B/T'})}\Big\|_1\label{eq:S_Sprime_final_estimate}
\,.
\end{align}
Keeping in mind that
$\operatorname{tr}_B$, $\operatorname{Ad}_U$ have operator norm one (w.r.t.~the trace norm) because they are (completely) positive and trace-preserving (Thm.~2.1 in \cite{PG06})
it seems reasonable to consider
\begin{equation}\label{eq:def_c_map}
\begin{split}
c:(0,\infty]\times(0,\infty]&\to[0,\infty)\\
(T,T')&\mapsto\sup_{m\in\mathbb N}\sup_{H_B\in{i}\mathfrak u(m)}\Big\|\frac{e^{-H_B/T}}{\operatorname{tr}(e^{-H_B/T})}-\frac{e^{-H_B/T'}}{\operatorname{tr}(e^{-H_B/T'})}\Big\|_1
\end{split}
\end{equation}
for an upper bound.
In other words the above estimate \textit{would} reduce the problem of continuity of the thermal operations to continuity of certain Gibbs states in the temperature. 
However \eqref{eq:def_c_map} already looks unsuited for the task as it does not feature the system's Hamiltonian anymore.
Indeed as soon as $T,T'$ do not coincide then $c(T,T')$ takes the largest possible value:
\begin{lemma}\label{lemma_c_max}
For all $T,T'\in(0,\infty]$ one has $c(T,T')=2\delta_{T,T'}$.
\end{lemma}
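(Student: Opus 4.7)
The plan is to split the statement into an easy upper bound and the substantive lower bound. The upper bound $c(T,T')\leq 2$ holds universally: for any $H_B\in{i}\mathfrak u(m)$ the two Gibbs states $\pi_T:=e^{-H_B/T}/\operatorname{tr}(e^{-H_B/T})$ and $\pi_{T'}$ are density operators, so by the triangle inequality $\|\pi_T-\pi_{T'}\|_1\leq\|\pi_T\|_1+\|\pi_{T'}\|_1=2$. The case $T=T'$ is trivial since then every summand in \eqref{eq:def_c_map} vanishes. What remains, therefore, is to exhibit for each pair $T\neq T'$ a sequence of bath Hamiltonians whose Gibbs states become (nearly) supported on orthogonal subspaces, since equality in the above triangle inequality is attained precisely when $\pi_T\pi_{T'}=0$.

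For the construction, without loss of generality $T>T'$ (allowing $T=\infty$, in which case $\pi_\infty=\mathbbm1_m/m$ corresponds formally to $1/T=0$). The idea is to use two-level baths with degeneracy: define $H_B^{(n)}=\operatorname{diag}(0,n,n,\ldots,n)\in{i}\mathfrak u(1+j_n)$ where $j_n:=\lfloor e^{n\alpha}\rfloor$ for any fixed $\alpha\in(1/T,1/T')$. Then the low-energy population satisfies
\begin{equation*}
\langle e_1,\pi_T^{(n)} e_1\rangle=\frac{1}{1+j_ne^{-n/T}}=\frac{1}{1+e^{n(\alpha-1/T)}}\xrightarrow{n\to\infty}0,
\end{equation*}
so $\pi_T^{(n)}$ concentrates on the $j_n$-dimensional high-energy block; by contrast
\begin{equation*}
\langle e_1,\pi_{T'}^{(n)} e_1\rangle=\frac{1}{1+e^{n(\alpha-1/T')}}\xrightarrow{n\to\infty}1,
\end{equation*}
so $\pi_{T'}^{(n)}$ concentrates on $|e_1\rangle\langle e_1|$. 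Because the asymptotic supports are orthogonal, a direct evaluation of $\|\pi_T^{(n)}-\pi_{T'}^{(n)}\|_1$ (summing the four types of diagonal entries) yields a value tending to $2$, as desired.

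The verification is elementary bookkeeping once one has the right $\alpha$, and the proof concludes by combining the two bounds. The only obstacle I foresee is cosmetic: one must treat $T=\infty$ (where $\pi_\infty^{(n)}=\mathbbm1/(1+j_n)$) in parallel with the finite case, and one must check that the interval $(1/T,1/T')$ from which $\alpha$ is drawn is nonempty in all the sub-cases (finite vs.\ infinite, $T\lessgtr T'$), which is immediate once one agrees to swap $T\leftrightarrow T'$ so as to have $1/T<1/T'$.
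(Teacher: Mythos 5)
Your proposal is correct and is essentially the paper's proof: the paper also takes a two-level bath $H_B = 0\oplus E\,\mathbbm 1_k$ with degeneracy $k=\lfloor e^{E/2T}e^{E/2T'}\rfloor$ growing at the rate $e^{E\alpha}$ for $\alpha=\tfrac12(1/T+1/T')$, which is just a particular choice of your $\alpha\in(1/T,1/T')$. The only cosmetic difference is that you keep $\alpha$ general (and present the limit via the orthogonality of the asymptotic supports) while the paper fixes the arithmetic mean and carries out the $1$-norm computation explicitly.
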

\begin{proof}
W.l.o.g.~$T<T'$ so one has $e^{-E/T}<e^{-E/T'}$ for all $E>0$. 
Now given any $E>0$ define the Hamiltonian
$
H_B(E):=0\oplus (E\cdot\mathbbm1_{\lfloor e^{E/2T}e^{E/2T'}\rfloor})\in{i}\mathfrak u( 1+\lfloor e^{E/2T}e^{E/2T'}\rfloor )
$.
We claim that
\begin{equation}\label{eq:conv_energy}
\lim_{E\to\infty}\Big(2-\Big\| \frac{e^{-H_B(E)/T}}{\operatorname{tr}(e^{-H_B(E)/T})}-\frac{e^{-H_B(E)/T'}}{\operatorname{tr}(e^{-H_B(E)/T'})} \Big\|_1\Big)=0\,.
\end{equation}
Indeed a straightforward computation shows
\begin{align*}
2-\Big\| \frac{e^{-H_B/T}}{\operatorname{tr}(e^{-H_B/T})}-\frac{e^{-H_B/T'}}{\operatorname{tr}(e^{-H_B/T'})} \Big\|_1&=2\Big(1- \frac{\lfloor e^{E/2T}e^{E/2T'}\rfloor(e^{-E/T'}-e^{-E/T})}{(1+\lfloor e^{E/2T}e^{E/2T'}\rfloor e^{-E/T})(1+\lfloor e^{E/2T}e^{E/2T'}\rfloor e^{-E/T'})} \Big)\\
&=2\Big( \frac{1}{1+\lfloor e^{E/2T}e^{E/2T'}\rfloor e^{-E/T'}}+\frac{1}{\lfloor e^{E/2T}e^{E/2T'}\rfloor^{-1}e^{E/T}+1} \Big)\\
&\leq 2\Big(\frac{1}{1+e^{E/2T}e^{-E/2T'}-e^{-E/T'}}+\frac{1}{e^{E/2T}e^{-E/2T'}+1} \Big)\,.
\end{align*}
But $e^{E/2T}e^{-E/2T'}=e^{\frac{E}2(\frac1T-\frac1{T'})}\to\infty$ as $E\to\infty$ because $\frac1T-\frac1{T'}>0$ by assumption. Moreover the expression in \eqref{eq:conv_energy} is non-negative (by the triangle inequality) so because the upper bound we found vanishes as $E\to\infty$, \eqref{eq:conv_energy} holds. This concludes the proof.
\end{proof}
There are two ways out of this dilemma: On the one hand one could restrict the supremum in \eqref{eq:def_c_map} to a smaller generating set
of the thermal operations (resp.~its closure), for example ${\bf H}:=\bigcup_{m\in\mathbb N}\{H_B\in{i}\mathfrak u(m): H_B\text{ is resonant w.r.t.~}H_S \}$.
This would invalidate the current proof of Lemma \ref{lemma_c_max}, the key of which was to let the gaps between neighboring eigenvalues of $H_B$ become arbitrarily large -- but the resonance condition prohibits this. However, modifying $c(T,T')$ to be $\sup_{H_B\in{\bf H}}\|\frac{e^{-H_B/T}}{\operatorname{tr}(e^{-H_B/T})}-\frac{e^{-H_B/T'}}{\operatorname{tr}(e^{-H_B/T'})}\|_1$ -- while seeming more suited to be an upper bound as $H_S$ now appears at least implicitly in $c$ -- does make it more difficult to study $c$ due to the more complicated structure of $\bf H$.


On the other hand \eqref{eq:S_Sprime_final_estimate} might be too poor an estimate 
for studying continuity of $T\mapsto\overline{ \mathsf{TO}(H,T) }$: Although $T$ only 
ever appears in the Gibbs state it may be that separating the fundamental building 
blocks of the thermal operations -- as done in \eqref{eq:S_Sprime_final_estimate} -- 
loses too much of its structure, even if one is only interested in the effect of the 
temperature.
Either way it seems that proving Conjecture \ref{prop_cont_in_T}---if true at all---requires a more careful analysis of the effect which changing the temperature can have on the set of thermal operations.
\section{Proof of Theorem \ref{thm_equiv_qubit}}\label{app_proof_thm_equiv_qubit}
The following lemma which is indispensable for qubit computations is verified directly:
\begin{lemma}\label{lemma_degenerate_spin_TO}
Let $T\in(0,\infty]$, $\Delta E>0$, $m\in\mathbb N$, $\alpha_0,\ldots,\alpha_{m-1}\in\mathbb N$, as well as unitaries $U_0\in\mathbb C^{\alpha_0\times\alpha_0}$, $U_m\in\mathbb C^{\alpha_{m-1}\times\alpha_{m-1}}$, and $U_j\in\mathbb C^{( \alpha_{j-1}+\alpha_{j} )\times(\alpha_{j-1}+\alpha_{j} )}$, $j=1,\ldots,m-1$ be given. Decompose
\begin{equation}\label{eq:U_j_unitary}
U_j=\begin{pmatrix} A_j&B_j\\C_j&D_j \end{pmatrix}
\end{equation}
with $A_j\in\mathbb C^{\alpha_j\times\alpha_j}, B_j\in\mathbb C^{\alpha_j\times\alpha_{j-1}},C_j\in\mathbb C^{\alpha_{j-1}\times\alpha_j},D_j\in\mathbb C^{\alpha_{j-1}\times\alpha_{j-1}}$ for all $j=1,\ldots,m-1$. Defining $H_B:=\bigoplus_{j=0}^{m-1}j\Delta E\cdot\mathbbm1_{\alpha_j}$,
$$
U:=\begin{pmatrix} U_0&0&0&0\\0&\bigoplus_{j=1}^{m-1}A_j&\bigoplus_{j=1}^{m-1}B_j&0\\0&\bigoplus_{j=1}^{m-1}C_j&\bigoplus_{j=1}^{m-1}D_j&0\\0&0&0&U_m \end{pmatrix}\in\mathbb C^{( 2\alpha_0+\ldots+2\alpha_{m-1} )\times( 2\alpha_0+\ldots+2\alpha_{m-1} )}\,,
$$
and $S_U:=\operatorname{tr}_B(U((\cdot)\otimes\frac{e^{-H_B/T}}{\operatorname{tr}(e^{-H_B/T})})U^*)$ (resp.~$S_U:=\operatorname{tr}_B(U((\cdot)\otimes\frac{\mathbbm1}{\sum_{j=0}^{m-1}\alpha_j})U^*)$ if $T=\infty$) one finds that $U$ is unitary, $S_U\in\mathsf{TO}(\operatorname{diag}(0,1),T)$, and the Choi matrix of $S_U$ reads
\begin{equation}\label{eq:Choi_S_U}
\begin{pmatrix} 1-\lambda e^{-\Delta E/T}&0&0&c\\0&\lambda e^{-\Delta E/T}&0&0\\0&0&\lambda&0\\c^*&0&0&1-\lambda \end{pmatrix}
\end{equation}
with
\begin{align*}
\lambda&= \frac{\sum_{j=0}^{m-2}\operatorname{tr}(B_{j+1}B_{j+1}^*)e^{-j\Delta E/T}}{\sum_{j=0}^{m-1}\alpha_j e^{-j\Delta E/T}}\\
c&= \frac{\operatorname{tr}(U_0D_1^*)+\sum_{j=1}^{m-2}\operatorname{tr}(A_jD_{j+1}^*)e^{-j\Delta E/T}+\operatorname{tr}(A_{m-1}U_m^*)e^{-(m-1)\Delta E/T}}{\sum_{j=0}^{m-1}\alpha_j e^{-j\Delta E/T}}\,.
\end{align*}
If $T=\infty$, then Eq.~\eqref{eq:Choi_S_U} and the succeeding formulae continue to hold if $e^{-\Delta E/T}$ is replaced by $1$.
\end{lemma}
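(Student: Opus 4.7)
The plan is to verify the three claims---unitarity of $U$, membership of $S_U$ in $\mathsf{TO}(\operatorname{diag}(0,1),T)$, and the displayed Choi form---by exploiting a single organising principle: the block structure of $U$ encodes precisely the decomposition of $\mathbb C^2\otimes\mathbb C^n$ (with $n:=\sum_{j=0}^{m-1}\alpha_j$) into eigenspaces of $H_\mathsf{tot}:=H_S\otimes\mathbbm{1}_B+\mathbbm{1}\otimes H_B$. Writing $V_j\subseteq\mathbb C^n$ for the $j\Delta E$-eigenspace of $H_B$ (so $\dim V_j=\alpha_j$), these eigenspaces are $W_0:=|e_1\rangle\otimes V_0$, $W_j:=(|e_1\rangle\otimes V_j)\oplus(|e_2\rangle\otimes V_{j-1})$ for $1\leq j\leq m-1$, and $W_m:=|e_2\rangle\otimes V_{m-1}$. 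After a suitable permutation of basis vectors, the displayed $4\times 4$ block form of $U$ is nothing but $U_0\oplus U_1\oplus\ldots\oplus U_m$ with respect to the decomposition $W_0\oplus W_1\oplus\ldots\oplus W_m$; unitarity of each $U_j$ therefore gives unitarity of $U$, and block-diagonality relative to the eigenspaces of $H_\mathsf{tot}$ yields $[U,H_\mathsf{tot}]=0$ and hence $S_U\in\mathsf{TO}(\operatorname{diag}(0,1),T)$.

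For the Choi matrix I would fix an orthonormal basis $\{f^{(j)}_k\}_{k=1}^{\alpha_j}$ of each $V_j$ and expand
\[
S_U(|e_a\rangle\langle e_b|)=\frac{1}{Z}\sum_{j=0}^{m-1}e^{-j\Delta E/T}\sum_{k=1}^{\alpha_j}\operatorname{tr}_B\bigl(U(|e_a\rangle\langle e_b|\otimes|f^{(j)}_k\rangle\langle f^{(j)}_k|)U^*\bigr)
\]
with $Z:=\sum_{j=0}^{m-1}\alpha_je^{-j\Delta E/T}$. The action of $U$ on $|e_a\otimes f^{(j)}_k\rangle$ is read off from whichever of $U_0,U_j,U_{j+1},U_m$ applies; e.g.~for $1\leq j\leq m-2$ one has $U|e_1\otimes f^{(j)}_k\rangle=|e_1\rangle\otimes A_jf^{(j)}_k+|e_2\rangle\otimes C_jf^{(j)}_k$ and $U|e_2\otimes f^{(j)}_k\rangle=|e_1\rangle\otimes B_{j+1}f^{(j)}_k+|e_2\rangle\otimes D_{j+1}f^{(j)}_k$, while the boundary cases $j=0,m-1$ feature $U_0,U_m$ in place of the relevant blocks. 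After tracing out the bath, each matrix element of $S_U(|e_a\rangle\langle e_b|)$ becomes a sum of traces of rank-one operators $|u\rangle\langle v|$ on $\mathbb C^n$, and the key observation is that $\operatorname{tr}(|u\rangle\langle v|)=\langle v,u\rangle=0$ whenever $u\in V_i, v\in V_{i'}$ with $i\neq i'$, since the $V_j$ are mutually orthogonal eigenspaces of $H_B$. This kills every off-diagonal Choi entry except in positions $(1,4)$ and $(4,1)$.

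The surviving traces collapse, via the unitarity identities $\operatorname{tr}(B_jB_j^*)=\operatorname{tr}(C_jC_j^*)=\alpha_j-\operatorname{tr}(A_jA_j^*)$, to the stated value of $\lambda$; while the off-diagonal coefficient $c$ at position $(1,4)$ assembles from the boundary term $\operatorname{tr}(U_0D_1^*)$ coming from $j=0$, the bulk terms $\operatorname{tr}(A_jD_{j+1}^*)e^{-j\Delta E/T}$ for $1\leq j\leq m-2$, and the boundary term $\operatorname{tr}(A_{m-1}U_m^*)e^{-(m-1)\Delta E/T}$ from $j=m-1$, all divided by $Z$. Hermiticity-preservation of $S_U$ supplies the $(4,1)$-entry $c^*$, and the remaining diagonal entries either follow by direct computation or from the fact that $S_U$ preserves $e^{-H_S/T}$ by Proposition~\ref{prop_1}(iii); the case $T=\infty$ is immediate from the same computation under the replacement $e^{-j\Delta E/T}\mapsto 1$. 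The main obstacle will be notational rather than conceptual---one must carefully separate the boundary indices $j=0,m-1$ from the bulk range and keep track of which block of $U_j$ sits on which subspace---but once the basis is well organised the entire proof reduces to orthogonality of the $V_j$ together with the unitarity relations on each $U_j$.
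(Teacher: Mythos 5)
Your proposal is correct and is exactly the ``direct verification'' the paper alludes to---the paper itself only states that Lemma~\ref{lemma_degenerate_spin_TO} ``is verified directly'' and gives no further detail, so your write-up fills in precisely the intended computation. The organizing idea is right: after interleaving the two system sectors, the displayed $4\times4$ block form of $U$ is $U_0\oplus U_1\oplus\ldots\oplus U_m$ relative to the eigenspace decomposition $W_0\oplus\ldots\oplus W_m$ of $H_S\otimes\mathbbm{1}+\mathbbm{1}\otimes H_B$, which simultaneously yields unitarity and energy conservation; then $S_U(|e_a\rangle\langle e_b|)$ is expanded over the Gibbs-weighted eigenbasis of $H_B$, and orthogonality of the $V_j$ kills every cross term except those landing in $V_j\times V_j$. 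I verified your formulas: the $(1,1)$-entry of the Choi matrix works out to $1-\lambda e^{-\Delta E/T}$ using $\operatorname{tr}(A_j^*A_j)=\alpha_j-\operatorname{tr}(C_j^*C_j)$, the $(3,3)$-entry to $\lambda$ directly from $\operatorname{tr}(B_{j+1}B_{j+1}^*)$, and the $(1,4)$-entry to $c$ with the boundary terms $\operatorname{tr}(U_0D_1^*)$ and $\operatorname{tr}(A_{m-1}U_m^*)$ appearing exactly as stated. One small matter worth flagging (it is already latent in the paper's own statement): for the $W_j$ to be eigenspaces of $H_\mathsf{tot}$ one needs the system gap to equal $\Delta E$; the lemma's ``$\operatorname{diag}(0,1)$'' should be read as the system normalized so that its gap matches $\Delta E$, which is how the lemma is applied in the proof of Theorem~\ref{thm_equiv_qubit} (and is consistent via Lemma~\ref{lemma_3}(i)). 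You adopt this interpretation implicitly, which is fine, but it would be worth one sentence to make it explicit.
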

With this we are ready to prove Theorem \ref{thm_equiv_qubit}:
\begin{proof}
(i): First let us review how \'Cwikli\'nski et al.~showed (in Supplementary Note 4, Section IV in \cite{Cwiklinski15}) that $
\mathsf{EnTO}(H_S,T)=\overline{\mathsf{TO}(H_S,T)}$ for all non-degenerate Hamiltonians\footnote{
The non-degenerate case, that is, $H_S=\mathbbm1_2$ is a direct consequence of the fact that in two dimensions the set of all unital quantum maps ($\mathsf{EnTO}(\mathbbm1,T)$) equals the convex hull of all unitary channels ($\operatorname{conv}{\overline{\mathsf{TO}(\mathbbm1,T)}}=\overline{\mathsf{TO}(\mathbbm1,T)}$, Proposition \ref{prop_1}), cf.~\cite{LS93}.
}${}^\text{,}$\cite{LS93}
in two dimensions.
This will allow us to highlight how 
one gets around using (highly) degenerate bath Hamiltonians (statement (ii) \& (iii) of this theorem).

By Lemma \ref{lemma_3} w.l.o.g.~$H_S=\operatorname{diag}(0,\Delta E)$ with $\Delta E>0$. Given $T\in(0,\infty)$ (we treat $T=\infty$ separately), $\lambda\in[0,1]$ what they do is construct a family $(S_m^\mu)_{m\in\mathbb N,\mu\in(1,e^{1/T})\cap\mathbb Q}\in\mathsf{TO}(H_S,T)$ such that
$$
\big(\lim_{\mu\to(e^{\Delta E/T})^-}\lim_{m\to\infty} S_m^\mu\big)(A)=\begin{pmatrix}a_{11}(1-\lambda e^{-\Delta E/T})+\lambda a_{22} &\sqrt{(1-\lambda)(1-\lambda e^{-\Delta E/T})}a_{12}\\
\sqrt{(1-\lambda)(1-\lambda e^{-\Delta E/T})}a_{21}&\lambda e^{-\Delta E/T}a_{11}+(1-\lambda)a_{22} \end{pmatrix}
$$
for all $A\in\mathbb C^{2\times 2}$. What this means is that -- together with the fact that the channel which only applies a phase to the off-diagonals is in $\mathsf{TO}$ ($m=1$ in the definition) -- the extreme points of $\mathsf{EnTO}$ (i.e.~the boundary in Fig.~\ref{fig1} without the inner area of the circle at the bottom) are in $\overline{\mathsf{TO}}$. From this one can deduce that the two sets have to coincide: either one uses that $\mathsf{EnTO}$, $\overline{\mathsf{TO}}$ are convex and compact, so
$$
\mathsf{EnTO}=\operatorname{conv}\big(\mathsf{ext}(\mathsf{EnTO})\big)\subseteq\operatorname{conv}\big(\mathsf{ext}(\overline{\mathsf{TO}})\big)=\overline{\mathsf{TO}}\subseteq\mathsf{EnTO}
$$
by Minkowski's theorem (Thm.~5.10 in \cite{Brondsted83}, where $\mathsf{ext}$ is the set of extreme points of a convex set), or one can show that any dephasing channel
\begin{equation}\label{eq:map_deph}
A\mapsto\begin{pmatrix} a_{11}&\overline{\gamma} a_{12}\\\gamma a_{21}&a_{22} \end{pmatrix}
\end{equation}
for $\gamma\in\mathbb C$, $|\gamma|\leq 1$ is in $\mathsf{TO}$ because then every thermal operation can be written as a composition of an extreme point of $\mathsf{EnTO}$ and a dephasing channel: simply choose $U\in\mathsf U(2)$ such that $\operatorname{tr}(U)=2\gamma$ because then $\Phi_{T,2}(\mathbbm1_2,\mathbbm1_2\oplus U)$ is in $\mathsf{TO}(H_S,T)$ for all $T\in(0,\infty]$. Then its action precisely given by \eqref{eq:map_deph}, cf.~also Chapter 8.3.6 in Nielsen \& Chuang \cite{NC10}. 

Now the construction of the maps $S_m^\mu$ goes as follows: Given $T\in(0,\infty)$, $\mu\in(1,e^{\Delta E/T})\cap\mathbb Q$, and $m\in\mathbb N$ define the following:
\begin{itemize}
\item $\alpha_0=\alpha_0(m,\mu)\in\mathbb N$ is the smallest integer such that $\alpha_0\mu^{m-1}\in\mathbb N$.
The only role of $\alpha_0$ is to ensure that the ratio of the size of consecutive blocks which make up the unitary matrix equals $\mu$, thus approximating $e^{\Delta E/T}$. Indeed $\alpha_0$ will not appear in the explicit action of $S_m^\mu$
\item $H_{B,m}:=\bigoplus_{j=0}^{m-1}j\Delta E\cdot\mathbbm1_{\alpha_0\mu^j}$
\item $D_1:=\mathbbm1_{\alpha_0}$ and, recursively, $A_j:=D_j\oplus\mathbbm1_{ \alpha_0\mu^{j-1}(\mu-1)}$ for all $j=1,\ldots,m-1$ as well as $D_j:=\sqrt{\frac{1-\lambda}{1-\frac{\lambda}{\mu}}} A_{j-1}$ for all $j=2,\ldots,m-1$
\end{itemize}
Because $\|A_j\|_\infty,\|D_j\|_\infty\leq 1$ (where $\|\cdot\|_\infty$ is the usual operator norm, that is, the largest singular value) it is easy to see that for all $j=1,\ldots,m-1$ one can choose $B_j,C_j$ such that

$$
U_j:=\begin{pmatrix} A_j&B_j\\C_j&D_j \end{pmatrix}
$$
is unitary, i.e.~$U_j\in\mathsf U( \alpha_0\mu^{j-1}(\mu+1))$. With this one defines
$$
U_m^\mu:=\begin{pmatrix} \mathbbm1_{\alpha_0}&0&0&0\\0&\bigoplus_{j=1}^{m-1}A_j&\bigoplus_{j=1}^{m-1}B_j&0\\0&\bigoplus_{j=1}^{m-1}C_j&\bigoplus_{j=1}^{m-1}D_j&0\\0&0&0&\mathbbm1_{\alpha_0\mu^{m-1}}\end{pmatrix}
$$
and $S_m^\mu:=\Phi_{T,(\alpha_0\sum_{j=0}^{m-1}\mu^j) }(H_{B,m},U_m^\mu)$. All one has to do now is compute the limit as stated above which using the representation $\Psi_T$ from Section \ref{sec_ento} comes out to be
\begin{align*}
\lim_{m\to\infty} \Psi_T(S_m^\mu)
=\begin{pmatrix}
 \frac{\lambda\mu (\mu-1)e^{-\Delta E/T}}{\mu-\lambda-\mu e^{-\Delta E/T}(1-\lambda)} \\
\sqrt{\mu (1-\lambda)(\mu-\lambda) }e^{-\Delta E/T}+( 1-\mu e^{-\Delta E/T})\big( 1+\frac{\sqrt{\mu(1-\lambda)(\mu-\lambda)}\lambda e^{-\Delta E/T}}{\mu-\lambda-\mu e^{-\Delta E/T}(1-\lambda)} \big)
\end{pmatrix}
\end{align*}
so, as claimed,
$$
\lim_{\mu\to(e^{\Delta E/T})^-}\lim_{m\to\infty} \Psi_T(S_m^\mu)
 =\begin{pmatrix} \lambda\\\sqrt{(1-\lambda)(1-\lambda e^{- \Delta E /T})} \end{pmatrix}\,.
$$
A particularly useful identity for verifying this is $\frac{\mu(1-\gamma^2)}{\mu-\gamma^2}=\lambda$ where $\gamma:=\sqrt{\frac{1-\lambda}{1-\frac{\lambda}{\mu}}}$.

This construction breaks down once $T$ is infinite for two reasons: first, the interval $(1,e^{\Delta E/T})$ from which we pick the rational approximation $\mu$ becomes empty and, more importantly, even if we just set $\mu=1$, then $U_m^\mu=U_m^1=\mathbbm1$ for all $m$; thus the corresponding thermal operation becomes trivial.
This is why we have to treat the case $T=\infty$ separately. Indeed given any $\lambda\in[0,1]$, $\phi\in[0,2\pi)$ choose $H_{B,m}:=\operatorname{diag}(j\Delta E)_{j=0}^{m-1}$ and
\begin{equation}\label{eq:T_infty_U}
U_m^\phi:=\begin{pmatrix}
1&0&0&0\\
0& \sqrt{1-\lambda}\mathbbm1_{m-1} & \sqrt{\lambda}\mathbbm1_{m-1} &0\\
0& -\sqrt{\lambda}e^{-i\phi}\mathbbm1_{m-1} & \sqrt{1-\lambda}e^{-i\phi}\mathbbm1_{m-1} &0\\
0&0&0&1
\end{pmatrix}\,.
\end{equation}
Obviously $U_m^\phi$ is energy-preserving w.r.t.~$(H_S,H_{B,m})$ for all $m\in\mathbb N$, and
\begin{equation}\label{eq:T_infty_extreme_points}
\lim_{m\to\infty}\Psi_T( \Phi_{1,m}(\mathbbm1,U_m^\phi) )=\lim_{m\to\infty}\begin{pmatrix}
\frac{\lambda(m-1)}{m}\\
(1-\lambda)e^{i\phi}+\frac{\sqrt{1-\lambda}(1 +e^{i\phi}( 1-2\sqrt{1-\lambda} ) )}{m}
\end{pmatrix}=\begin{pmatrix}
\lambda\\(1-\lambda)e^{i\phi}
\end{pmatrix}\,.
\end{equation}
Thus $\Psi_T^{-1}(\lambda,(1-\lambda)e^{i\phi})\in\overline{ \mathsf{TO}(H_S,\infty) }$ for all $\phi\in[0,2\pi)$ as desired.

(ii): We only have to prove \eqref{eq:TO_qubit_set} because then convexity of $\mathsf{TO}(H_S,T)$ follows directly. First let us see that $\mathsf{TO}(H_S,T)$ is a subset of \eqref{eq:TO_qubit_set}. For this we present a slight modification of the proof of Thm.~1 from \cite{Ding19}: The idea is to find a family of subsets $(\mathcal S_m)_{m}$ of
$\mathsf{TO}(H_S,T)$ such that in the limit $m\to\infty$ their convex hull
$(\mathsf{conv}(\mathcal S_m))_{m}$ exhausts the cone of enhanced 
thermal operations from Figure \ref{fig1}.
The exact form of $\mathcal S_m$ will let us conclude that for every $S\in\mathsf{EnTO}(H_S,T)$ \textit{not} on the boundary there exists $m$ and $S_m\in\mathcal S_m$ such that $S$ is the composition of $S_m$ and a partial dephasing map. Hence $S\in\mathsf{TO}(H_S,T)$ as it is the composition of two thermal operations (Proposition \ref{prop_1} (i)).

Now for the details. Given $T\in(0,\infty)$ (a note on the case $T=\infty$ later), $\lambda\in[0,1]$, $m\in\mathbb N\setminus\{1\}$, $\phi\in[-\pi,\pi)$ define a thermal operation as follows: $H_{B,m}:=\operatorname{diag}(0,\Delta E,\ldots,(m-1)\Delta E)\in {i}\mathfrak u(m)$ is the bath Hamiltonian, and the energy-preserving unitary $U_m^{\lambda,\phi}\in\mathsf U(2m)$ is given by
$$
\begin{pmatrix}
1&0&0&0\\
0& \operatorname{diag}\big( (\frac{1-\lambda}{1-\lambda e^{-\Delta E/T}} )^{j/2} \big)_{j=1}^{m-1} & \operatorname{diag}\big(i(1-(\frac{1-\lambda}{1-\lambda e^{-\Delta E/T}})^j)^{1/2}\big)_{j=1}^{m-1} &0\\
0& \operatorname{diag}\big( ie^{-i\phi}(1-(\frac{1-\lambda}{1-\lambda e^{-\Delta E/T}})^j)^{1/2}\big)_{j=1}^{m-1} & \operatorname{diag}\big(e^{-i\phi} (\frac{1-\lambda}{1-\lambda e^{-\Delta E/T}} )^{j/2} \big)_{j=1}^{m-1} &0\\
0&0&0&e^{-i\phi}\\
\end{pmatrix}\,.
$$
Be aware that a variation of this unitary has also appeared in Appendix B of \cite{Lostaglio18} (cf.~also references therein). However the unitary matrix which Lostaglio et al.~use leads to a cone that -- while containing all classical channels $\{\Psi_T^{-1}(\lambda,0):\lambda\in[0,1)\}$ (which was their goal) -- is always a strict subset of $\mathsf{TO}(H_S,T)$, even in the closure.

Now let us collect all maps with the same $\phi$ via
$
\mathcal S_{m,\phi}:=\big\{\Phi_{T,m}(H_{B,m},U_m^{\lambda,\phi}):\lambda\in[0,1]\big\}
$
so the set we are looking for which exhausts $\mathsf{EnTO}(H_S,T)$ in the convex hull as $m$ goes to infinity is $\mathcal S_m:=\bigcup_{\phi\in[-\pi,\pi)}\mathcal S_{m,\phi}$. 

\textit{Claim:} For any $m\in\mathbb N$, $\phi\in[-\pi,\pi)$, applying $\Psi_T$ to the set $\mathcal S_{m,\phi}$ yields a strictly convex curve with end points
\begin{equation}\label{eq:end_points_curve}
\begin{pmatrix}
0\\e^{i\phi}
\end{pmatrix}\text{ (for }\lambda=0)\qquad\text{ and }\qquad
\begin{pmatrix}
\frac{1-( e^{-\Delta E/T} )^{m-1}}{1-( e^{-\Delta E/T} )^{m}}\\0
\end{pmatrix}\text{ (for }\lambda=1)\,.
\end{equation}
and $\Psi_T(\mathcal S_{m,\phi})$ converges to $\{ (\lambda,e^{i\phi}\sqrt{ (1-\lambda)(1-\lambda e^{- \Delta E /T}) } ) :\lambda\in[0,1]\}$ in the Hausdorff metric.
This follows from a direct computation using Lemma \ref{lemma_degenerate_spin_TO}:
\begin{align*}
\Psi_T\big(\Phi_{T,m}(H_{B,m},U_m^{\lambda,\phi})\big)=\begin{pmatrix}
\frac{1-(e^{-\Delta E/T})^{m-1}}{1-(e^{-\Delta E/T})^{m}}-\gamma^2\frac{1-e^{-\Delta E/T}}{1-\gamma^2 e^{-\Delta E/T}}\frac{1-(\gamma^2 e^{-\Delta E/T})^{m-1}}{1-(e^{-\Delta E/T})^{m}}
\\
e^{i\phi}(1-e^{-\Delta E/T})\big( \frac{\gamma}{1-\gamma^2 e^{-\Delta E/T}}\frac{1-(\gamma^2 e^{-\Delta E/T})^{m-1}}{1-(e^{-\Delta E/T})^{m}}-\frac{(\gamma e^{-\Delta E/T})^{m-1}}{1-(e^{-\Delta E/T})^{m}} \big)
\end{pmatrix}
\end{align*}
where $\gamma:=\sqrt{\frac{1-\lambda}{1-\lambda e^{-\Delta E/T}}}$. Note that $\gamma$ is strictly monotonically decreasing in $\lambda$ and $\gamma|_{\lambda=0}=1$, $\gamma|_{\lambda=1}=0$; hence $\gamma$ is bijective on $[0,1]$ as a function of $\lambda$. In particular setting $\lambda\in\{0,1\}$ ($\gamma\in\{0,1\}$) reproduces \eqref{eq:end_points_curve}. Taking the limit $m\to\infty$ yields
\begin{align*}
\lim_{m\to\infty}\Psi_T\big(\Phi_{T,m}(H_{B,m},U_m^{\lambda,\phi})\big)&=\begin{pmatrix} 1-\gamma^2\frac{1-e^{-\Delta E/T}}{1-\gamma^2e^{-\Delta E/T}}\\e^{-i\phi}\frac{1-e^{-\Delta E/T}}{1-\gamma^2e^{-\Delta E/T}}\gamma \end{pmatrix}\\
&=\begin{pmatrix}
1-\gamma^2\frac{1-\lambda}{\gamma^2}\\e^{i\phi}\frac{1-\lambda}{\gamma^2}\gamma
\end{pmatrix}=\begin{pmatrix}
\lambda\\
e^{i\phi}\sqrt{(1-\lambda)(1-\lambda e^{-\Delta E/T})}
\end{pmatrix}
\end{align*}
as claimed. Here we used the readily verified identity $\frac{1-e^{-\Delta E/T}}{1-\gamma^2e^{-\Delta E/T}}=\frac{1-\lambda}{\gamma^2}$. Note that the case $T=\infty$ is proven analogously once the unitary $U_m^{\lambda,\phi}$ is given by \eqref{eq:T_infty_U} (as the computation in \eqref{eq:T_infty_extreme_points} shows). 

Now let $\lambda\in(0,1)$, $r\in[0,\sqrt{(1-\lambda)(1-\lambda e^{-\Delta E/T})})$, and $\phi\in[-\pi,\pi)$ be given, that is, $(\lambda,re^{i\phi})$ does not lie on the relative boundary of $\Psi_T(\mathsf{EnTO}(H_S,T))$ (we can exclude the case $\lambda=0$ as we already know that all partial dephasings are elements of $\mathsf{TO}$, cf.~\eqref{eq:map_deph}). Because $\operatorname{conv}(\mathcal S_m)$ is strictly monotonically increasing in $m$ and because $\lim_{m\to\infty}\delta(\operatorname{conv}(\mathcal S_m),\mathsf{EnTO}(H_S,T))=0$ there exists $m\in\mathbb N$ such that $\Psi_T^{-1}(\lambda,re^{i\phi})\in \operatorname{conv}(\mathcal S_m)$. But by construction of $\mathcal S_m$ this means that $\Psi_T^{-1}(\lambda,r'e^{i\phi})\in\mathcal S_m$ for some $r'\geq r$ so
\begin{align*}
\Psi_T^{-1}\begin{pmatrix}
\lambda\\re^{i\phi}
\end{pmatrix}=\Psi_T^{-1}\begin{pmatrix}
0\\\frac{r}{r'}
\end{pmatrix}\circ\Psi_T^{-1}\begin{pmatrix}
\lambda\\r'e^{i\phi}
\end{pmatrix}&\in\mathsf{TO}(H_S,T)\circ\mathcal S_m\\
&\subseteq\mathsf{TO}(H_S,T)\circ\mathsf{TO}(H_S,T)=\mathsf{TO}(H_S,T)\,.
\end{align*}
Here we used again that all partial dephasings are thermal operations (\eqref{eq:map_deph}, as $\frac{r}{r'}<1$).

Conversely, to see that \eqref{eq:TO_qubit_set} is a subset of $\mathsf{TO}(H_S,T)$ 
we have to show that
\begin{equation}\label{eq:extreme_points_not_in_TO}
\Psi_T^{-1}\begin{pmatrix}
\lambda\\e^{i\phi}\sqrt{(1-\lambda)(1-\lambda e^{-\Delta E/T})}
\end{pmatrix}\not\in \mathsf{TO}(H_S,T)
\end{equation}
for all $\lambda\in(0,1]$, $\phi\in[-\pi,\pi)$, $T\in(0,\infty)$ (proving \eqref{eq:extreme_points_not_in_TO} for $T=\infty$ is done analogously).

Assume to the contrary that 
\eqref{eq:extreme_points_not_in_TO} is false. Hence there exist
$m\in\mathbb N$ and $\alpha_0,\ldots,\alpha_{m-1}\in\mathbb N$ such that $\Phi_{T,m}
(H_B,U)=\Psi_T^{-1}(\lambda,e^{i\phi}\sqrt{(1-\lambda)(1-\lambda e^{-\Delta E/T})})$ for 
some energy-preserving unitary $U\in\mathsf U(2( \sum_{j=0}^{m-1}\alpha_j ))$ where $H_B:=\bigoplus_{j=0}^{m-1}j\Delta E\cdot\mathbbm1_{\alpha_j}$. The 
reason for choosing resonant $H_B$ is that $\Psi_T^{-1}(\lambda,e^{i\phi}\sqrt{(1-\lambda)(1-\lambda 
e^{-\Delta E/T})})$ is an extreme point of $\mathsf{EnTO}(H_S,T)$, and the proof of 
Proposition \ref{prop_1} (iv) (cf.~\eqref{eq:proof_lemma_iv_decomp}) shows that any thermal operation with a bath Hamiltonian 
which is \textit{not} of this form can be written as a convex combination of two 
thermal operations with bath Hamiltonians of the above form. But this contradicts the extreme point 
property so $H_B$ has to have resonant spectrum w.r.t.~$H_S$.

Due to $H_B$ being of spin form we may apply Lemma \ref{lemma_degenerate_spin_TO} to get an explicit form of $U$ and, more importantly, $\Psi_T(\Phi_{T,m}
(H_B,U))$. Define an inner product $\langle\cdot,\cdot\rangle_T$ on $\mathbb C^{\alpha_0\times\alpha_0}\times\ldots\times\mathbb C^{\alpha_{m-1}\times\alpha_{m-1}}$ via
\begin{align*}
\Big(\begin{pmatrix}
X_1\\\vdots\\X_m
\end{pmatrix},\begin{pmatrix}
Y_1\\\vdots\\Y_m
\end{pmatrix}\Big)&\mapsto \sum_{j=0}^{m-1}\langle X_j,Y_j\rangle_\mathsf{HS}\,e^{-j\Delta E/T}
\end{align*}
where $\langle A,B\rangle_\mathsf{HS}=\operatorname{tr}(A^*B)$ is the Hilbert-Schmidt inner product on complex square matrices of any dimension. Note that $\langle\cdot,\cdot\rangle_T$ is indeed an inner product because it is a sum of inner products with positive weights. This lets us rewrite $c$ from Lemma \ref{lemma_degenerate_spin_TO} as
$$
c=\frac{\langle (
D_1,\cdots,D_{m-1},U_m
),(
U_0,A_1,\cdots,A_{m-1}
) \rangle_T}{\sum_{j=0}^{m-1}\alpha_je^{-j\Delta E/T}}\,.
$$
In particular we can apply the Cauchy-Schwarz inequality to obtain
\begin{align*}
|c|&\leq \frac{\|(D_1,\ldots,D_{m-1},U_m)\|_T\|(U_0,A_1,\ldots,A_{m-1})\|_T}{\sum_{j=0}^{m-1}\alpha_je^{-j\Delta E/T}}\\
&= \frac{\sqrt{( \sum_{j=0}^{m-2}\|D_{j+1}\|_\mathsf{HS}^2e^{-j\Delta E/T}+\|U_m\|_\mathsf{HS}^2 e^{-\frac{(m-1)\Delta E}T} )(\|U_0\|_\mathsf{HS}^2+\sum_{j=1}^{m-1}\|A_j\|_\mathsf{HS}^2e^{-j\Delta E/T} )}}{\sum_{j=0}^{m-1}\alpha_je^{-j\Delta E/T}}\,.
\end{align*}
Now unitarity of $U$ comes into play: On the one hand $U_0,U_m$ are itself unitary so $\|U_0\|_\mathsf{HS}^2=\alpha_0$, $\|U_{m-1}\|_\mathsf{HS}^2=\alpha_{m-1}$. On the other hand unitarity of the blocks $U$ is made up of (i.e.~\eqref{eq:U_j_unitary} being unitary) implies $A_jA_j^*+B_jB_j^*=\mathbbm1_{\alpha_j}$ and $B_j^*B_j+D_j^*D_j=\mathbbm1_{\alpha_{j-1}}$ for all $j=1,\ldots,m-1$. Taking the trace yields
\begin{align*}
\|D_{j+1}\|_\mathsf{HS}^2=\alpha_j-\|B_{j+1}\|_\mathsf{HS}^2\qquad&\text{ for all }j=0,\ldots,m-2\,,\text{ and}\\
\|A_{j}\|_\mathsf{HS}^2=\alpha_j-\|B_{j}\|_\mathsf{HS}^2\qquad&\text{ for all }j=1,\ldots,m-1\,.
\end{align*}
With this the upper bound we found for $|c|$ is equal to
$$
\sqrt{\Big( 1- \frac{\sum_{j=0}^{m-2}\operatorname{tr}(B_{j+1}B_{j+1}^*)e^{-j\Delta E/T}}{\sum_{j=0}^{m-1}\alpha_j e^{-j\Delta E/T}} \Big)\Big( 1- \frac{\sum_{j=0}^{m-2}\operatorname{tr}(B_{j+1}B_{j+1}^*)e^{-j\Delta E/T}}{\sum_{j=0}^{m-1}\alpha_j e^{-j\Delta E/T}} e^{-\Delta E/T} \Big)}\,,
$$
that is, $|c|\leq\sqrt{(1-\lambda)(1-\lambda e^{-\Delta E/T})}$ with equality if and only if there is equality in Cauchy-Schwarz because that was the only estimate we used in our calculation. But it is well known that equality in Cauchy-Schwarz is equivalent to the two arguments being a scalar multiple of each other. Hence there exists $\xi\in\mathbb C$ such that
\begin{equation}\label{eq:lin_dep_matrix_vector}
\begin{pmatrix}
D_1\\\cdots\\D_{m-1}\\U_m
\end{pmatrix}=\xi\begin{pmatrix}
U_0\\A_1\\\cdots\\A_{m-1}\end{pmatrix}\,.
\end{equation}
Due to unitarity $\|U_0\|_\infty,\|U_m\|_\infty=1$ as well as $\|D_j\|_\infty,\|A_j\|_\infty\leq 1$ for all $j=1,\ldots,m$. Therefore
$
|\xi|=\|\xi U_0\|_\infty=\|D_1\|_\infty\leq 1
$
and
$
1=\|U_m\|_\infty=|\xi|\|A_{m-1}\|_\infty\leq |\xi|
$
so $|\xi|=1$. But with this \eqref{eq:lin_dep_matrix_vector} forces all $B_j$ to vanish: first $D_1^*D_1=|\xi|^2U_0^*U_0=\mathbbm1$ so $B_1=0$ and thus $A_1^*A_1=\mathbbm1$ because \eqref{eq:U_j_unitary} is unitary. Then considering the second element of \eqref{eq:lin_dep_matrix_vector} implies $D_2^*D_2=|\xi|^2A_1^*A_1=\mathbbm1$ so $B_2=0$; repeating this argument inductively shows $B_j=0$ for all $j=1,\ldots,m-1$. But this is problematic because then $\lambda=0$ (again by Lemma \ref{lemma_degenerate_spin_TO}) which contradicts our assumption that $\lambda\in(0,1]$.\medskip

(iii): This result is a truncated version of (iv) (i.e.~of Thm.~1 from \cite{Ding19}) so the proof we present is inspired by the arguments of Ding et al.
We showed in (ii) that for qubits every $S\in\mathsf{TO}(H_S,T)$ is the composition of a thermal operation generated by $H_{B,m}:=\operatorname{diag}(0,\Delta E,\ldots,(m-1)\Delta E)$ for some $m\in\mathbb N$ and a (partial) dephasing. So if $T>\frac{\Delta E}{\ln 2}$ it suffices to show that each partial dephasing can be implemented using some $H_{B,m}$.

For this note that given $m\in\mathbb N$ and arbitrary phases $\phi_1,\ldots,\phi_{m}\in[-\pi,\pi)$ the unitary $U:=\mathbbm1_m\oplus U_\phi$ with $U_\phi:=\operatorname{diag}(e^{i\phi_1},\ldots,e^{i\phi_{m}})$ is energy-preserving because it is diagonal. A straightforward computation yields
\begin{equation*}
\begin{split}
\Phi_{T,m}(H_{B,m},U_\phi)(A)=\begin{pmatrix}
a_{11}& a_{12}\big(\frac{\operatorname{tr}(e^{-H_{B,m}/T}U_\phi)}{\operatorname{tr}(e^{-H_{B,m}/T})}\big)^*\\
a_{21}\frac{\operatorname{tr}(e^{-H_{B,m}/T}U_\phi)}{\operatorname{tr}(e^{-H_{B,m}/T})}&a_{22}
\end{pmatrix}
\end{split}
\end{equation*}
for all $A\in\mathbb C^{2\times 2}$. Thus all we have to see is that for $m$ ``large enough'' the map $(\phi_1,\ldots,\phi_{m})\mapsto \frac{\operatorname{tr}(e^{-H_{B,m}/T}U_\phi)}{\operatorname{tr}(e^{-H_{B,m}/T})}$ maps surjectively onto the closed unit disk. The key observation here is that given numbers $0<c_2<c_1$ the (pointwise) sum of a circle with radius $c_2$ to a circle with radius $c_1$ both centered around the origin (i.e.~$\{c_1e^{i\phi_1}+c_2e^{i\phi_2}:\phi_1,\phi_2\in[-\pi,\pi)\}$) is equal to the annulus $\{re^{i\phi}:c_1-c_2\leq r\leq c_1+c_2,\phi\in[-\pi,\pi)\}$ with inner radius $c_1-c_2$ and outer radius $c_1+c_2$. We visualize this fact in Figure \ref{fig_circles} which makes a proof superfluous.
\begin{figure}[!htb]
\centering
\includegraphics[width=0.45\textwidth]{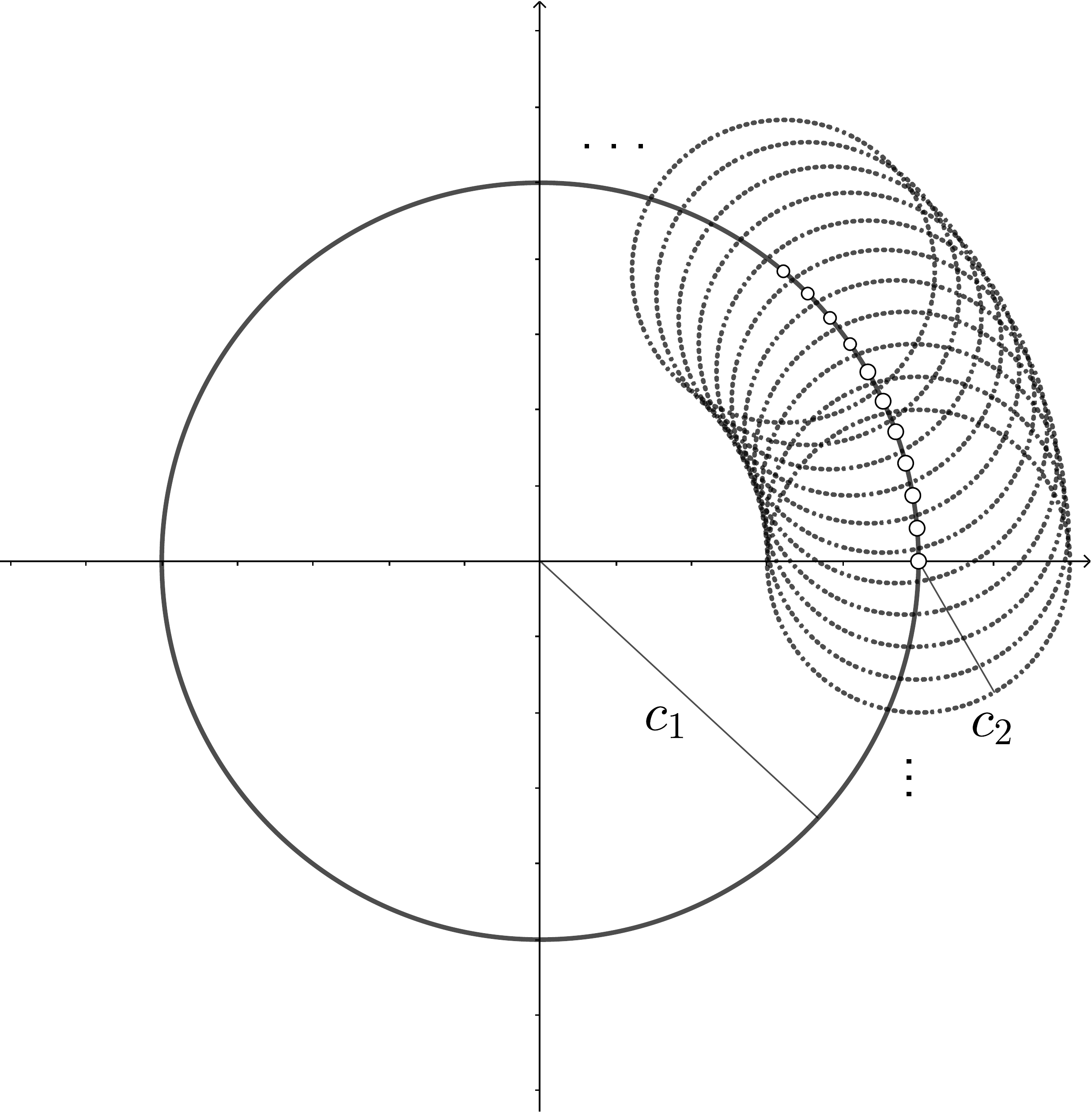}
\includegraphics[width=0.45\textwidth]{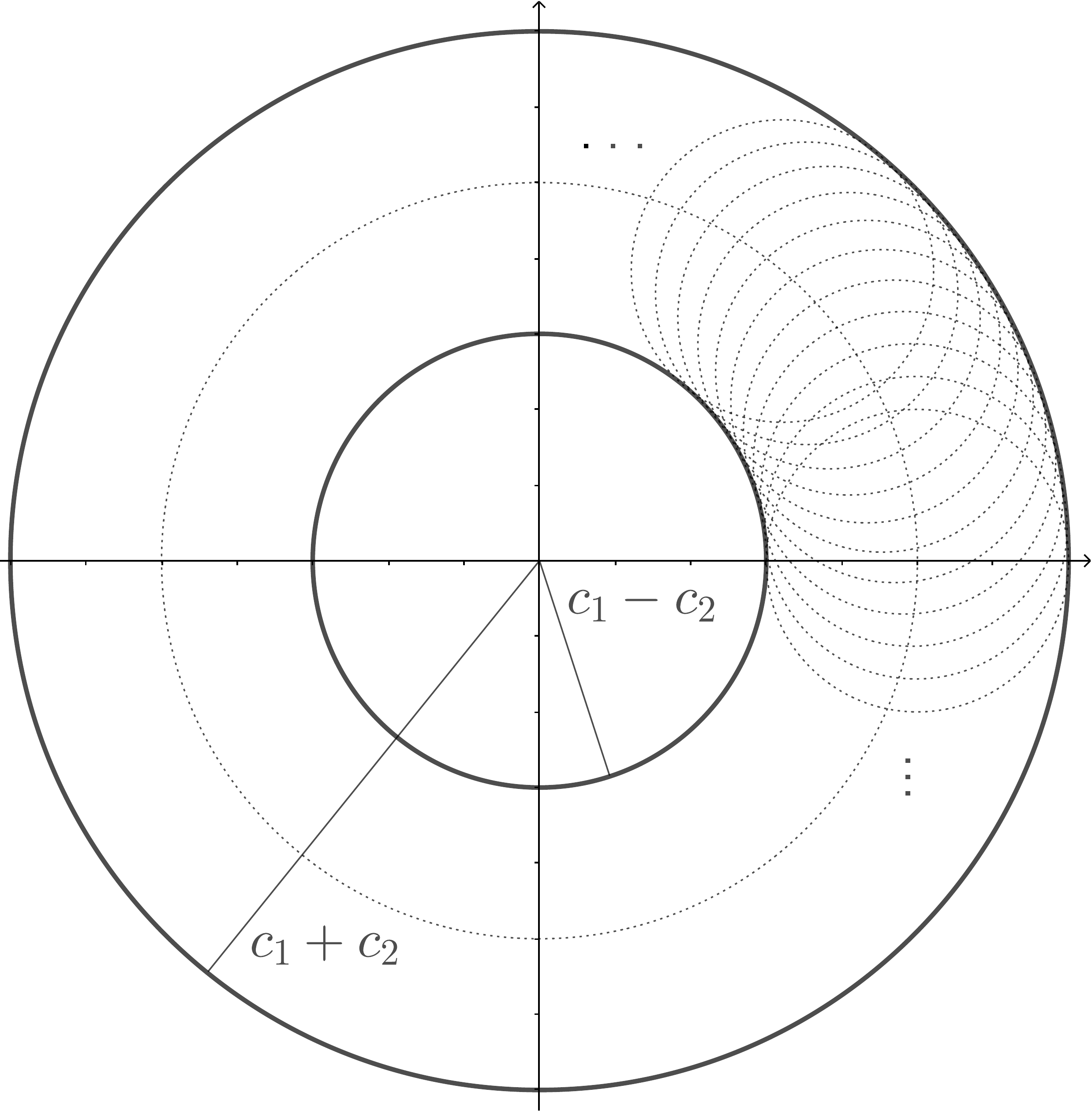}
\caption{
Visual proof of the equality of $\{c_1e^{i\phi_1}+c_2e^{i\phi_2}:\phi_1,\phi_2\in[-\pi,\pi)\}$ and $\{re^{i\phi}:c_1-c_2\leq r\leq c_1+c_2,\phi\in[-\pi,\pi)\}$ for all $0<c_2<c_1$. Left: Sketch of how each individual set $c_1e^{i\phi_1}+\{c_2e^{i\phi_2}:\phi_2\in[-\pi,\pi)\}$ looks like. Right: The union of these individual sets exhausts the full annulus.
}
\label{fig_circles}
\end{figure}

This implies that the expression
\begin{align*}
\frac{\operatorname{tr}(e^{-H_{B,m}/T}U_\phi)}{\operatorname{tr}(e^{-H_{B,m}/T})}
=\frac{\sum_{j=0}^{m-1}e^{-j\Delta E/T}e^{i\phi_j}}{\sum_{k=0}^{m-1}e^{-k\Delta E/T}}=\sum_{j=0}^{m-1}\frac{(1-e^{-\Delta E/T})e^{-j\Delta E/T}}{1-e^{-m\Delta E/T}}e^{i\phi_j}
\end{align*}
can take any value in the annulus $\{re^{i\phi}:\max\{r_m,0\}\leq r\leq 1,\phi\in[-\pi,\pi)\}$ where
$$
r_m=\frac{1-e^{-\Delta E/T}}{1-e^{-m\Delta E/T}}-\sum_{j=1}^{m-1}\frac{(1-e^{-\Delta E/T})e^{-j\Delta E/T}}{1-e^{-m\Delta E/T}}=2\frac{1-e^{-\Delta E/T}}{1-e^{-m\Delta E/T}}-1\,.
$$
But $\lim_{m\to\infty}r_m=1-2e^{-\Delta E/T}$ which is smaller than zero if and only if $T>\frac{\Delta E}{\ln 2}$; thus by assumption there exists $m\in\mathbb N$ such that $r_m<0$ so $(\phi_1,\ldots,\phi_{m})\mapsto \frac{\operatorname{tr}(e^{-H_{B,m}/T}U_\phi)}{\operatorname{tr}(e^{-H_{B,m}/T})}$ maps surjectively onto the closed unit disk. In other words for this $m$ all partial dephasings can be implemented via relative phases which is what we had to show.

Now if $T\leq\frac{\Delta E}{\ln 2}$ we will prove that the two sets in question do not coincide by showing that full dephasing is cannot be implemented using $H_{B,m}$. Indeed given arbitrary $m\in\mathbb N$ and any $U\in\mathsf U(2m)$ such that $[U,H_S\otimes\mathbbm1_B+\mathbbm1_S\otimes H_{B,m}]=0$, partitioning
$$
U=\begin{pmatrix} U_{11}&U_{12}\\U_{21}&U_{22} \end{pmatrix}
$$
with $U_{11},U_{12},U_{21},U_{22}\in\mathbb C^{m\times m}$ leads to
$$
\Psi_T\big(\Phi_{T,m}(H_{B,m},U)\big)=\frac{1}{\operatorname{tr}(e^{-H_{B,m}/T})}\begin{pmatrix}
\operatorname{tr}(U_{12}^*U_{12}e^{-H_{B,m}/T})\\
\operatorname{tr}(U_{22}^*U_{11}e^{-H_{B,m}/T})
\end{pmatrix}
$$
as is verified by direct computation. We want this expression to be equal to $(0,0)^\top$. This means $\operatorname{tr}(U_{12}^*U_{12}e^{-H_{B,m}/T})=0$ which implies $U_{12}=0$: this is due to the fact that $(A,B)\mapsto\operatorname{tr}(A^*Be^{-H_{B,m}/T})$ is an inner product on $\mathbb C^{m\times m}$ because $e^{-H_{B,m}/T}$ is positive definite. Thus $A\mapsto\operatorname{tr}(A^*Ae^{-H_{B,m}/T})$ is a norm on $\mathbb C^{m\times m}$ so it takes the value zero if and only if the input is zero. But as $U$ is unitary $U_{12}=0$ implies $U_{21}=0$ so $U=U_{11}\oplus U_{22}$ for some $U_{11},U_{22}\in\mathsf U(m)$. Moreover $U$ being energy-conserving yields $[U_{11},H_B]=[U_{22},H_B]=0$, and as $H_B$ is non-degenerate by assumption $U_{11},U_{22}$ (and thus $U$) have to be diagonal. However for diagonal $U$ we already showed that full dephasing can be implemented if and only if $T>\frac{\Delta E}{\ln 2}$ which concludes this part of the proof.

Finally the statement regarding the closure. By our previous argument (cf.~Figure \ref{fig_circles}) regardless of the temperature at least some range of dephasing maps can be implemented (e.g., using diagonal unitaries $U$), that is, for all $T>0$ there exists $r_0<1$ such that for all $c\in\mathbb C$, $|c|\in[r_0,1]$
\begin{align*}
S_c:\mathbb C^{2\times 2}&\to\mathbb{C}^{2\times 2}\\
\begin{pmatrix}
a_{11}&a_{12}\\a_{21}&a_{22}
\end{pmatrix}&\mapsto\begin{pmatrix}
a_{11}&c^*a_{12}\\ca_{21}&a_{22}
\end{pmatrix}
\end{align*}
is a thermal operation with
bath-Hamiltonian $\operatorname{diag}(0,\Delta E,\ldots,(m-1)\Delta E)$ for some 
$m$. On the other hand we know that every qubit thermal 
operation $S$ is the composition of a thermal operation $S_m$ with associated 
$H_{B,m}:=\operatorname{diag}(0,\Delta E,\ldots,(m-1)\Delta E)$ for some 
$m\in\mathbb N$, and a partial dephasing $D_S$ (i.e.~$\Psi_T(D_S)=(0,c_S)$ for 
some $c_S\in[0,1]$). But applying any $S_c$, $|c|<1$ enough times approximates 
any degree of
dephasing, i.e.~$\lim_{k\to\infty}\Psi_T(S_{r_0}^k)=\lim_{k\to\infty}(0,r_0^k)=(0,0)$.
Thus there are two 
cases: If $c_S>0$ then there exists $k\in\mathbb N_0$ and $c\in[r_0,1]$ such that 
$D_S=S_{r_0}^k\circ S_c$. Therefore $S=S_{r_0}^k\circ S_c\circ S_m$ so $S$ can be 
implemented exactly using finitely many truncated single-mode bosonic baths.
However if $c_S=0$ then this can (only) be done approximately, 
i.e.~$\lim_{k\to\infty}\|S-S_{r_0}^k\circ S_m\|=0$. Therefore $\mathsf{TO}(H_S,T)$ is a 
subset of the closure of the semigroup of thermal operations with bath-Hamiltonian 
$H_{B,m}:=\operatorname{diag}(0,\Delta E,\ldots,(m-1)\Delta E)$ (which itself is a subset of $\overline{\mathsf{TO}(H_S,T)}$) 
meaning the two sets coincide in the closure.\medskip

(iv): This is Theorem 1.(2) in \cite{Ding19}. Because our proof of (iii) (the ``truncated version'') is similar to their proof we will omit the details, and we simply refer to Appendix 2 in their paper.
\end{proof}
\bibliography{../../../../../../control21vJan20.bib} 
\end{document}